\documentclass{article}

\usepackage{amsfonts}
\usepackage{amsmath}
\usepackage{amsthm}
\usepackage{amssymb}
\usepackage{verbatim}
\usepackage{xcolor}

\usepackage{graphicx} 
\usepackage{overpic}
\usepackage[final]{hyperref}

\newcommand{\comm}[1]{}
\newcommand{\ud}{\,\mathrm{d}}

\def\({\left(}
\def\){\right)}
\def\oli{\overline}

\def\raw{\rightarrow}

\def\no={\neq}

\def\E{{\mathbb E}}

\def\R{{\mathbb R}}

\def\DD{{\mathcal D}}

\def\GG{{\mathcal G}}

\def\OO{{\mathcal O}}
\def\PP{{\mathcal P}}

\def\SS{{\mathcal S}}

\def\XX{{\mathcal X}}

\def\al{\alpha}
\def\be{\beta}
\def\ga{\gamma}
\def\de{\delta}

\def\th{\theta}

\def\ka{\kappa}
\def\la{\lambda}

\def\si{\sigma}

\def\om{\omega}

\def\Ga{\Gamma}
\def\De{\Delta}

\def\La{\Lambda}

\newtheorem{Thm}{Theorem}[section]
\newtheorem{Prop}[Thm]{Proposition}
\newtheorem{Lem}[Thm]{Lemma}

\newtheorem{Cor}[Thm]{Corollary}
\newtheorem{Rem}[Thm]{Remark}
\newtheorem{Def}[Thm]{Definition}

\usepackage[
backend=biber,
sorting=none
]{biblatex}
\usepackage{graphicx}
\graphicspath{ {./images/} }

\title{Effect of antibiotic spectrum on the abundance of resistant bacteria in multispecies communities}

\date{}

\begin{document}

\maketitle

\begin{flushleft}

Magnus Aspenberg \textsuperscript{1},
Erik Andreas Martens \textsuperscript{1,2},
Kristofer Wollein Waldetoft \textsuperscript{3,4,*}
\\
\bigskip
\textbf{1} Centre for Mathematical Sciences, Lund University, Lund, Sweden. 
\\
\textbf{2}
Centre for Mathematical Modeling - Health and Disease, Institute for Mathematics and Physics, Dept. of Environment and Science, Roskilde University, Denmark
\\
\textbf{3} Department of Infectious Diseases, Uppsala University Hospital, Uppsala, Sweden.
\\
\textbf{4}  Faculty of Medicine, Department of Clinical Sciences, Division of Infection Medicine, Lund University, Lund, Sweden.
\bigskip

* kristofer.wollein.waldetoft@akademiska.se

\end{flushleft}

\vspace{5mm}

\section*{Keywords}
Antibiotic resistance, evolution, community ecology, microbiome, Lotka-Volterra, mathematical model.

\section*{Abstract}
\textbf{Antibiotic resistance is a major threat to global health. It emerges in multispecies microbial communities under antibiotic exposure. This makes antibiotic spectrum --- a drug's distribution of effects across species --- a potential key parameter in resistance management. However, we currently lack evolutionary theory for resistance dynamics in a multispecies setting. Analysing established community ecology theory, we develop a simple mathematical measure for how one taxon (strain or species) affects another taxon through all direct and indirect interactions in a complex interaction network. Using this, we derive the expected effects of different antibiotic spectra on the abundance of resistant taxa in microbial communities. This furthers our understanding of microbial evolutionary ecology in multispecies communities, and provides a formal theoretical basis for empirical work on optimal antibiotic choice.}

\section{Introduction}
The evolution of antibiotic resistance is among the major threats to human health in our time \cite{WHO_AMR}. It is driven by antibiotics \cite{LID_Popgen}, and their prudent use is therefore a top priority \cite{ESCMID_Stew,WHO_Stew}. This has two prongs. First, and well known, is the effort to limit the amount of antibiotics used \cite{EU_Amount}. Second, and perhaps less understood, is the matter of antibiotic choice \cite{WHO_Choice}. --- Given the need for treatment, which antibiotics, or more precisely, what properties of antibiotics, minimize the evolution of resistance?

Several properties of antibiotics may influence the degree to which they drive the evolution of resistance. These include the molecular target of the drug, which may affect the likelihood of resistance mutations \cite{NatRevTarget}, and the pharmacokinetics, which determine the distribution of the drug throughout the patient's body and thus the concentrations that bacteria experience at different anatomical sites \cite{Conc_Sites}. One property, however, stands out both in clinical decision making and in terms of microbial evolutionary ecology, and this is the property of \textit{spectrum}.

The concept of antimicrobial spectrum is central to infection medicine and antibiotic choice, yet it is surprisingly ill-defined. In general, it refers to the set of bacteria that are affected by the drug strongly enough that it is an adequate treatment for an infection with that bacterium. The concept of spectrum breadth, in turn, refers to the size of this set.

The current drive for prudent antibiotic use includes a strong preference for narrow antimicrobial spectrum \cite{EU_Prudent,WHO_Spectrum}, that is, antibiotics that have a strong effect only on a limited set of bacteria. There are several potential rationales for this. One is that broad spectrum drugs are crucial in emergency situations, where the culprit pathogen is unknown, and they should therefore be preserved for this context. Another is that, acting on a smaller set of bacteria, narrower spectrum antibiotics impose selection only on this smaller set, and may therefore drive resistance in fewer bacterial species.

However, there is yet another aspect to spectrum in the evolution of resistance that is only apparent when viewed through the lens of community ecology. Most antibiotic exposure experienced by common pathogens is bystander exposure \cite{Tedijanto}, where the pathogen is part of a commensal microbial community and not the target of treatment. In such communities, bacteria of different species interact. They may compete, cooperate or show combinations of the two, where one species gains and the other loses from the interaction, and these interactions may be stronger or weaker for different species pairs \cite{Palm_Fost}. The absolute fitness, which is the proper fitness measure for resistance emergence \cite{Day_Read}, of a resistant pathogen strain, is then affected not only by selection within the species, but by the ecological interactions among all interacting species in the microbial community. The spectrum of the drug describes its distribution of effects across species, and should therefore be a key determinant of the resistant pathogen's absolute fitness. In other words, spectrum may determine the extent to which an antibiotic drives resistance, via interspecies interactions.

In the present investigation, \textit{we use community ecology theory to assess the effect of antimicrobial spectrum on the absolute fitness of a focal resistant taxon in a microbial community under bystander exposure.}

To this end, we study the Lotka-Volterra community ecology model, modified to incorporate antibiotic effects. The results are presented in two steps. First, we give an informal account that explains the study and its main result without mathematical formalism. Second, we present the mathematical model and analysis. Then we discuss the findings and their interpretation.

\section{An informal account of the analysis and results}

\subsection{The analogy between spectrum and dose}

Here we use an analogy with drug dose to build intuition. We choose dose because its effect on resistance evolution has been thoroughly discussed, as well as assessed, both theoretically and empirically (see reference \cite{Day_Dose} for a synthesis).

\begin{figure}[htp!]
\centering
\hspace{15em}
\begin{overpic}[height=0.25\textheight]{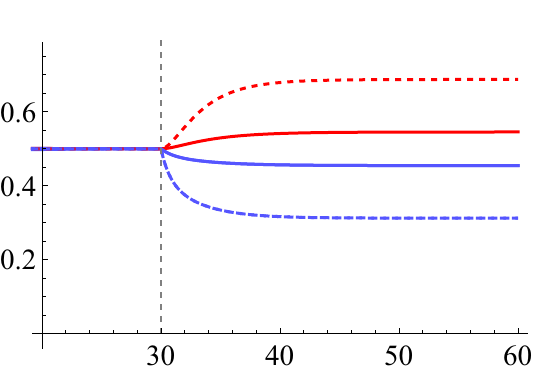}
    \put(-75,15){\includegraphics[height=0.17\textheight]{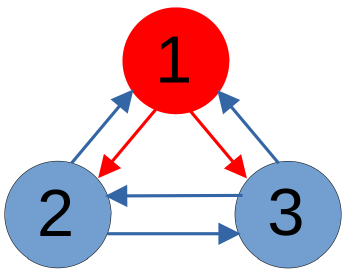}}
    \put(-80,64){a)}
    \put(-10,64){b)} 
    \put(45,-5){Time $t$}
    \put(32,12){Antibiotic ($t>t_0$)}
    \put(-7,20){\rotatebox{90}{Abundance $x_j$}}
\end{overpic}
\caption{Competitive release for three strains/species.
a) Interaction network with one focal resistant node 1 (red) and non-resistant nodes 2 and 3 (blue). Nodes represent either cells in a population or taxa (strains or species) in a community, and edges represent interactions. 
b) Dynamics simulated for the interaction network using Equation~\eqref{eq:model} 
and parameters $n=3, \alpha_{ij}=1/2$ and $l_1=0$. Antibiotic is administered at $t_0=30$. Outcomes for scenarios with weak ($l_2=l_3=0.1$) and strong ($l_2=l_3=0.6$) antibiotic effects are shown as solid and dashed curves, respectively. 
}
\label{Graph}
\end{figure}

Consider the graph in Figure~\ref{Graph}. Construe it as a population of bacterial cells of the same species. Each node represents a cell, and each edge represents an interaction. In this case, the interactions represent intraspecific competition and are all identical. The red cell is resistant to an antibiotic, whilst the blue cells are sensitive. If a small amount of antibiotic is added to the system, this will inhibit the sensitive blue cells to some extent, and thereby relieve the resistant red cell of some of the competition. The resistant cell will then proliferate more than it otherwise would have. This is the ecological phenomenon of \textit{competitive release}, and in terms of evolutionary theory, it is \textit{selection for resistance} (see reference \cite{Day_Read} for the distinction between competitive release and selection).

If a larger dose of antibiotic is added, the sensitive cells are inhibited to a greater degree, and competitive release and selection for resistance are correspondingly stronger (see Figure~\ref{Graph} b). A different conceptualization of the same effect is to imagine the difference between inhibiting a smaller or larger portion of the blue cells. Clearly, the red resistant cell will proliferate more, experiencing more competitive release and stronger positive selection, when more of the blue cells are targeted.

Now reconstrue the graph as representing a community of different bacterial species, each node representing a species and each edge an interspecies interaction. Let all species be ecologically equivalent. In particular, let all interactions be identical, as before, so that they all represent competition of the same strength. As above, the degree of competitive release of the red resistant node (species) increases with the degree of inhibition of the blue nodes (species) as well as with the portion of blue ones that are inhibited. The degree of inhibition can still correspond to drug dose. The portion of blue nodes (species) that are inhibited corresponds to the drug's spectrum. In this case, antibiotic spectrum thus works in analogy with dose --- like a higher dose, a broader spectrum results in greater competitive release because it removes more of the competition.

But what if species are not ecologically equivalent? What is the effect of spectrum in a microbial community where species differ in their ecological parameters and in the strength or even the sign (inhibition versus facilitation) of interactions? This is much harder to intuit. Nevertheless, we think some intuitive understanding is possible.

If ecological parameters are random and interactions are mostly inhibitory, the analogy with dose laid out above should hold in a probabilistic sense. A narrower spectrum, whilst not guaranteed to result in less competitive release, should at least be expected to do so; a narrower spectrum should be the better bet for resistance management.

\subsection{Model and assumptions} \label{model-assumptions}
\noindent The basis for the analysis is the Lotka-Volterra community ecology model. This is a well studied model that is commonly used for modelling microbial communities \cite{Davis_LV}. Its strengths and weaknesses are reviewed in reference \cite{gLV_Rev}. The model allows for different types of interactions among microbes --- inhibitory, neutral and facilitatory --- in all combinations. The possible dynamics depend on these interactions and include stable equilibria and other attractors, but also unrealistic growth to infinity, which can occur if facilitatory interactions are allowed (see the formal account below for further details). To investigate the effect of antibiotic spectrum, we modify the Lotka-Volterra model by adding antibiotic effects, as previously done \cite{Sundius}.

We make the following assumptions.
\begin{enumerate}
    \item {\textit{Pairwise microbial interactions are inhibitory, neutral or facilitatory in any combination.} This gives the study a broad scope.}        
    \item{\textit{The microbial community is at a stable equilibrium.} This excludes unrealistic growth to infinity, and is consistent with (though not strictly implied by) the finding that the human gut microbiota shows stability over time \cite{Coyte_Foster,Faith_stable}.}
    \item{\textit{The equilibrium persists and remains stable under the addition of antibiotic or change in spectrum.} For large communities, this is supported by reference \cite{Lechon-Alonso-etal}.}
    \end{enumerate}

For probabilistic and numerical results, we also assume the following.
    \begin{enumerate}
    \item{\textit{Interactions are weaker between species than within species.} This is a natural assumption for mutually inhibitory interactions (i.e., competition) due to partial niche overlap. Empirical measurements of ecological interactions in microbial communities are scarce, but indicate that they are mostly inhibitory and weaker between than within species \cite{Palm_Fost,Coyte_Foster}.}
    \item{\textit{Interaction coefficients are random and \textit{i.i.d.}} (independent and identically distributed).}
    \end{enumerate}
    
The full and precise assumptions are given in the formal account below.

\subsection{Defining antibiotic spectrum} \label{def-anti-spec}
Medicine is practically oriented, and the concept of antibiotic spectrum is no exception. It roughly refers to the set of medically relevant bacteria, on which the antibiotic has a strong enough effect for adequate treatment. It does not include weak effects that matter for microbial ecology and evolution but do not suffice for reliable treatment of infection. For the purpose of clarity, we develop the following working definitions.

In our model, the \textit{ecological spectrum} of an antibiotic is a vector (list) of numbers that represent the effect of the drug on each taxon. Increasing the magnitude of at least one of these numbers, without decreasing any --- that is, increasing the strength of the effect on at least one taxon without decreasing the effect on any other taxon --- makes the spectrum \textit{stronger}.

We may set a threshold value for these numbers, above which the antibiotic effect is strong enough for adequate treatment and below which it is not. Setting all numbers above threshold to $1$ and those below threshold to $0$ gives a vector that lists which taxa are treatable with the drug. We call this vector the \textit{medical spectrum} of the drug, and the number of ones, i.e., the number of treatable taxa, we call the \textit{breadth} of the medical spectrum. We see that making the ecological spectrum stronger may, but need not, make the medical spectrum broader.

The ecological spectrum is most relevant for understanding the causes of resistance evolution, whereas the medical spectrum is more relevant for its public health consequences. Studying the community ecology of resistance evolution, we compare ecological spectra that differ in strength.

\subsection{The result of strengthening the spectrum}

Here we first connect back to the original intuition laid out above by walking through three scenarios that correspond to the three cases in Theorem \ref{main-prob}. Then we give more general results.

Imagine the microbial community at stable equilibrium under antibiotic exposure. We are interested in the abundance of a specific taxon that is fully resistant, and how this abundance changes when the spectrum of the drug is changed. We increase the drug's effect on one taxon, whilst changing nothing else. This makes the ecological spectrum stronger. If this change surpasses the threshold for adequate treatment, this also makes the medical spectrum broader.

If the interactions among taxa are mostly inhibitory, the intuition is that the resistant taxon is (probabilistically) expected to increase in abundance. This is because the stronger spectrum suppresses one taxon more strongly, and given mostly inhibitory interactions, the expected effect of this for the resistant taxon is a relieving of inhibition. In terms of our mathematical analysis below, 'mostly inhibitory' means that the expected value (average) of the distribution of interaction coefficients ($\alpha_{ij}$) is positive ($\E(\al_{ij})>0$). (Note that, by convention, the Lotka-Volterra interaction coefficients are positive for inhibitory and negative for facilitatory interactions.) This scenario corresponds to case $i)$ in  Theorem \ref{main-prob}. 

If instead the interactions are mostly facilitatory  ($\E(\al_{ij})<0$), the effects of taxa on each other are, on average, reversed as compared to the previous scenario. This yields the reverse intuition, that the abundance of the resistant taxon should decrease under the stronger or broader spectrum. This corresponds to case $ii)$ in Theorem \ref{main-prob}. However, as detailed in the formal account below, the Lotka-Volterra model does not always handle facilitatory interactions well (allowing unbounded growth), and this limits the applicability of this scenario, unless further restrictions are placed on the interaction coefficients. 

If interactions are neutral on average ($\E(\alpha_{ij})=0$), the expected effect of a change in spectrum is also neutral, that is, no change in the abundance of the resistant taxon. This corresponds to case $iii)$ in Theorem \ref{main-prob}. 

The result of increasing the effect of the drug on more than one taxon is simply an extension of the above scenarios; it is found by increasing the antibiotic effects one at a time.

Actual antibiotic spectra can differ in more complicated ways, one antibiotic having stronger effects on some taxa and another being stronger on other taxa. In Theorem \ref{main-detm}, we give an expression for the change in abundance of the resistant taxon, induced by changing the antibiotic's ecological spectrum by any combination of increases and decreases of its effects on different taxa. Computing this change in abundance requires full information on the interaction coefficients as well as other ecological parameters.

To facilitate integration with empirical studies, we also give a corresponding expression for the statistically expected change in the abundance of the resistant taxon, given information on the average value of interaction coefficients (see Theorem \ref{main-prob}).

The effects described above are symmetric; the effect of weakening the spectrum is exactly opposite to that of strengthening it.

The findings also apply to the choice of initial treatment: Consider the scenario where there is initially no antibiotic exposure (in our model an ecological spectrum vector of all zeros), and apply either of two spectra, where one is stronger than the other. The difference in abundance of the resistant taxon between these two alternatives is the same as the difference induced by changing from one of the spectra to the other. For example, with mostly inhibitory interactions, a stronger spectrum is expected to induce a greater increase in the resistant taxon than is a weaker spectrum.

Here we remind the reader of the assumption that the equilibrium persists and remains stable, as this assumption seems most plausible when the differences between spectra (including the zero vector in the case of initial treatment choice) are not too large.

\section{A formal account of the analysis and results\label{sec:formal_account}}

In this section we formulate the main problem, the model we are using to approach it (i.e., the Lotka-Volterra model) and provide a formal  account of the mathematical results, while long proofs are deferred to the Appendix. 

\subsection{The mathematical model\label{sec:mathematical}}

The mathematical model used in our study is a modified Lotka-Volterra $n$ taxa model (LV) that includes antibiotic effects, see ~\cite{Sundius} where $2$ taxa were studied. Let $x_i = x_i(t)$ be the abundance of taxa $i=1, \ldots, n$ at time $t$, where $n$ is the number of taxa, subject to antibiotic effect $l_i \geq 0$: 
\begin{equation}\label{eq:model}
   x_i'(t) = r_i x_i(t) \left(1 - \frac{1}{K_i} \sum_{j =1}^n \al_{ij} x_j(t) \right) - x_i(t) l_i,
\end{equation}
where $r_i >0$ is the intrinsic {\em growth rate} and $K_i > 0$ is the {\em carrying capacity} of the taxon $i$. The interactions between taxa are described by the matrix $A$, where $(A)_{ij} = \al_{ij}$. Specifically, $\al_{ij}$ measures the influence of $x_j$ on $x_i$. If $\al_{ij} > 0$, $x_j$ inhibits the growth of $x_i$, and if $\al_{ij}  < 0$, $x_j$ facilitates the growth of $x_i$. The effect of the antibiotic on $x_i$ is given by $l_i$, and the vector $\oli{l}=(l_1, \ldots, l_n)$ describes the \emph{ecological spectrum} of the drug. We let $x_1$ be the focal resistant taxon, and, hence $l_1 = 0$.

\subsection{Frequency and classification of attractors\label{sec:frequency_equilibria}}

The behaviour of the solution $\oli{x}(t) = (x_1(t),\ldots, x_n(t))$ depends on various factors, including the parameters of the interaction matrix $A$ and the initial values. Moreover, the dynamic behaviour may exhibit specific types of attractors, such as (stable or unstable) equilibria, period-$k$ limit cycles, or chaotic attractors; however, there can also be divergent orbits.  Although there is a large body of literature on the Lotka-Volterra model, a complete and thorough understanding of the conditions on the parameters (i.e., the entries $\alpha_{ij}$ of the interaction matrix $A$) giving rise to various attractors is still lacking. However, there are results on the existence of attractors in low dimensions (see e.g. \cite{Arn-Cou-Pey-Tre, Arn-Cou-Tre, Vano-etal} or \cite{Hirsch-II, Hirsch-III, Smale-examples}). For related low-dimensional systems, strange attractors appear for maps in the Hénon map family or the Lorentz map, see the seminal results \cite{BC-Henon, Tucker}. 
We refer to, e.g., reference ~\cite{Akjouj-etal} for a review of LV-model. 

To obtain preliminary insights into this question, we numerically integrated Equations~\eqref{eq:model} to measure statistics for observed asymptotic dynamic behaviours. To do this, we sampled $N_\text{ic}$ initial conditions independently and uniformly from the interval $[0,L]^n$ and classified the asymptotic behaviours after an initial transient. We assume that the initial condition satisfies $x_j(0) > 0$, for all $1 \leq j \leq n$. This process was repeated for $N_\text{re}$ realizations of the interaction matrix $A$, which were randomly sampled from different distributions on the interval  $[-1,1]$. Numerical integration of \eqref{eq:model} was carried out in MATLAB using a stiff solver (\verb|ode23s|). For details of the sampling algorithm and classification of attractors and equilibria, see Appendix~\ref{app:num_algorithm}.  The sampling parameters were $N_\text{re}=100$, $N_\text{ic} = 100$ and $L=5$, and the model parameters were $\epsilon_c = 10^{-2}, \epsilon_d = 10^3, T=200$, $T_t = 0.5 T = 100$, and $r_i=K_i=1$ with $l_i=0$. We sampled values for $\alpha_{ij}\in[-1,1]$ , using two random distributions: (i) the uniform distribution $\mathcal{U}_{[-1,1]}$; and (ii) the truncated normal distribution $\mathcal{N}_{[-1,1]}$ with mean $\alpha$ and standard deviation $\sigma$. More details on these numerical computations are given in Appendix~\ref{app:num_algorithm}. Results are shown in  Tables~\ref{tab:statistics_attractors_uniform}  and \ref{tab:statistics_attractors_normal}, and in Figs. \ref{fig:statistics_SubCanonical_uniform} and \ref{fig:statistics_cofactor}. 

Preliminary computer experiments (Table~\ref{tab:statistics_attractors_uniform}) show that for the uniform distribution of $\alpha_{ij}\sim\mathcal{U}_{[-1,1]}$, most trajectories tend to equilibria or diverge in the asymptotic time limit for larger $n$.  Some trajectories display oscillatory behaviour, but their occurrence is  rare in our sampling (note that 'oscillatory behaviour' simply refers to any behaviour not converging to an equilibrium, see also Appendix \ref{app:num_algorithm}). As the number of species $n$ increases, divergent behaviour becomes very prominent. 

However, note that it is biologically more natural to assume that interaction coefficients (excluding self-interactions) are distributed non-uniformly. Unfortunately, at present, data with detailed network interactions are unavailable. However, it is reasonable to assume that interaction coefficients are normally distributed. 

We therefore investigated the prevalence of equilibria using the normal distribution $\mathcal{N}_{[-1,1]}$ with mean $\alpha$ and standard deviation $\sigma$, normalized and truncated to the finite support $[-1,1]$. Results are shown in  Table~\ref{tab:statistics_attractors_normal}) with  $\sigma=0.05$. On the one hand, for $\alpha=0$ and $0.05$, all divergent behaviour is absent; notably, no attractors were recorded that do not correspond to an equilibrium point. On the other hand, for $\alpha=-0.05$, divergent behaviour is observed for $n=32$ and $n=64$; indeed, our analysis further below shows that divergence is expected at a critical value $\alpha_s=-1/(n-1)$  (see Equation~\eqref{al-s} and Section~\ref{sec:existence_stable_equilibrium}).  We provide an intuitive explanation for this behaviour in the Discussion (Section~\ref{sec:discussion}). Furthermore, the data in Table~\ref{tab:statistics_attractors_normal} ($\alpha = -0.05$) show that trajectories for $\alpha<\alpha_s$ diverge --- this transition between convergent and divergent dynamics matches the critical value  $\alpha=\alpha_s$. 
Indeed, by Lemma~\ref{main-1}, we have that $\E(\det{A})<0$ for $\alpha<\alpha_s$; and the contraposition of Corollary \ref{Cor-x} then says we have no stable feasible equilibrium. This does not strictly prove but aligns with the observation of divergent dynamics.

\begin{Def}
 We call an equilibrium \emph{feasible}, if all components of the vector $\oli x=(x_1,\ldots,x_n)$ are strictly positive; and we call an equilibrium \emph{subfeasible} of order $0\leq k\leq n$ if exactly $k$ entries are 0 and all others are strictly positive. Note that a subfeasible equilibrium of order $k=0$ is feasible.
\end{Def}
Furthermore, we observe a decline in the number of equilibria that are detected as feasible equilibria as we increase $n$. Therefore, we also consider the statistics for subfeasible equilibria. 
The effect of increasing $n$ on the number of feasible versus subfeasible equilibria is further illustrated by Table~\ref{fig:statistics_SubCanonical_uniform}, which details the statistics for equilibria reported in Table~\ref{tab:statistics_attractors_uniform} by further classifying the order $k$ of subfeasible equilibria for the identical realizations. 
While for smaller numbers of taxa, $n$, subfeasible equilibria appear absent, they become more prominent for larger values of $n$~\cite{Clenet-Massol-Najim}.

As we discuss further below, our results on stability apply not only to feasible but also to subfeasible equilibria. Indeed, in the subspace associated with a subfeasible equilibrium $\oli{x}^*$, $\oli{x}^*$ is stable in the transverse direction. More precisely, suppose that $\oli{x}^* = (x_1^*, \ldots, x_n^*)$ is a subfeasible equilibrium where $x_i^* = 0$ for $i \in I$ (and $I$ is non-empty). Then it is a feasible equilibrium in the subspace $\{ x_i = 0: i \in I\} \subset \R^n$. In the $x_i$-direction, for $i \in I$, it is also stable, because $F_i(\oli{x}^*) < 0$ for $i \in I$ (see Appendix \ref{app:math_analysis}). Therefore, we may assume in our subsequent analysis below that equilibria are feasible without loss of generality. By \cite{Lechon-Alonso-etal}, this feasible equilibrium is then stable with a high probability for large systems. 

Overall, these statistics suggest that --- under the given assumptions --- we may restrict our  mathematical analysis to equilibria only (as opposed to other attractors). In the following section, we present mathematically rigorous results for (i) the stability of equilibria and (ii) how they are affected by the  spectrum $\bar l$.

\begin{table}[htp!]
\centering
\begin{tabular}{c|cccc}
\hline\hline
 \text{n} & \text{equilibria (convergent)} & \text{oscillatory} & \text{divergent} & \text{feasible} \\ \hline
 4 & 9309 & 91 & 600 & 2611 \\
 8 & 6852 & 735 & 2413 & 100 \\
 16 & 1906 & 531 & 7563 & 0 \\
 32 & 0 & 0 & 10000 & 0 \\
 64 & 0 & 0 & 10000 & 0 \\
\hline\hline
\end{tabular}
\caption{\label{tab:statistics_attractors_uniform}Statistics for attractors in systems of dimension $n$ with interactions sampled from the uniform distribution $\alpha_{ij}\sim\mathcal{U}_{[-1,1]}$. 
}
\end{table}

\begin{table}[htp!] 
\centering
$(\alpha,\sigma)=(-0.05,0.05)$\\
\medskip
\begin{tabular}{c|cccc}
\hline\hline
 \text{n} & \text{equilibria (convergent)} & \text{oscillatory} & \text{divergent} & \text{feasible} \\\hline
 4 & 10000 & 0 & 0 & 10000 \\
 8 & 10000 & 0 & 0 & 10000 \\
 16 & 10000 & 0 & 0 & 10000 \\
 32 & 0 & 0 & 10000 & 0 \\
 64 & 0 & 0 & 10000 & 0 \\
\hline\hline
\end{tabular}
\medskip \\
$(\alpha,\sigma)=(0,0.05)$\\
\medskip
\begin{tabular}{c|cccc}
\hline\hline
 \text{n} & \text{equilibria (convergent)} & \text{oscillatory} & \text{divergent} & \text{feasible} \\\hline
 4 & 10000 & 0 & 0 & 10000 \\
 8 & 10000 & 0 & 0 & 10000 \\
 16 & 10000 & 0 & 0 & 10000 \\
 32 & 10000 & 0 & 0 & 9900 \\
 64 & 10000 & 0 & 0 & 5793 \\
\hline\hline
\end{tabular}
\medskip \\
$(\alpha,\sigma)=(0.05,0.05)$\\
\medskip
\begin{tabular}{c|cccc}
\hline\hline
 \text{n} & \text{equilibria (convergent)} & \text{oscillatory} & \text{divergent} & \text{feasible} \\
 4 & 10000 & 0 & 0 & 10000 \\
 8 & 10000 & 0 & 0 & 10000 \\
 16 & 10000 & 0 & 0 & 10000 \\
 32 & 10000 & 0 & 0 & 9917 \\
 64 & 10000 & 0 & 0 & 6299 \\
\hline\hline
\end{tabular}
\caption{\label{tab:statistics_attractors_normal}Statistics for attractors in systems of dimension $n$ with interactions sampled from the normal distribution $\alpha_{ij}\sim\mathcal{N}_{[-1,1]}$ with $(\alpha,\sigma)$ as listed above the table. 
}
\end{table}


\begin{figure}[htp!]
    \centering
    \begin{overpic}[width=\textwidth]{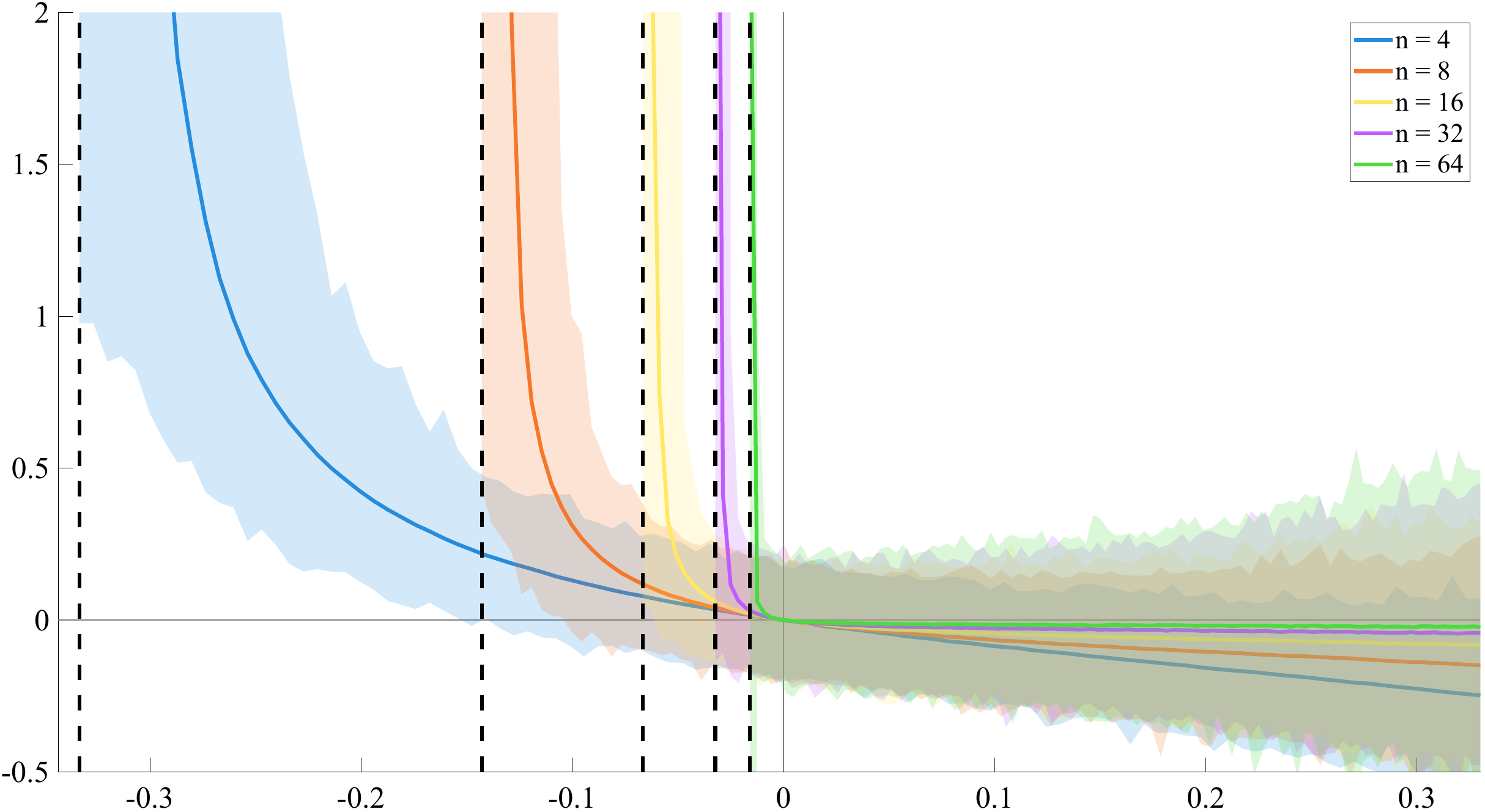} 
    \put(50.5,-3){$\alpha$}
    \put(4,55){\rotatebox{90}{\footnotesize $\alpha_s(4)$}}
    \put(31,55){\rotatebox{90}{\footnotesize $\alpha_s(8)$}}
    \put(42,55){\rotatebox{90}{\footnotesize $\alpha_s(16)$}}
    \put(47,55){\rotatebox{90}{\footnotesize $\alpha_s(32)$}}
    \put(49.5,55){\rotatebox{90}{\footnotesize $\alpha_s(64)$}}
    \put(-5,20){\rotatebox{90}{$\langle c_{j1}/\det (A)\rangle$}}
    \end{overpic}
    \smallskip
    \caption{\label{fig:statistics_cofactor} 
    Statistics for the ratio $c_{j1}/\det (A)$. Interaction matrices were randomly sampled from $\mathcal{N}(\alpha,\sigma)$ with parameters $\sigma=0.05$ for an ensemble over $N_\text{re}=10000$ realizations. The ensemble average  $\langle c_{j1}/\det (A)\rangle$  was recorded  for varying systems sizes while varying the mean value $\alpha$. Shaded regions of same colour indicate minimal and maximal values for the ensemble.
    While the range of $\alpha$ is unbounded for positive values, it is bounded to the left for negative values, indicated by vertical dashed lines located at $\alpha=\alpha_s$,  see \eqref{al-s} and Section~\ref{sec:mainresults}. 
    }
\end{figure}


\begin{figure}[htp!]
\centering
\begin{overpic}[width=\textwidth]{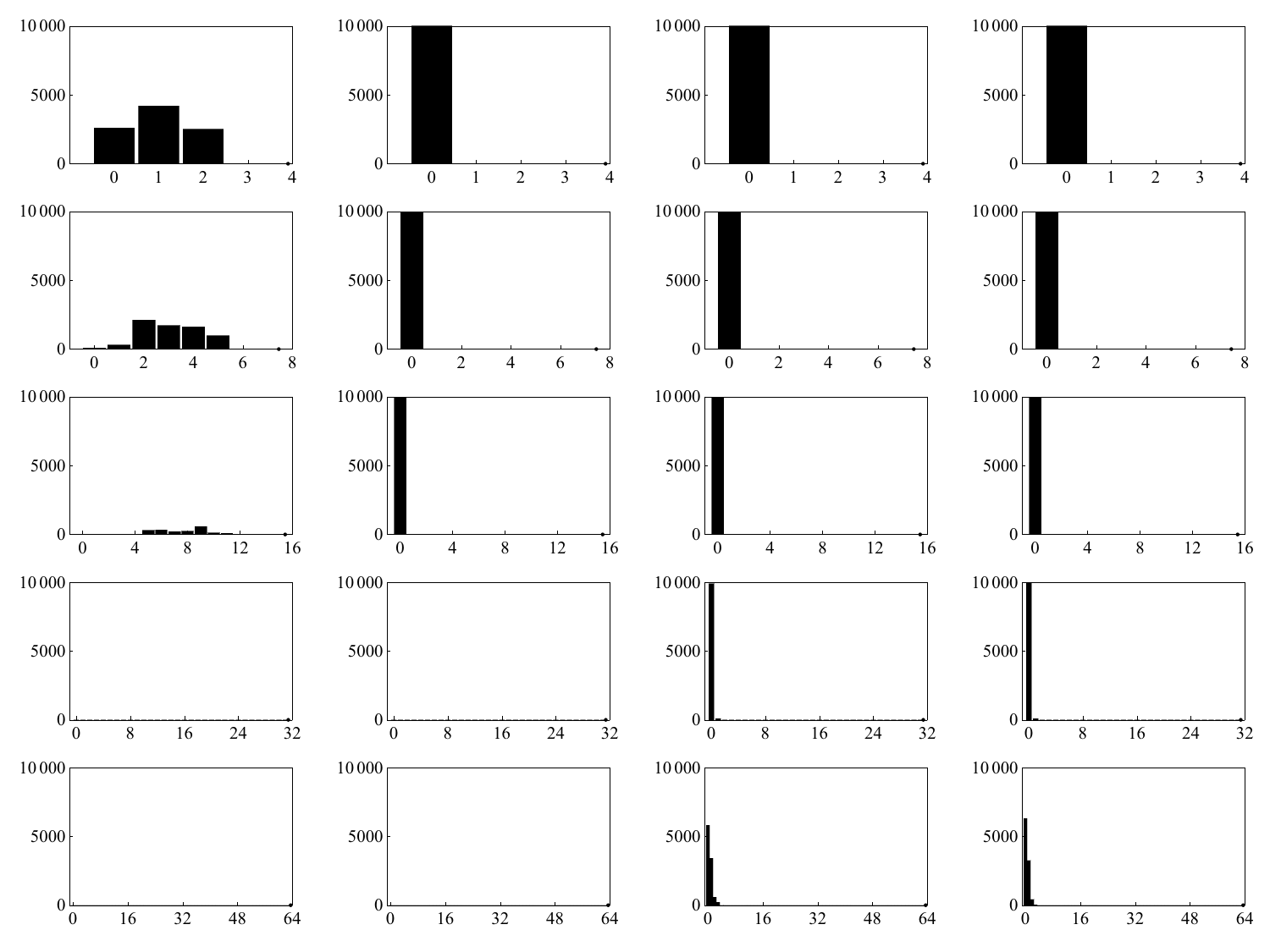}
\footnotesize
\put(8,76){ Uniform}
\put(30,76){$(\alpha,\sigma)=(-0.05,0.05)$}
\put(57,76){$(\alpha,\sigma)=(0,0.05)$}
\put(81,76){$(\alpha,\sigma)=(0.05,0.05)$}
\put(12,-1){Order $k$}
\put(36,-1){Order $k$}
\put(60,-1){Order $k$}
\put(85,-1){Order $k$}
\put(-2,68){\rotatebox{90}{$n=4$}}
\put(-2,50){\rotatebox{90}{$n=8$}}
\put(-2,35){\rotatebox{90}{$n=16$}}
\put(-2,22){\rotatebox{90}{$n=32$}}
\put(-2,6){\rotatebox{90}{$n=64$}}
\end{overpic}
\caption{\label{fig:statistics_SubCanonical_uniform} Statistics for subfeasible equilibria of order $k$ (rows) for $n=4,8,16,32,64$, using  interactions sampled from different distributions (columns): the uniform distribution, $\alpha_{ij}\sim\mathcal{U}_{[-1,1]}$, the truncated normal distribution $\mathcal{N}_{[-1,1]}$ with $\sigma=0.05$ and varying values $\alpha=-0.05,0,0.05$.
}
\end{figure}

\subsection{Existence of a stable equilibrium\label{sec:existence_stable_equilibrium}}

As already mentioned, the Lotka-Volterra model can exhibit rich dynamics. Precise conditions for the existence of stable equilibria for arbitrary interaction matrices is not solved in general, and beyond the scope of this paper. However, given the above empirical computations and earlier results (see e.g. \cite{Lechon-Alonso-etal}, \cite{Bizeul-Najim}), we will focus on a sub-class of interaction matrices, for which one stable feasible equilibrium exists. First, let us make some formal definitions. 

Let the {\em first orthant} be defined as $\R_+^n  = \{ (x_1, \ldots, x_n) \in \R^n : x_j \geq 0, \text{for all } j \}.$ An {\em equilibrium point} to the LV-system is a constant solution $\oli{x}(t) = (x_1(t), \ldots, x_n(t)) = \oli{x}^* \in \R_+^n$, i.e. a solution with   
\[
x_i'(t) = 0, \text{ for all } 1 \leq i \leq n. 
\]
If we let
\[
F_i(\oli{x}) = 1- \frac{1}{K_i} \sum_{j=1}^n \al_{ij} x_j, 
\]
we see that an equilibrium must either satisfy $x_i = 0$ or $F_i(\oli{x}) = 0$ for each $i$. 
Given an equilibrium $\oli{x}^* = (x_1^*, \ldots x_n^*) \in \R_+^n$, let $I$ be the set of indices $j$ where $x_j^* = 0$. Then let 
\[
\R_I^n = \{ (x_1, \ldots, x_n) \in \R^n : x_j \geq 0 \text{ for  } j \in I \text{  and } x_j > 0 \text{ for } j \notin I \}. 
\]
We say that $\oli{x}^*$ is {\em stable} (in $\R^n$) if there is a neighborhood $U$ of $\oli{x}^*$ in $\R^n$ such that for every $\oli{y} \in R_I^n \cap U$ we have $\oli{x}(t) \in U$ for all $t \geq 0$, where $\oli{x}(0) = \oli{y}$ and $\oli{x}(t) \raw \oli{x}^*$, as $t \raw \infty$. We are only interested in solutions in the first orthant.

In our mathematical analysis (see Appendix \ref{app:math_analysis}), we assume that the equilibrium is feasible. However, this is not strictly necessary. Given a subfeasible stable equilibrium $\oli{x}^* = (x_1^*, \ldots, x_n^*)$ for which $x_j^*=0$ for $j \in I$, it is stable in its associated subspace $\XX_I^n = \{ (x_1, \ldots, x_n) : x_j = 0, \text{ for } j \in I \} \subset \R^n$. It is not hard to see that $F_i(\oli{x}^*) < 0$ for all $i \in I$. This property is sometimes called the {\em non-invadability condition}.

Another assumption is that the stable (feasible) equilibrium we are studying (possibly in a subspace as described above) shifts smoothly and remains stable under the perturbations we consider (see assumptions in Section \ref{model-assumptions}). From a mathematical perspective, it is clear that the hyperbolic (stable) equilibrium is structurally robust towards small perturbations. This appears to be true also for larger perturbations; another study~\cite{Lechon-Alonso-etal} shows that for large systems, a feasible equilibrium is almost surely stable. We also assume that the interaction matrix $A$ is invertible. 

For the probabilistic results, we assume that interactions between different taxa (species) are weaker than interactions within taxa, as a recent analysis of the mouse gut microbiome found them to typically be \cite{Coyte_Foster} (see again the assumptions in Section \ref{model-assumptions}). Intra-taxon interactions, by convention, are set inhibitory with unit strength, i.e., $\al_{ii} = 1$ for all $i$. Hence, the assumption that interactions between different taxa are weaker than within taxa implies that $|\al_{ij}| < 1$ for $i \neq j$. 
In dimension $2$, it is well known that this condition implies the existence of a unique (global) equilibrium, but for higher dimensions this is no longer true. 
As we have discussed in Section~\ref{sec:frequency_equilibria}, the condition $|\alpha_{ij}| < 1$ is far too weak for larger systems. In any case, we now only consider interaction matrices that admit a stable equilibrium.

By the end of the 1970s, Goh \cite{Goh} and Adachi, Takeuchi and Tokumaru \cite{Ad-Ta-To-1, Ad-Ta-To-2, Ad-Ta-1, Ad-Ta-2, Ad-Ta-3}, gave some fundamental results on the existence of a stable equilibrium. 
\begin{Def}
A quadratic matrix $A$ is said to belong to the class $\SS$ if there is a diagonal matrix $D$ such that 
\[
AD + DA^{T}
\]
is positive definite. 
\end{Def}
The following theorem was proved by Goh \cite{Goh} and Adachi, Takeuchi and Tokumaru \cite{Ad-Ta-To-1, Ad-Ta-To-2}. 
\begin{Thm}
    If the matrix $A \in \SS$ then the corresponding Lotka-Volterra system has a unique stable equilibrium. 
\end{Thm}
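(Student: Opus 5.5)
We would prove the classical Goh / Adachi--Takeuchi--Tokumaru statement by a Lyapunov-function argument combined with a linear complementarity (variational) characterisation of the equilibrium.

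\textbf{Reduction.} We first absorb the antibiotic and intrinsic rates into the linear data. Writing
\[
x_i' = \frac{r_i}{K_i}\, x_i \Big( b_i - \sum_j \al_{ij} x_j \Big), \qquad b_i = K_i\Big(1-\frac{l_i}{r_i}\Big),
\]
the positive factors $r_i/K_i$ form a diagonal matrix $R$. Replacing $D$ by a suitable rescaling, the property $A\in\SS$ is what is needed for the weighted system; we interpret $\SS$ with $D$ positive diagonal, as in Goh. Here $AD+DA^T>0$ is equivalent to $D^{-1}A + A^TD^{-1}>0$, via congruence by $D^{-1}$. So we set $W=D^{-1}$, a positive diagonal matrix with $WA+A^TW$ positive definite.

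\textbf{Existence and uniqueness of a saturated equilibrium.} We seek $\oli x^*\in\R_+^n$ with
\[
\oli x^*\ge0,\qquad w:=b-A\oli x^*\le 0 \ \text{(componentwise on } I\text{)},\qquad x_i^*\,w_i=0 .
\]
This is the linear complementarity problem $\mathrm{LCP}(-b,A)$. Since $WA+A^TW>0$, the matrix $A$ is a P-matrix: every principal submatrix $A_J$ satisfies $W_JA_J+A_J^TW_J>0$, so $\det A_J>0$ by a standard argument on the real eigenvalues. By the Samelson--Thrall--Wesler / Murty theorem, the LCP has exactly one solution. That solution is the candidate equilibrium, and it is feasible or subfeasible with the non-invadability condition $F_i(\oli x^*)\le0$ for $i\in I$.

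\textbf{Stability via Goh's Lyapunov function.} On the relative interior $\R_I^n$, define
\[
V(\oli x)=\sum_{i\notin I} \frac{w_iK_i}{r_i}\Big(x_i - x_i^* - x_i^*\log\frac{x_i}{x_i^*}\Big) + \sum_{i\in I}\frac{w_iK_i}{r_i}\, x_i ,
\]
with $w_i$ the diagonal entries of $W$. Differentiating along solutions and using $b=A\oli x^*+w$ gives
\[
\dot V = -(\oli x-\oli x^*)^T W A(\oli x-\oli x^*) + \sum_{i\in I} w_i\, x_i\, w_i' ,
\]
where $w_i'$ denotes the complementarity slacks, which are $\le0$. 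Hence
\[
\dot V \le -\tfrac12(\oli x-\oli x^*)^T(WA+A^TW)(\oli x-\oli x^*) <0
\]
for $\oli x\ne\oli x^*$. Since $V$ is radially unbounded and proper on the orthant, Lyapunov's theorem gives stability, and LaSalle's invariance principle gives convergence to $\oli x^*$ for all initial data in $\R_I^n$, indeed from the whole open orthant. This matches the paper's definition of stability.

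\textbf{Uniqueness of the stable equilibrium.} Any other equilibrium $\oli y$ lies in the open orthant's closure. It is attracted to $\oli x^*$ from nearby interior points, so it is not stable in the paper's sense. Alternatively, uniqueness follows directly from the uniqueness of the LCP solution.

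\textbf{Main obstacle.} The main difficulty is the weighting bookkeeping: showing that the factors $r_i/K_i$ combined with $D$ still give a negative definite derivative, and handling the boundary terms for $i\in I$ rigorously. We would address this by choosing the Lyapunov weights as $w_iK_i/r_i$, as above, so that the $r_i/K_i$ factors cancel exactly. The sign of the boundary terms then comes out of the complementarity conditions.
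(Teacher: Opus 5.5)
Your proposal is correct and is essentially the classical Goh / Adachi--Takeuchi--Tokumaru argument that the paper only cites and does not prove: diagonal stability makes $A$ a $P$-matrix, so $\mathrm{LCP}(-b,A)$ yields a unique saturated equilibrium. The weighted Volterra function (weights $w_iK_i/r_i$, linear terms on $I$ whose contribution is signed by the non-invadability slacks) then gives global attraction on $\R_I^n$ and hence uniqueness of the stable equilibrium. Two small remarks: you need not assume $D$ positive, since $(AD+DA^T)_{ii}=2d_i\al_{ii}=2d_i$ with $\al_{ii}=1$ forces $d_i>0$; and your first uniqueness argument (nearby interior points are attracted to $\oli x^*$) is the complete one, whereas the ``directly from the LCP'' version would also need the observation that any stable equilibrium $\oli y$ must satisfy $b_i-(A\oli y)_i\le 0$ on its zero set.
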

In \cite{AspenbergMartensWaldetoft} we give necessary conditions for stability. We discuss this also in Appendix \ref{app:math_analysis}. One immediate corollary is:
\begin{Cor} \label{Cor-x}
    If there exists a feasible stable equilibrium, then $\det(A) > 0$. 
\end{Cor}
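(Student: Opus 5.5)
The plan is to linearise the system \eqref{eq:model} at a feasible equilibrium $\oli{x}^*$ and read off the sign of $\det(A)$ from the Jacobian. At a feasible equilibrium every $x_i^*>0$, so the bracket in \eqref{eq:model} satisfies $r_i F_i(\oli x^*) - l_i = 0$ for each $i$. Differentiating $x_i' = x_i\,(r_iF_i(\oli x) - l_i)$ at $\oli x^*$, the term involving the bracket vanishes. What remains is
\[
J = \frac{\partial x_i'}{\partial x_j}\Big|_{\oli x^*} = -\, x_i^* \frac{r_i}{K_i}\al_{ij}, \qquad\text{that is,}\qquad J = -\,D A,
\]
where $D=\mathrm{diag}(x_i^* r_i/K_i)$ has strictly positive diagonal entries. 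Consequently
\[
\det(J) = (-1)^n \det(D)\det(A),
\]
with $\det(D)>0$.

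The next step links stability to the sign of $\det(J)$. For a stable equilibrium in the sense defined above, no eigenvalue of $J$ can have positive real part. Otherwise there is an unstable direction, and since $\oli x^*$ is interior it can be reached from points of $\R^n_I\cap U$, contradicting stability (a standard unstable-manifold argument). Eigenvalues on the imaginary axis need separate treatment. The paper already assumes $A$ invertible, so $\det(J)\neq 0$ and a zero eigenvalue is excluded. Under the standing hyperbolicity assumption of Section~\ref{sec:existence_stable_equilibrium}, purely imaginary eigenvalues are excluded as well, so every eigenvalue has strictly negative real part.

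With all eigenvalues in the open left half-plane, the sign of the determinant follows. Real eigenvalues are negative and complex eigenvalues come in conjugate pairs with product $|\la|^2>0$. Hence $\det(J)=\prod\la_i$ has sign $(-1)^n$. Comparing with the formula above gives $(-1)^n\det(D)\det(A)$ of sign $(-1)^n$, so $\det(A)>0$.

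The main obstacle is the non-hyperbolic case. If hyperbolicity is not assumed, $J$ might be stable while having a pair of purely imaginary eigenvalues. Even then the conclusion survives, because such a pair contributes the positive factor $|\la|^2$ to the product, and zero eigenvalues are ruled out by invertibility. Real positive eigenvalues are the only ones that could flip the sign, and these are excluded by instability via linearisation. So the real care is in justifying that a real positive eigenvalue of $J$ at an interior point contradicts stability. This follows from the unstable manifold theorem, since $U$ can be shrunk to lie in the open orthant.
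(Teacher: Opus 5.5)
Your proof is correct and is essentially the paper's argument, made self-contained. The paper reads the corollary off as the $k=n$ case of Theorem~\ref{equi-s}: the only size-$n$ principal submatrix of $B=D^*A$ is $B$ itself, and $D^*$ is exactly your $D$, so that case says $\det(D^*A)=(-1)^n\det(J)>0$. Your eigenvalue-sign count establishes this directly rather than importing it from the companion paper: invertibility excludes $0$, conjugate pairs contribute $|\la|^2>0$, and only real positive eigenvalues can flip the sign.

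The only caveat concerns the step you flag. Under the paper's literal definition, stability gives a single forward-invariant neighbourhood $U$ whose points converge to $\oli{x}^*$, so you may not shrink $U$, and points on the unstable manifold could a priori return homoclinically. The airtight version is that an eigenvalue with positive real part forces the basin of $\oli{x}^*$ to have empty interior, because orbits that stay near $\oli{x}^*$ lie on a centre-stable manifold of dimension $<n$. That contradicts attraction of the open set $U\cap\R^n_I$. Under the usual Lyapunov notion of stability, your unstable-manifold argument is fine as written.
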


\begin{Rem}
    There may be more than one stable equilibrium (if it is not global of course) and there are examples of systems where there exists a feasible equilibrium, which is unstable, and at the same time there are more than one stable non-feasible equilibria.  
\end{Rem}

\subsection{Main results\label{sec:mainresults}}

Suppose that $A$ is an invertible $n \times n$ matrix. We let $c_{ij}$ be the cofactors of $A$, and $\det(A)$ the determinant of $A$. By Corollary \ref{Cor-x}, a matrix $A$ must have $\det(A) > 0$ if there is a feasible stable equilibrium (see also \cite{Lechon-Alonso-etal}). 

Suppose that we have an ecological spectrum vector $\oli{l} = (l_1, \ldots, l_n)$, where $l_1 = 0$, since we assume that $x_1$ is fully resistant. Now we study the effect of increasing $l_j$ for a \emph{single taxon} $j \neq 1$ by a small amount $\la_j > 0$, i.e. we make the ecological spectrum stronger by increasing $l_j$. 
Recall that $r_j$ is the intrinsic growth rate of $x_j$, and  $K_j$ the carrying capacity. 
Then we have the following change $\de_1$ of abundance of taxon $1$:
\begin{equation} \label{delta-1-a}
\de_1 = - \frac{K_j \la_j}{r_j} \; \frac{c_{j1}}{\det(A)}.
\end{equation}
By linear superposition we get the following. 

\begin{Thm} \label{main-detm}
    Suppose $A$ is an invertible $n \times n$-matrix and that there is a stable feasible equilibrium. Suppose we change the ecological spectrum at element $l_j$ with an amount $\la_j$ for some collection $I$ of indices $j \neq 1$. Then the change of abundance of taxon $1$ is given by
    \begin{equation} \label{delta-1-b}
        \de_1 = \sum_{j \in I} -\frac{K_j \la_j}{r_j} \frac{c_{j1}}{\det(A)}. 
    \end{equation}
\end{Thm}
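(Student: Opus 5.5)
The plan is to compute the feasible equilibrium explicitly as the solution of a linear system and then read off how it depends on $\oli l$. At a feasible equilibrium every $x_i^*>0$, so dividing the $i$-th equation of \eqref{eq:model} by $x_i$ gives
\[
r_i\Bigl(1-\frac{1}{K_i}\sum_{j=1}^n \al_{ij}x_j^*\Bigr) - l_i = 0,
\qquad\text{that is}\qquad
\sum_{j=1}^n \al_{ij}x_j^* = K_i\Bigl(1-\frac{l_i}{r_i}\Bigr).
\]
Writing $\oli b(\oli l)$ for the vector with entries $b_i = K_i - K_i l_i/r_i$, the equilibrium satisfies $A\oli x^* = \oli b(\oli l)$. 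Since $A$ is invertible, this gives $\oli x^* = A^{-1}\oli b(\oli l)$.

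The key structural point is that $\oli b$ depends affinely on $\oli l$. Hence $\oli x^*$ is an affine function of $\oli l$, and linear superposition is exact rather than merely first order. Concretely, changing $l_j$ to $l_j+\la_j$ for $j\in I$ changes $\oli b$ by the vector $\De\oli b$ with entries $-K_j\la_j/r_j$ for $j\in I$ and $0$ otherwise. The resulting change in the equilibrium is therefore $\De\oli x = A^{-1}\De\oli b$.

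Next I will use the adjugate formula $A^{-1} = \frac{1}{\det(A)}\,\mathrm{adj}(A)$, where $(\mathrm{adj}A)_{1j} = c_{j1}$. The first component of the change is then
\[
\de_1 = \sum_j (A^{-1})_{1j}\,\De b_j = \sum_{j\in I} -\frac{K_j\la_j}{r_j}\,\frac{c_{j1}}{\det(A)},
\]
which is \eqref{delta-1-b}. The single-taxon case $|I|=1$ recovers \eqref{delta-1-a}. The sign of $\det(A)$ is positive by Corollary \ref{Cor-x}, although the formula itself only needs invertibility.

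The main obstacle is not algebraic but lies in justifying that the new equilibrium is still the relevant one. The perturbed solution $A^{-1}\oli b$ must remain feasible and stable, so that the abundance of taxon $1$ really moves from the old equilibrium to the new one. I will handle this by invoking the standing assumption of Section \ref{model-assumptions} that the equilibrium persists and remains stable under the change of spectrum. For small $\la_j$, positivity is preserved by continuity, and hyperbolic stability is preserved by structural robustness. If the equilibrium were only subfeasible, I would apply the same computation to the submatrix of $A$ indexed by the nonzero coordinates, as justified earlier via the non-invadability condition.
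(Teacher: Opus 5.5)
Your proof is correct, but it takes a more direct algebraic route than the paper.

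\textbf{The paper's route.} The appendix argues geometrically. The equilibrium is the intersection of the nullcline hyperplanes $P_i$. Raising $l_j$ translates $P_j$ along its normal, so the equilibrium slides along the line $L_j=\bigcap_{i\neq j}P_i$. The direction of $L_j$ is the cofactor row $\oli u=(c_{j1},\ldots,c_{jn})$, which is orthogonal to every row $i\neq j$ of $A$ and has inner product $\det(A)$ with row $j$. The paper computes the displacement along $L_j$ from the angle between $\oli u$ and the normal of $P_j$, projects it onto the $x_1$- and $x_j$-axes, and then invokes linearity for several taxa.

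\textbf{Your route.} You solve $A\oli x^*=\oli b(\oli l)$ and read off $(A^{-1})_{1j}=c_{j1}/\det(A)$ from the adjugate. The paper's direction vector is exactly $\oli u=\det(A)\,A^{-1}\oli e_j$, the $j$-th column of the adjugate, so the two arguments express the same fact.

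\textbf{What each buys.} Your version has several advantages:
\begin{itemize}
\item It is shorter and avoids the norm-and-angle bookkeeping.
\item It makes explicit that $\oli l\mapsto\oli x^*$ is affine. Superposition is therefore exact, and no smallness of $\la_j$ is needed for the formula itself. Smallness matters only for persistence of feasibility and stability, which you correctly delegate to the standing assumption.
\item It fixes the factor $K_j$ in the shift $K_j\la_j/r_j$ unambiguously. The appendix at one point writes $b_i=l_i/r_i$.
\item It yields the companion formula $\de_j=-\frac{K_j\la_j}{r_j}\frac{c_{jj}}{\det(A)}$ and the ratio $\de_1=\de_j\,c_{j1}/c_{jj}$ of Theorem~\ref{quotient} immediately.
\end{itemize}
The paper's picture of the equilibrium sliding along $L_j$ buys geometric intuition. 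The paper reuses it in its side-of-the-hyperplane argument about stable subfeasible equilibria of order $1$.
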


Hence, if we know all the interaction coefficients in $A$, the carrying capacities $K_j$ and intrinsic growth rates $r_j$, then the change of taxon $1$ is given by the formula (\ref{delta-1-b}). 

Under some conditions on the distribution of the $\al_{ij}$ we have the following probabilistic result. 
\begin{Def}\label{def:G_n}
Let $\GG_n$ be the class of invertible $(n \times n)$-matrices $A$ with the following properties
\begin{enumerate}
    \item $|\al_{ij}| < 1$, for all elements $\al_{ij}$,
    \item $\al_{ii} = 1$, for all $1 \leq i \leq n$,
    \item $A$ has one stable feasible equilibrium. 
\end{enumerate}
\end{Def}
Let $\GG = \cup_n \GG_n$. 

\begin{Thm} \label{main-prob}
    Suppose $A \in \GG_n$. Fix some $\ka > 0$. Suppose that all $\al_{ij}$ are i.i.d:s and are drawn from a probability distribution on $[-1,1]$ restricted to the set where $\det(A) \geq \ka > 0$. Put $\E(\al_{ij}) = \al$. There exists a constant $C_n > 0$ (which depends only on $n$) such that if the standard deviation $\si(\al)$ of $\al_{ij}$ satisfies $\si(\al) = \OO(\al)$ and $\al \leq C_n$, then
    \begin{enumerate}
        \item[i)] if $0 < \al < 1$, then $\E\left(\dfrac{c_{j1}}{\det(A)}\right) < 0$, \label{case1}
        \item[ii)] if $-1 < \al < 0$, then $\E\left(\dfrac{c_{j1}}{\det(A)}\right) > 0$, \label{case2}
       \item[iii)] if $\al = 0$, then $\E\left(\dfrac{c_{j1}}{\det(A)}\right) = 0$. \label{case3}
    \end{enumerate}
\end{Thm}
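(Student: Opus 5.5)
We will treat the ratio $c_{j1}/\det(A)$ as a function of the off-diagonal entries and expand it around the identity matrix, where it vanishes for $j\neq 1$. Write $A = I + B$, with $B$ having zero diagonal and off-diagonal entries $\al_{ij}$. Since $c_{j1}/\det(A) = (A^{-1})_{1j}$, the Neumann series gives
\[
(A^{-1})_{1j} = -\al_{1j} + \sum_{k\neq 1,j} \al_{1k}\al_{kj} - \cdots ,
\]
a signed sum over walks from $1$ to $j$, where each walk of length $m$ carries sign $(-1)^m$. This series converges whenever the spectral radius of $B$ is below one. We want to avoid relying on that. The conditioning $\det(A)\ge \ka$ gives a uniform bound $|(A^{-1})_{1j}|\le M_n(\ka)$, because the cofactors are bounded by $(n-1)!$. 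So instead we will use a finite Taylor expansion to second order with a remainder. In that remainder, $A^{-1}$ is evaluated on the segment between $I$ and $A$, and we can control it on the event where $\|B\|$ is small, say $\|B\|\le 1/2$. The complementary event will be handled separately by a concentration bound.

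\textbf{Step 1: linear term.} Taking expectations gives $\E(A^{-1})_{1j} = -\al + R$. At first order only the direct edge $\al_{1j}$ contributes, and its mean is $\al$.

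\textbf{Step 2: remainder.} Each term of order $m\ge 2$ is a product of $m$ entries. Independence means its expectation is bounded by the product of the corresponding moments, and each moment is at most $(|\al|+\si)^m$ up to combinatorial factors. Using $\si=\OO(\al)$, each such term is $\OO(|\al|^m)$ with a constant depending only on $n$. Summing these, $|R|\le D_n \al^2$ on the good event.

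\textbf{Step 3: bad event.} The bad event requires some entry to deviate from its mean $\al$ by a fixed amount. With $\si = \OO(\al)$, Chebyshev bounds its probability by $\OO(\si^2)=\OO(\al^2)$. On that event the integrand is bounded by $M_n(\ka)$, so its contribution is also $\OO(\al^2)$. The conditioning on $\det(A)\ge\ka$ changes expectations only by a normalization. Near $\al=0$ that event has probability close to one (at $B=0$, $\det A = 1$), so the normalization does not spoil the estimate.

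\textbf{Conclusion and the main obstacle.} We get $\E(c_{j1}/\det A) = -\al + \OO(\al^2)$. Choosing $C_n$ small enough that the error is at most $|\al|/2$ whenever $|\al|\le C_n$ gives cases i) and ii). For case iii), where $\al=0$ and hence $\si=0$, all entries vanish almost surely, so $A=I$ and $(A^{-1})_{1j}=0$ exactly. Alternatively, for a symmetric distribution, the map $B\mapsto -B$ preserves the law and turns odd-degree contributions into their negatives. I expect Step 3 to be the main obstacle: making the conditioning on $\det(A)\ge\ka$ and the bad-event contribution honestly $\OO(\al^2)$ with constants depending only on $n$ (and $\ka$), since the hypothesis $\si=\OO(\al)$ is exactly what prevents variance terms from dominating the first-order mean $-\al$.
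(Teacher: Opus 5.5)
Your plan is correct in outline, but it takes a genuinely different route from the paper's.

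\textbf{The paper's route.} The paper first computes the unconditioned means $a=\E(c_{j1})=-\al(1-\al)^{n-2}$ and $b=\E(\det A)=(1-\al)^{n-1}(1+(n-1)\al)$. It does this by multilinearity: each cofactor is a polynomial in distinct independent entries, so its mean is that polynomial evaluated at the all-$\al$ matrix. It then applies a delta-method expansion of $(x,y)\mapsto x/y$ about $(a,b)$. The exact remainder $(aY-bX)(Y-b)/(Yb^2)$ is bounded using Cauchy--Schwarz, $Y\ge\ka$, and second-moment estimates for $c_{j1}$ and $\det A$. The result is $\E(c_{j1}/\det A)=a/b+\OO(\al^2)$, with $a/b=-\al/((1-\al)(1+(n-1)\al))$.

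\textbf{Your route.} You instead Taylor-expand the single matrix function $B\mapsto((I+B)^{-1})_{1j}$ at $B=0$ and split into a good and a bad event.

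\textbf{What each buys.} The paper gets a closed-form leading term that exhibits the singularity at $\al_s=-1/(n-1)$, which underlies its heuristic that the approximation is good beyond small $\al$. Rigorously, with an $\OO(\al^2)$ error, it says no more than $-\al$. Your argument needs only $\E\|B\|_F^2=n(n-1)(\al^2+\si^2)$, not exact cofactor means and second-moment bookkeeping. It also handles the conditioning on $\det A\ge\ka$ explicitly. The paper obtains $a,b$ from independence, which the conditioning destroys, and then silently uses them as the conditional means so that the linear Taylor terms cancel. Your normalization step is what justifies this up to $\OO(\al^2)$. Your version also makes visible that $C_n$ must depend on $\ka$ and on the implied constant in $\si=\OO(\al)$.

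\textbf{Step 2 needs repair.} As written it does not work, for three reasons:
\begin{itemize}
\item expectations restricted to the good event, or to $\{\det A\ge\ka\}$, no longer factor over entries;
\item even unconditionally, $\E|\al_{ik}|^p$ for $p\ge 3$ is only bounded by $\al^2+\si^2$, not by $\OO(|\al|^p)$;
\item $|R|\le D_n\al^2$ cannot hold pointwise.
\end{itemize}
Use the deterministic bound your set-up already provides. On $\{\|B\|\le 1/2\}$ the Neumann tail gives $|R(B)|\le\sum_{m\ge2}\|B\|^m\le 2\|B\|^2$. So the good event contributes at most $2\E\|B\|_F^2=\OO(\al^2)$, while Markov gives $P(\|B\|>1/2)\le 4\E\|B\|_F^2$.

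\textbf{The conditioning needs $\ka<1$.} Your claim that $\{\det A\ge\ka\}$ has probability close to one silently requires $\ka<1$. In that case $\{\det A<\ka\}\subset\{\|B\|_F>\eta\}$ for some $\eta=\eta(n,\ka)>0$, so this event has probability $\OO(\al^2)$. Since $|\al_{1j}|\le 1$, the conditional mean of $\al_{1j}$ is then $\al+\OO(\al^2)$. The restriction is genuinely necessary, and it is missing from both the statement and the paper's argument. For $\ka\ge 1$ the conditioning selects atypical matrices. For example, take $n=2$, $\ka=1$, and entries equal to $\be$ with probability $1-\be^2$ and to $-0.9$ with probability $\be^2$. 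Then $\al>0$ and $\si=\OO(\al)$, yet the conditional expectation of $c_{21}/\det A$ equals $(0.9-\be)/(2(1+0.9\be))>0$.
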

The reason to have $\det(A) \geq \ka$ is the wild behaviour of the quotient $c_{j1}/\det(A)$ when the determinant is small. From Lemma \ref{main-1} in Appendix \ref{app:math_analysis} (second part of the lemma; recall that $c_{jj}$ is the determinant of an $(n-1) \times (n-1)$ interaction matrix), it follows that the expectation of $\det(A)$ is 
\[
\E(\det(A)) = (1-\al)^{n-1} (1 + \al(n-1)). 
\]
We see that there is a {\em singular} value of $\al$, (apart from $\al = 1$), when this expectation becomes zero, namely at 
\begin{equation} \label{al-s}
\al_s = -\frac{1}{n-1}.
\end{equation}
So we expect very wild behaviour close to this singular value. If $\al < \al_s$, it seems that the LV-system is likely to blow up, i.e., at least one taxon diverges to infinity. This is indicated by Table \ref{tab:statistics_attractors_normal} (for $\al = -0.05$); for larger systems, $n=32,64$, the expectation $\alpha=-0.05$ is less than $-1/(n-1)$. The dynamical outcomes to the right of this singular point are further illustrated in Figure~\ref{fig:statistics_cofactor}, where we recorded statistical averages for $c_{j1}/\det{A}$, sampled over many realizations of the interaction matrix $A$ (given $\det(A) > 0$). The vertical lines indicate the singular values given in \eqref{al-s}.

In Theorem \ref{main-prob} it is possible to estimate the conditions on $\al$ more precisely, and then one can say more (see Appendix \ref{app:math_analysis}). However, we leave this for further investigations. The numerical results strongly support the idea that the statements in Theorem \ref{main-prob} hold true in a greater generality. 

Taking the expectation of the formula (\ref{delta-1-a}), we get the following: 
A change in antibiotic strength $l_j$ by  $\la_j$ ($l_j \mapsto l_j + \la_j$)  leads to an expected change $\de_1$ in taxon $1$, 
\[
\mathbb{E}(\de_1) = - \frac{K_j \la_j}{r_j} \;\mathbb{E}\left ( \frac{c_{j1}}{\det(A)}\right ).
\]
Thus, Theorem \ref{main-prob} and the numerical results, see Figure  \ref{fig:statistics_cofactor}, support the idea that the expectation value of $\de_1$ is positive if $\al > 0$. 
This relationship implies that $\la_j$ must be negative in order to expect a decrease in taxon 1.

Next, consider the effect of varying $l_j$ for multiple taxa, $j\neq 1$. Again let us assume that we vary $l_j$ for $j\in I$, by an amount $\la_j$. To  generalize the formula to target multiple taxa, by linear superposition, by (\ref{delta-1-b}), the summed effect is given by
\[
\mathbb{E}(\de_1) = - \sum_{j \in I} \frac{K_j \la_j}{r_j} \;\mathbb{E}\left ( \frac{c_{j1}}{\det(A)}\right ).
\]

For initial treatment (i.e., starting from the spectrum $\oli{l}=(0,0,\ldots,0)$), given the conditions in Theorem \ref{main-prob}, the LV-model thus implies the following. If $\alpha>0$, a weaker ecological spectrum results in a lower expected abundance of the resistant taxon, and conversely, if $\al < 0$, a stronger spectrum results in a lower expected abundance.

The formula (\ref{delta-1-b}) can easily be generalised to multiple resistant taxa. Suppose that $m > 1$ taxa are resistant instead of just one single taxon, and assume that they are $x_{j_1}, x_{j_2}, \ldots, x_{j_m}$ (before we had $m=1$ and $j_1 = 1$). Consequently, $l_i = 0$ if $i \in R$ where $R = \{j_1, \ldots, j_m \}$. Suppose we target some other taxa $x_j$ for $j \in I'$ where $I'$ is an index set in $[n] \setminus R$. Then the change of taxon $x_i$, for $i \in R$, is given by
\[
  \de_i = - \sum_{j \in I'} \frac{K_j \la_j}{r_j} \frac{c_{ji}}{\det(A)}. 
\]
Taking expectations, we get accordingly, 
\[
  \E(\de_i) = - \sum_{j \in I'} \frac{K_j \la_j}{r_j} \E\left( \frac{c_{ji}}{\det(A)} \right). 
\]
\bigskip

\section{Discussion} \label{sec:discussion}

\subsection{Biological interpretation of the mathematical analysis}
The mathematical analysis (see Sec.~\ref{sec:formal_account} and the \hyperref[app:appendix]{Appendix}) is abstract, but it has an interpretation that is intuitively helpful in understanding the underlying community ecology.

Consider the small microbial community in \textit{figure \ref{Graph} a)}. If \textit{blue taxon 2} is targeted with an antibiotic, what is the net effect on \textit{the red resistant taxon 1}? Clearly, \textit{blue 2} will decrease in abundance under antibiotic pressure, and this will change the strengths of its effects on \textit{the red taxon}, but what are those effects?

There is a direct effect of \textit{blue 2} on \textit{the red taxon}. There is also an indirect effect, where \textit{blue 2} affects \textit{blue 3}, which in turn affects \textit{the red taxon}. In addition, there is a loop, where \textit{blue 2} affects \textit{blue 3} that in turn feeds back on \textit{blue 2}, thus modulating its abundance and hence its effects. Then there are feed-backs from the \textit{red taxon} itself, yielding additional loops.

In a large community with many taxa, there is a large network of interactions. A given taxon affects the focal resistant taxon via all chains of non-zero interactions (paths across the network) that start at that taxon and end at the focal one, and this is modulated by feed-back loops that affect these chains, directly or indirectly. 

The key measure in our analysis --- the interaction quotient $c_{j1}/det(A)$ --- takes the product of interaction coefficients along each chain (path) starting at taxon $j$ and ending at taxon $1$ (the focal resistant one), and sums these products over all chains weighted by the feed-back loops that each chain is affected by. Thus, it is a summary --- specifically, a weighted sum --- of all effects of taxon $j$ on taxon $1$ throughout the entire interaction network. This computes the total effect on the focal taxon of one other taxon. To compute the total effect on the focal taxon from \textit{all} other taxa, this process is repeated for each taxon, and the results are added. This is the basis for Theorem \ref{main-detm}. In Appendix \ref{app:math_analysis}, Section \ref{combinatorics}, we explicate this interpretation and its basis in the combinatorial interpretation of determinants.

Like all scientific representations of reality, the Lotka-Volterra model is a simplification, and at some point it breaks down. In particular, it is well known that under some conditions, it supports unbounded growth. In the formal account and Appendix, we map out those conditions. Here we give an interpretation in terms of the network of interactions.

Consider an interaction network with $n$ taxa. Each taxon inhibits itself with strength $1$ ($\al_{ii}=1$). It also has one incoming effect from each of the $n-1$ other taxa. If these average $-1/(n-1)$, they sum to $-1$, and the sum total of all incoming effects (self-interaction and effects of other taxa) is $0$. An analogous argument applies to outgoing effects. Thus, $\E(\al_{ij}) = -1/(n-1), i \neq j$ (see Section \ref{sec:mainresults} in the formal account) marks the boundary between net inhibitory and net facilitatory regimes. In the latter, a key simplification in the Lotka-Volterra model becomes salient: with no explicit representation of external resources to limit the scope for facilitation, taxa can support each other's growth without bound.

\subsection{Relation to previous work on beneficial resistance}
A previous theory study analysed a two-species Lotka-Volterra system, and found that antibiotic resistance in commensals should improve their ability to competitively suppress resistant pathogens under antibiotic exposure \cite{Sundius}, a suggestion that has since been empirically supported \cite{SciTransl_Sa}. The effects of an antibiotic on a set of bacteria can be construed either as a property of the drug (its ecological spectrum) or as a property of the bacteria (their degrees of resistance). In the Lotka-Volterra model used in both studies, the spectrum and resistance pattern are thus mathematically equivalent. This means that the present study generalizes the previous result on beneficial resistance, from communities with two species to communities of arbitrary size.

\section{Conclusion}
Here we have developed formal theory to begin to bridge the gaps between microbial community ecology, the evolution of resistance, and antibiotic stewardship policy. From the model and assumptions, we have derived deterministic results that use full information on ecological parameters, as well as probabilistic results based on summary statistics. The mathematics of the Lotka-Volterra model is complex and has been subject to decades of study, but for the matters investigated here, it boils down to fairly simple formulas with natural biological interpretations.

In our analysis, the ecological interactions among taxa are a key determinant of the outcome. The measurement of interactions in microbial communities is a work in progress, and the picture may change as more data accumulate, but a recent assessment found they were mostly inhibitory \cite{Palm_Fost}. Given that, our results lend theoretical support to the current stewardship policy of favouring antibiotics with a narrow (or weak) spectrum.

Though clearly preliminary, this conclusion illustrates the long term goal, towards which this study aims to contribute: formally rigorous, empirically validated and clinically actionable theory for the evolutionary ecology of antibiotic resistance.

\bigskip

\section{Acknowledgments}
We thank Sam Brown, Ben Metcalf, Maria Martignoni and Magnus Wiktorsson for input. This work was supported by grants from the Cystic Fibrosis Foundation, the Swedish Research Council (ref. no. $2024-04198$) and The Royal Swedish Academy of Sciences (Magnussons Fond).

\printbibliography

@article {Lechon-Alonso-etal,
    AUTHOR = {Lechón-Alonso, Pablo and Kundu, Srilena and Lemos-Costa, Paula and Capitán, José A, and Allesina, Stefano},
     TITLE = {Robust coexistence in competitive ecological communities},
   JOURNAL = {Nat. Commun.},
  FJOURNAL = {Nature Communications},
    VOLUME = {17},
      YEAR = {2026},
    NUMBER = {2637},
     PAGES = {},
      ISSN = {},
   MRCLASS = {},
  MRNUMBER = {},
MRREVIEWER = {},
       DOI = {10.1038/s41467-026-69151-3},
       URL = {https://doi.org/10.1038/s41467-026-69151-3},
}

@article{Akjouj-etal,
    author = {Akjouj, Imane and Barbier, Matthieu and Clenet, Maxime and Hachem, Walid and Maïda, Mylène and Massol, François and Najim, Jamal and Tran, Viet Chi},
    title = {Complex systems in ecology: a guided tour with large Lotka–Volterra models and random matrices},
    journal = {Proceedings of the Royal Society A: Mathematical, Physical and Engineering Sciences},
    volume = {480},
    number = {2285},
    pages = {20230284},
    year = {2024},
    month = {03},
    issn = {1364-5021},
    doi = {10.1098/rspa.2023.0284},
    url = {https://doi.org/10.1098/rspa.2023.0284},
    eprint = {https://royalsocietypublishing.org/rspa/article-pdf/doi/10.1098/rspa.2023.0284/1228734/rspa.2023.0284.pdf},
}

@article {Clenet-Massol-Najim,
    AUTHOR = {Clenet, Maxime and Massol, Fran\c cois and Najim, Jamal},
     TITLE = {Equilibrium and surviving species in a large
              {L}otka-{V}olterra system of differential equations},
   JOURNAL = {J. Math. Biol.},
  FJOURNAL = {Journal of Mathematical Biology},
    VOLUME = {87},
      YEAR = {2023},
    NUMBER = {1},
     PAGES = {Paper No. 13, 32},
      ISSN = {0303-6812,1432-1416},
   MRCLASS = {92D40 (60B20 60G70)},
  MRNUMBER = {4606239},
       DOI = {10.1007/s00285-023-01939-z},
       URL = {https://doi.org/10.1007/s00285-023-01939-z},
}

@article {Clenet-El-Najim,
    AUTHOR = {Clenet, Maxime and El Ferchichi, Hafedh and Najim, Jamal},
     TITLE = {Equilibrium in a large {L}otka-{V}olterra system with pairwise
              correlated interactions},
   JOURNAL = {Stochastic Process. Appl.},
  FJOURNAL = {Stochastic Processes and their Applications},
    VOLUME = {153},
      YEAR = {2022},
     PAGES = {423--444},
      ISSN = {0304-4149,1879-209X},
   MRCLASS = {34F05 (15B52 60B20 60G70 92D25)},
  MRNUMBER = {4486707},
MRREVIEWER = {Javier\ L\'opez-de-la-Cruz},
       DOI = {10.1016/j.spa.2022.08.004},
       URL = {https://doi.org/10.1016/j.spa.2022.08.004},
}

@article {Bizeul-Najim,
    AUTHOR = {Bizeul, Pierre and Najim, Jamal},
     TITLE = {Positive solutions for large random linear systems},
   JOURNAL = {Proc. Amer. Math. Soc.},
  FJOURNAL = {Proceedings of the American Mathematical Society},
    VOLUME = {149},
      YEAR = {2021},
    NUMBER = {6},
     PAGES = {2333--2348},
      ISSN = {0002-9939,1088-6826},
   MRCLASS = {15B52 (60B20 60G70 92D40)},
  MRNUMBER = {4246786},
MRREVIEWER = {Vladislav\ Kargin},
       DOI = {10.1090/proc/15383},
       URL = {https://doi.org/10.1090/proc/15383},
}

@article {Ad-Ta-To-1,
    AUTHOR = {Takeuchi, Yasuhiro and Adachi, Norihiko and Tokumaru,
              Hidekatsu},
     TITLE = {The stability of generalized {V}olterra equations},
   JOURNAL = {J. Math. Anal. Appl.},
  FJOURNAL = {Journal of Mathematical Analysis and Applications},
    VOLUME = {62},
      YEAR = {1978},
    NUMBER = {3},
     PAGES = {453--473},
      ISSN = {0022-247X},
   MRCLASS = {34D05 (92A05)},
  MRNUMBER = {477317},
MRREVIEWER = {A.\ F.\ Iz\'e},
       DOI = {10.1016/0022-247X(78)90139-7},
       URL = {https://doi.org/10.1016/0022-247X(78)90139-7},
}

@article {Ad-Ta-To-2,
    AUTHOR = {Takeuchi, Yasuhiro and Adachi, Norihiko and Tokumaru,
              Hidekatsu},
     TITLE = {Global stability of ecosystems of the generalized {V}olterra
              type},
   JOURNAL = {Math. Biosci.},
  FJOURNAL = {Mathematical Biosciences},
    VOLUME = {42},
      YEAR = {1978},
    NUMBER = {1-2},
     PAGES = {119--136},
      ISSN = {0025-5564,1879-3134},
   MRCLASS = {92A15},
  MRNUMBER = {529099},
       DOI = {10.1016/0025-5564(78)90010-X},
       URL = {https://doi.org/10.1016/0025-5564(78)90010-X},
}

@article {Ad-Ta-1,
    AUTHOR = {Takeuchi, Yasuhiro and Adachi, Norihiko},
     TITLE = {The existence of globally stable equilibria of ecosystems of
              the generalized {V}olterra type},
   JOURNAL = {J. Math. Biol.},
  FJOURNAL = {Journal of Mathematical Biology},
    VOLUME = {10},
      YEAR = {1980},
    NUMBER = {4},
     PAGES = {401--415},
      ISSN = {0303-6812,1432-1416},
   MRCLASS = {92A17 (34D05 90C33)},
  MRNUMBER = {602257},
MRREVIEWER = {J.\ M.\ Cushing},
       DOI = {10.1007/BF00276098},
       URL = {https://doi.org/10.1007/BF00276098},
}

@article {Ad-Ta-2,
    AUTHOR = {Takeuchi, Yasuhiro and Adachi, Norihiko},
     TITLE = {Existence of stable equilibrium point for dynamical systems of
              {V}olterra type},
   JOURNAL = {J. Math. Anal. Appl.},
  FJOURNAL = {Journal of Mathematical Analysis and Applications},
    VOLUME = {79},
      YEAR = {1981},
    NUMBER = {1},
     PAGES = {141--162},
      ISSN = {0022-247X},
   MRCLASS = {34D20 (90A30 92A15)},
  MRNUMBER = {603382},
MRREVIEWER = {A.\ F.\ Iz\'e},
       DOI = {10.1016/0022-247X(81)90015-9},
       URL = {https://doi.org/10.1016/0022-247X(81)90015-9},
}

@article {Ad-Ta-3,
    AUTHOR = {Takeuchi, Yasuhiro and Adachi, Norihiko},
     TITLE = {Stable equilibrium of systems of generalized {V}olterra type},
   JOURNAL = {J. Math. Anal. Appl.},
  FJOURNAL = {Journal of Mathematical Analysis and Applications},
    VOLUME = {88},
      YEAR = {1982},
    NUMBER = {1},
     PAGES = {157--169},
      ISSN = {0022-247X},
   MRCLASS = {34D20 (92A15 92A17)},
  MRNUMBER = {661409},
MRREVIEWER = {Harlan\ W.\ Stech},
       DOI = {10.1016/0022-247X(82)90183-4},
       URL = {https://doi.org/10.1016/0022-247X(82)90183-4},
}

@article{Goh,
 ISSN = {00030147, 15375323},
 URL = {http://www.jstor.org/stable/2459985},
 author = {B. S. Goh},
 journal = {The American Naturalist},
 number = {977},
 pages = {135--143},
 publisher = {[University of Chicago Press, American Society of Naturalists]},
 title = {Global Stability in Many-Species Systems},
 urldate = {2025-12-15},
 volume = {111},
 year = {1977}
}

@article {BC-Henon,
    AUTHOR = {Benedicks, Michael and Carleson, Lennart},
     TITLE = {The dynamics of the {H}\'{e}non map},
   JOURNAL = {Ann. of Math. (2)},
  FJOURNAL = {Annals of Mathematics. Second Series},
    VOLUME = {133},
      YEAR = {1991},
    NUMBER = {1},
     PAGES = {73--169},
      ISSN = {0003-486X,1939-8980},
   MRCLASS = {58F12 (58F13)},
  MRNUMBER = {1087346},
MRREVIEWER = {Feliks\ Przytycki},
       DOI = {10.2307/2944326},
       URL = {https://doi.org/10.2307/2944326},
}

@article {Tucker,
    AUTHOR = {Tucker, Warwick},
     TITLE = {A rigorous {ODE} solver and {S}male's 14th problem},
   JOURNAL = {Found. Comput. Math.},
  FJOURNAL = {Foundations of Computational Mathematics. The Journal of the
              Society for the Foundations of Computational Mathematics},
    VOLUME = {2},
      YEAR = {2002},
    NUMBER = {1},
     PAGES = {53--117},
      ISSN = {1615-3375,1615-3383},
   MRCLASS = {37D45 (37-04 37C70 37M99 76R99)},
  MRNUMBER = {1870856},
MRREVIEWER = {Maria\ Jos\'{e}\ Pacifico},
       DOI = {10.1007/s002080010018},
       URL = {https://doi.org/10.1007/s002080010018},
}

@article {Arn-Cou-Pey-Tre,
    AUTHOR = {Arneodo, A. and Coullet, P. and Peyraud, J. and Tresser, C.},
     TITLE = {Strange attractors in {V}olterra equations for species in
              competition},
   JOURNAL = {J. Math. Biol.},
  FJOURNAL = {Journal of Mathematical Biology},
    VOLUME = {14},
      YEAR = {1982},
    NUMBER = {2},
     PAGES = {153--157},
      ISSN = {0303-6812,1432-1416},
   MRCLASS = {92A15 (34C28 58F15 58F40)},
  MRNUMBER = {667795},
MRREVIEWER = {O.\ E.\ R\"ossler},
       DOI = {10.1007/BF01832841},
       URL = {https://doi.org/10.1007/BF01832841},
}

@article {Arn-Cou-Tre,
    AUTHOR = {Arneodo, A. and Coullet, P. and Tresser, C.},
     TITLE = {Occurrence of strange attractors in three-dimensional
              {V}olterra equations},
   JOURNAL = {Phys. Lett. A},
  FJOURNAL = {Physics Letters. A},
    VOLUME = {79},
      YEAR = {1980},
    NUMBER = {4},
     PAGES = {259--263},
      ISSN = {0375-9601,1873-2429},
   MRCLASS = {58F13},
  MRNUMBER = {590579},
MRREVIEWER = {Michael\ J.\ Field},
       DOI = {10.1016/0375-9601(80)90342-4},
       URL = {https://doi.org/10.1016/0375-9601(80)90342-4},
}

@article {Hirsch-II,
    AUTHOR = {Hirsch, Morris W.},
     TITLE = {Systems of differential equations that are competitive or
              cooperative. {II}. {C}onvergence almost everywhere},
   JOURNAL = {SIAM J. Math. Anal.},
  FJOURNAL = {SIAM Journal on Mathematical Analysis},
    VOLUME = {16},
      YEAR = {1985},
    NUMBER = {3},
     PAGES = {423--439},
      ISSN = {0036-1410},
   MRCLASS = {58F40 (34C05 92A15)},
  MRNUMBER = {783970},
MRREVIEWER = {Carl\ P.\ Simon},
       DOI = {10.1137/0516030},
       URL = {https://doi.org/10.1137/0516030},
}

@article {Hirsch-III,
    AUTHOR = {Hirsch, Morris W.},
     TITLE = {Systems of differential equations which are competitive or
              cooperative. {III}. {C}ompeting species},
   JOURNAL = {Nonlinearity},
  FJOURNAL = {Nonlinearity},
    VOLUME = {1},
      YEAR = {1988},
    NUMBER = {1},
     PAGES = {51--71},
      ISSN = {0951-7715,1361-6544},
   MRCLASS = {58F10 (34C30 92A15)},
  MRNUMBER = {928948},
MRREVIEWER = {Carl\ P.\ Simon},
       URL = {http://stacks.iop.org/0951-7715/1/51},
}

@article {Smale-examples,
    AUTHOR = {Smale, S.},
     TITLE = {On the differential equations of species in competition},
   JOURNAL = {J. Math. Biol.},
  FJOURNAL = {Journal of Mathematical Biology},
    VOLUME = {3},
      YEAR = {1976},
    NUMBER = {1},
     PAGES = {5--7},
      ISSN = {0303-6812,1432-1416},
   MRCLASS = {92A15 (58F05)},
  MRNUMBER = {406579},
MRREVIEWER = {M.\ W.\ Green},
       DOI = {10.1007/BF00307854},
       URL = {https://doi.org/10.1007/BF00307854},
}

@article {Vano-etal,
    AUTHOR = {Vano, J. A. and Wildenberg, J. C. and Anderson, M. B. and
              Noel, J. K. and Sprott, J. C.},
     TITLE = {Chaos in low-dimensional {L}otka-{V}olterra models of
              competition},
   JOURNAL = {Nonlinearity},
  FJOURNAL = {Nonlinearity},
    VOLUME = {19},
      YEAR = {2006},
    NUMBER = {10},
     PAGES = {2391--2404},
      ISSN = {0951-7715,1361-6544},
   MRCLASS = {34C28 (92D25)},
  MRNUMBER = {2260268},
MRREVIEWER = {Xu-Sheng\ Zhang},
       DOI = {10.1088/0951-7715/19/10/006},
       URL = {https://doi.org/10.1088/0951-7715/19/10/006},
}

@website{WHO_AMR,
    author = {World Health Organisation},
    year = {2023},
    title = {Antimicrobial resistance},
    subtitle = {Fact Sheet},
    url = {https://www.who.int/news-room/fact-sheets/detail/antimicrobial-resistance},
    note = {Accessed = 2025-02-18} 
}

@article{ESCMID_Stew,
    author ={Dyar, O.J. AND Beović, B. AND Pulcini, C. AND Tacconelli, E. AND  Hulscher, M. AND Cookson, B. AND Ashiru-Oredope, D. AND Barcs, I. AND Blix, H.S. AND Buyle, F. AND Chowers, M. Čižman, M. AND Cookson, B. AND Del Pozo, J.L. AND Deptula, A. AND Dumpis, U. AND Florea, D. AND van de Garde, E. AND Geffen, Y. AND Giske, C.G. AND Grau, S. AND Hajdú, E. AND Hell, M. AND Hondo, Ł. AND Hussein, K. AND Huttner, B. AND Kern, W. AND Kernéis, S. AND Knepper, V. AND Kofteridis, D. AND Kostyanev, T. AND Kuijper, E. AND Lebanova, H. AND Lewis, R. AND Cordina, C.M. AND Matulionyte, R. AND Maurer, F. AND Messiaen, P. AND Miciuleviciene, J. AND Mrhar, A. AND Nabuurs-Franssen, M. AND Naesens, R. AND Oxacelay, C. AND Pagani, L. AND Paño-Pardo, J.R. AND Paul, M. AND Petrikkos, G. AND Pluess-Suard, C. AND Popescu, G.A. AND Porsche, U. AND Prins, J. AND Pulcini, C. AND Rello, J. AND Rodríguez-Baño, J. AND Rossolini, G.M. AND Salzberger, B. AND Seme, K. AND Simonsen, G.S. AND Sînziana, M. AND  Skovgaard, S. AND Smith, I. AND Sönsken, U. AND Soriano, A. AND Sviestiņa, I. AND Szilagyi, E. AND Tängdén, T. AND Tattevin, P. AND Tsioutis, C. AND Vilde, A. AND Wanke-Rytt, M. AND Wechsler-Fördös, A. AND Zarb, P.},
    title = {ESCMID generic competencies in antimicrobial prescribing and stewardship: towards a European consensus},
    journal = {Clinical Microbiology and Infection},
    volume = 25,
    number = 1,
    pages = 13-19,
    year = 2019,
    DOI = {https://doi.org/10.1016/j.cmi.2018.09.022},
    }

@report{WHO_Stew,
author = {World Health Organization},
title = {Antimicrobial stewardship programmes in health-care facilities in low- and
middle-income countries. A practical toolkit},
isbn = {978-92-4-151548-1},
year = {2019}
}

@article{LID_Popgen,
 title = {Population biological principles of drug-resistance evolution in infectious diseases},
author = {zur Wiesch, Pia Abel and Kouyos, Roger and Engelstädter, Jan and Regoes, Roland R and Bonhoeffer, Sebastian},
year = {2011},
doi ={doi: 10.1016/S1473-3099(10)70264-4},
journal = {The Lancet Infectious Diseases},
pages = {236- 247},
volume = {11},
issue = {3},
url = {https://doi.org/10.1016/S1473-3099(10)70264-4}
}

@report{EU_Prudent,
title = {EU Guidelines for the prudent use of antimicrobials in human health},
author = {European Commission},
journal = {Official Journal of the European Union},
year = {2017},
issue = {C 212},
url = {https://eur-lex.europa.eu/legal-content/EN/ALL/?uri=CELEX:52017XC0701(01)}
}

@website{WHO_Spectrum,
author = {World Health Organization},
year = {2019},
url = {https://www.who.int/news/item/18-06-2019-in-the-face-of-slow-progress-who-offers-a-new-tool-and-sets-a-target-to-accelerate-action-against-antimicrobial-resistance},
title = {In the face of slow progress, WHO offers a new tool and sets a target to accelerate action against antimicrobial resistance},
note = {acessed 2025-02-21}
}

@website{EU_Amount,
title = {European Health Union: EU steps up the fight against antimicrobial resistance},
author = {European Commission},
year = {2023},
url = {https://ec.europa.eu/commission/presscorner/detail/en/ip_23_3187},
note = {accessed 2025-02-21}
}

@website{WHO_Choice,
title = {AWaRe classification of antibiotics for evaluation and monitoring of use, 2023},
author = {World Health Organization},
year = {2023},
url = {https://www.who.int/publications/i/item/WHO-MHP-HPS-EML-2023.04},
note = {accessed 2025-2-21}
}

@article{NatRevTarget,
author = {Theuretzbacher, Ursula and Blasco, Benjamin and Duffey, Maëlle and Piddock, Laura J. V.},
year = {2023},
title = {Unrealized targets in the discovery of antibiotics for Gram-negative bacterial infections},
journal = {Nature Reviews Drug Discovery},
pages = {957-975},
volume = {22},
issue = {12},
url = {https://doi.org/10.1038/s41573-023-00791-6}
}

@article{Conc_Sites,
title = {Bystander Selection for Antimicrobial Resistance: Implications for Patient Health},
author = {Morley, Valerie J. and Woods, Robert J. and Read, Andrew F.},
year ={2019},
journal = {Trends in Microbiology},
pages = {864-877},
volume = {27},
issue = {10},
url ={https://doi.org/10.1016/j.tim.2019.06.004}
}

@article{Tedijanto,
author = {Christine Tedijanto  and Scott W. Olesen  and Yonatan H. Grad  and Marc Lipsitch },
title = {Estimating the proportion of bystander selection for antibiotic resistance among potentially pathogenic bacterial flora},
journal = {Proceedings of the National Academy of Sciences},
volume = {115},
number = {51},
pages = {E11988-E11995},
year = {2018},
URL = {https://www.pnas.org/doi/abs/10.1073/pnas.1810840115}
}

@article{Palm_Fost,
author = {Jacob D. Palmer  and Kevin R. Foster },
title = {Bacterial species rarely work together},
journal = {Science},
volume = {376},
number = {6593},
pages = {581-582},
year = {2022},
URL = {https://www.science.org/doi/abs/10.1126/science.abn5093}
}

@article{Day_Read,
title = {Is selection relevant in the evolutionary emergence of drug resistance?},
author = {Day, Troy and Huijben, Silvie and Read, Andrew F.},
year = {2015},
journal = {Trends in Microbiology},
pages = {126-133},
volume = {23},
issue = {3},
url = {https://doi.org/10.1016/j.tim.2015.01.005}
}

@article{Coyte_Foster,
author = {Katharine Z. Coyte  and Jonas Schluter  and Kevin R. Foster },
title = {The ecology of the microbiome: Networks, competition, and stability},
journal = {Science},
volume = {350},
number = {6261},
pages = {663-666},
year = {2015},
doi = {10.1126/science.aad2602},
URL = {https://www.science.org/doi/abs/10.1126/science.aad2602}
}

@article{gLV_Rev,
title = {Microbial communities as dynamical systems},
journal = {Current Opinion in Microbiology},
volume = {44},
pages = {41-49},
year = {2018},
note = {Microbiota},
issn = {1369-5274},
doi = {https://doi.org/10.1016/j.mib.2018.07.004},
url = {https://www.sciencedirect.com/science/article/pii/S1369527418300092},
author = {Didier Gonze and Katharine Z Coyte and Leo Lahti and Karoline Faust}
}

@article{Day_Dose,
    doi = {10.1371/journal.pcbi.1004689},
    author = {Day, Troy AND Read, Andrew F.},
    journal = {PLOS Computational Biology},
    publisher = {Public Library of Science},
    title = {Does High-Dose Antimicrobial Chemotherapy Prevent the Evolution of Resistance?},
    year = {2016},
    month = {01},
    volume = {12},
    url = {https://doi.org/10.1371/journal.pcbi.1004689},
    pages = {1-20}
}

@article{Davis_LV,
  title={Methods of quantifying interactions among populations using Lotka-Volterra models},
  author={Davis, Jacob D and Oliven{\c{c}}a, Daniel V and Brown, Sam P and Voit, Eberhard O},
  journal={Frontiers in Systems Biology},
  volume={2},
  pages={1021897},
  year={2022},
  publisher={Frontiers Media SA}
}

@article{Sundius,
  title={Defining the benefits of antibiotic resistance in commensals and the scope for resistance optimization},
  author={Wollein Waldetoft, Kristofer and Sundius, Sarah and Kuske, Rachel and Brown, Sam P},
  journal={Mbio},
  volume={14},
  number={1},
  pages={e01349--22},
  year={2023},
  publisher={American Society for Microbiology 1752 N St., NW, Washington, DC}
}

@article{SciTransl_Sa,
  title={Commensal antimicrobial resistance mediates microbiome resilience to antibiotic disruption},
  author={Bhattarai, Shakti K and Du, Muxue and Zeamer, Abigail L and M. Morzfeld, Benedikt and Kellogg, Tasia D and Firat, Kaya and Benjamin, Anna and Bean, James M and Zimmerman, Matthew and Mardi, Gertrude and others},
  journal={Science translational medicine},
  volume={16},
  number={730},
  pages={eadi9711},
  year={2024},
  publisher={American Association for the Advancement of Science}
}

@article{Faith_stable,
author = {Jeremiah J. Faith  and Janaki L. Guruge  and Mark Charbonneau  and Sathish Subramanian  and Henning Seedorf  and Andrew L. Goodman  and Jose C. Clemente  and Rob Knight  and Andrew C. Heath  and Rudolph L. Leibel  and Michael Rosenbaum  and Jeffrey I. Gordon },
title = {The Long-Term Stability of the Human Gut Microbiota},
journal = {Science},
volume = {341},
number = {6141},
pages = {1237439},
year = {2013},
doi = {10.1126/science.1237439},
URL = {https://www.science.org/doi/abs/10.1126/science.1237439}}

@article{AspenbergMartensWaldetoft,
author = {Aspenberg, Magnus and Martens, Erik A. and  Waldetoft, Kristofer Wollein},
title = {Necessary condition for stable equilibria in the Lotka-Volterra equations},
journal = {arXiv preprint, arXiv:2512.13347},
volume = {},
number = {},
pages = {},
year = {2025},
doi = {},
URL = {}
}

\appendix
\section*{Appendix\label{app:appendix}}
\section{Numerical frequency sampling of attractors\label{app:num_algorithm}}

We briefly explain the algorithm used to obtain statistics for the frequency of (stable) attractors that occur for \eqref{eq:model}, as discussed in Sec.~\ref{sec:frequency_equilibria}.
\begin{enumerate}
    
    \item We randomly draw interactions with $|\alpha_{ij}|\leq 1$ (i.i.d.). For each realization, we computed the trajectories using initial conditions sampled i.i.d. uniformly from the positive domain $x_i(0)\in[0,L] $ for all $i=1,\ldots, n$. Note that trajectories with such initial conditions remain forever positive. We sampled $n_i$ initial conditions for $n_r$ realization of the interaction matrices.
    
    \item  The asymptotic behavior of each trajectory is classified into one of three (main) distinct behaviors: (stable) equilibrium, oscillatory behavior, divergence (i.e., $x_k(t) \to \infty$ for some $k$).
    We compute trajectories for a time of $T$ units. Transient behavior is assumed to subside after $T_t$ time units, and for $t\in \mathcal{T}:=[0.5T,T]$ we measured the 'amplitude' $A:=\max_{t\mathcal{T}}||x_(t)||- \min_{t\in \mathcal{T}}||x(t)||$ where $||x(t)||:=(\sum_{j=1}^n x_j(t)^2)^{1/2}$ is the Euclidean norm of the trajectory at time $t$. We classified the asymptotic behavior as follows: 
    \begin{enumerate}
        \item stable equilibrium if $A<\epsilon_c$; 
        \item oscillatory if $\epsilon_c\leq A \leq\epsilon_d$ (i.e., bounded with non-zero amplitude: periodic, quasiperiodic, or chaotic); 
        \item divergent if $A>\epsilon_d$.
    \end{enumerate}
    Equilibria are further classified into (sub-)feasible equilibria of order $k\geq 0$. The presence of multiple stable equilibria was tested by comparing $||x(t)||_2$ at time $T$, rounded to two decimals, overall sampled initial conditions for each realization. 
    
\end{enumerate}

\section{Mathematical analysis\label{app:math_analysis}}

In this section, we give the full details of the mathematical model. First, recall the Lotka-Volterra equations, 
\begin{equation} \label{LV}
   x_i'(t)= r_i x_i(t) \left(1 - \frac{1}{K_i} \sum_{j =1}^n \al_{ij} x_j(t) \right),
\end{equation}
where $i = 1, \ldots , n$. The $K_i > 0$ are the carrying capacities and $r_i > 0$ the intrinsic growth rates. The numbers $\al_{ij}$, $i,j=1, \ldots, n$ form an $n \times n$-matrix $A$ with elements $(A)_{ij} = \al_{ij}$. Set $\al_{ii} = 1$, for all $i$. We also put $\oli{x} = (x_1, \ldots, x_n)$, and 
\[
F_i(x_1, \ldots, x_n) = 1 - \frac{1}{K_i} \sum_{j =1}^n \al_{ij} x_j.
\]

We are only interested in solutions in the {\em first orthant} 
\[
\R_+^n = \{ \oli{x} = (x_1, \ldots, x_n): x_j \geq 0, \text{ for all } 1 \leq j \leq n \}.
\]
We also speak of the {\em strict first orthant}, when all $x_j \geq 0$ are replaced with $x_j > 0$. 

If we add the effect of antibiotics as additional linear term for each species $x_i=x_i(t)$ we get the following extended LV-equations,
\[
   x_i' = r_i x_i \left(1 - \frac{1}{K_i} \sum_{j =1}^n \al_{ij} x_j \right) - x_i l_i,
\]
where $l_i$ is the effect of the drug on taxon $i$. An {\em equilibrium} to the LV-system (\ref{LV}) is a constant solution $\oli{x}(t) = \oli{x}^* = (x_1^*, \ldots, x_n^*) \in \R_+^n$. Obviously, an equilibrium satisfies $x_i = 0$ or $F_i(\oli{x}) = 0$ for each $i$. If we first put $l_i = 0$ for all $i$, consider the nullclines $F_i(\oli{x}) = 0$, or, equivalently,
\[
P_i: \al_{i1} x_1 + \al_{i2} x_2 + \ldots + \al_{in} x_n = K_i. 
\]
We can view $P_i$ as hyperplanes in $\R^n$. We will assume that the normal vectors of the hyperplanes are linearly independent, or, equivalently, the matrix $A$ is invertible. This implies that there is a unique solution to the system linear equations
\begin{equation} \label{feasible}
A\oli{x} = \oli{K},
\end{equation}
where $\oli{K} = (K_1, K_2, \ldots, K_n)$. Of course this solution may not belong to the first orthant. An equilibrium is a {\em feasible equilibrium} if all $x_j > 0$, $ 1 \leq j \leq n$. So a feasible equilibrium exists if and only if the solution to (\ref{feasible}) belongs to the strict first orthant. Let $I \subset \{1, \ldots, n\}$ and put 
\[
\R_I^n = \{ (x_1, \ldots, x_n) \in \R^n : x_j \geq 0 \text{  for } j \in I \text{ and } x_j > 0 \text{ for } j \notin I \}.
\]
An equilibrium point $\oli{x}^* = (x_1^*, \ldots, x_n^*) \in \R_+^n$ is {\em stable} if there exists a neighbourhood $U$ around $\oli{x}^*$ such that for any starting point $\oli{y}$, the solution $\oli{x}(t)$ to the LV-system, with initial condition $\oli{x}(0) = \oli{y}$ has the property that $\oli{x}(t) \in U$, for all $t \geq 0$ and $\oli{x}(t) \raw \oli{x}^*$, as $t \raw \infty$.

From a probabilistic viewpoint, the computer experiments in Sect. \ref{sec:frequency_equilibria} and in Appendix \ref{app:num_algorithm} and quite recent results show that, under slightly different conditions on the matrix $A$, the probability of the existence of a feasible equilibrium is quite low for large $n$ unless one imposes certain conditions on $A$, (see \cite{Clenet-Massol-Najim, Clenet-El-Najim, Bizeul-Najim}). We will consider invertible matrices $A$ that have a stable feasible equilibrium and satisfy $|\al_{ij}| < 1$, for $i \neq j$, and $\al_{ii} = 1$, for all $i$. The set of such $n \times n$-matrices is denoted by $\GG_n$. Let $\GG = \sup_n \GG_n$. If the system does have a stable equilibrium which is not feasible, one can reason as follows. Suppose that those species $x_i$ that went extinct are those with indices $i \in I$. Then consider the following subspace in $\R^n$:
\[
\XX_I^n = \{ (x_1, \ldots,x_n) \in \R^n: x_i = 0, \text{ for } i \in I \}. 
\]
It is easy to see that an equilibrium $\oli{x}^* \in \R_+^n$ which also belongs to $\XX_I^n$ (i.e. a subfeasible equilibrium) is stable in $\R^n$ if and only if it is stable in $\XX_I^n$ considered as a feasible equilibrium, and also that 
\[
F_i(\oli{x}^*) < 0, \text{ for all }  i \in I. 
\]
In a suitable neighbourhood $U$ of $\oli{x}^*$ where the above inequality holds, we can write the solutions as a direct sum
\[
\oli{x}(t) = \oli{x}_{I}(t) \oplus \oli{x}_{I^c}(t),
\]
where we simply have
\[
\oli{x}_I(t) \in \XX_I^n \text{  and   }  \oli{x}_{I^c}(t) \in (\XX_I^n)^{\perp},
\]
where $(\XX_I^n)^{\perp}$ is the orthogonal complement to $\XX_I^n$. 
Since $F_i(\oli{x}(t)) < 0$ in $U$, we have $x_i'(t) < 0$, and hence $\oli{x}_{I^c}(t) \raw 0$, as $t\raw \infty$. Moreover, since $\oli{x}^*$ is stable also in $\XX_I^n$, we have $\oli{x}_I(t) \raw \oli{x}^*$. 

In a series of papers, Goh \cite{Goh} and Adachi, Takeuchi and Tokumaru \cite{Ad-Ta-To-1, Ad-Ta-To-2, Ad-Ta-1, Ad-Ta-2, Ad-Ta-3} gave some fundamental results on the stability of generalized Lotka-Volterra equations. 
\begin{Def}
    We say that the matrix $A$ belongs to the class $\SS$ if there is a diagonal matrix $D$ such that 
    \[
    DA + A^TD
    \]
    is positive definite. 
\end{Def}

This class of matrices is an example for which the LV-system has one unique (global) equilibrium. In \cite{Goh} and \cite{Ad-Ta-To-1} the following is proved.
\begin{Thm}
    If $A$ belongs to $\SS$, then there is a unique globally stable equilibrium.
\end{Thm}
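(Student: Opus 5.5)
The plan is the classical Lyapunov argument of Goh and Takeuchi--Adachi--Tokumaru. It has two parts: first produce a candidate equilibrium $\oli{x}^*$ that is \emph{saturated} (non-invadable), then show that a weighted Volterra-type Lyapunov function decreases along every trajectory starting in the strict first orthant. I take $D=\mathrm{diag}(d_1,\ldots,d_n)$ with all $d_i>0$, which is the standard convention for the class $\SS$.

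The antibiotic terms are absorbed by rewriting the system as
\[
x_i'=\frac{r_i}{K_i}\,x_i\bigl(\tilde K_i-(A\oli{x})_i\bigr),\qquad \tilde K_i=K_i\Bigl(1-\frac{l_i}{r_i}\Bigr).
\]
So it suffices to treat the system (\ref{LV}) with a general vector $\tilde K$, which may have entries of either sign.

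\textbf{Step 1 (existence and uniqueness of a saturated equilibrium).} I look for $\oli{x}^*\ge 0$ satisfying three conditions:
\begin{itemize}
\item $w:=A\oli{x}^*-\tilde K\ge 0$;
\item $x_i^*w_i=0$ for all $i$;
\item equivalently, $F_i(\oli{x}^*)=0$ where $x_i^*>0$ and $F_i(\oli{x}^*)\le 0$ where $x_i^*=0$.
\end{itemize}
This is a linear complementarity problem. I will first show that $A\in\SS$ implies $A$ is a $P$-matrix, i.e.\ all its principal minors are positive.

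To see this, note that for any principal index set $J$, the submatrix $A_J$ satisfies $D_JA_J+A_J^TD_J>0$, so $A_J\in\SS$ as well. Hence it suffices to show $\det(B)>0$ for every $B\in\SS$. Take $Bv=\mu v$ with $v\neq 0$. Then $\mathrm{Re}\,\mu\,(v^*Dv)=\tfrac12 v^*(DB+B^TD)v>0$, so every eigenvalue of $B$ has positive real part. Real eigenvalues are therefore positive and complex eigenvalues come in conjugate pairs, which gives $\det(B)>0$.

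The classical theorem of Samelson--Thrall--Wesler (Murty) then gives, for every right-hand side $\tilde K$, a unique solution of the complementarity problem. This yields $\oli{x}^*$, and the corresponding index set $I=\{i:x_i^*=0\}$.

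\textbf{Step 2 (Lyapunov function).} Put $c_i=d_iK_i/r_i>0$ and define, on the strict first orthant,
\[
V(\oli{x})=\sum_{i\notin I} c_i\Bigl(x_i-x_i^*-x_i^*\ln\frac{x_i}{x_i^*}\Bigr)+\sum_{i\in I}c_i x_i .
\]
Using $\tilde K-A\oli{x}=-(w+A(\oli{x}-\oli{x}^*))$ and $x_i^*w_i=0$, a direct computation gives
\[
\dot V=-\sum_{i\in I} d_i x_i w_i-(\oli{x}-\oli{x}^*)^TDA(\oli{x}-\oli{x}^*)
\le -\tfrac12(\oli{x}-\oli{x}^*)^T(DA+A^TD)(\oli{x}-\oli{x}^*).
\]
The right-hand side is strictly negative unless $\oli{x}=\oli{x}^*$.

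\textbf{Step 3 (conclusion).} The function $V$ is nonnegative and vanishes only at $\oli{x}^*$. Its sublevel sets are bounded and stay away from the faces $\{x_i=0\}$ for $i\notin I$. Hence every trajectory with positive initial data is bounded, its $\omega$-limit set lies in $\{\dot V=0\}=\{\oli{x}^*\}$ by LaSalle's principle, and $\oli{x}^*$ is stable in the sense defined above. Uniqueness follows immediately: any other stable equilibrium would attract an open set of positive initial data, contradicting global convergence to $\oli{x}^*$.

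\textbf{Expected obstacle.} I expect the main difficulty to be Step 1, specifically the passage from $A\in\SS$ to the $P$-matrix property and the use of LCP theory to get a saturated equilibrium for arbitrary $\tilde K$. Once $\oli{x}^*$ is in hand, Steps 2--3 are routine. A secondary technical point is justifying LaSalle near the boundary, since $V$ is only defined on the open orthant. I would handle this through properness of $V$ on sublevel sets, together with the linear growth of $V$ in the coordinates $i\in I$.
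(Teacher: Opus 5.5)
Your argument is correct, and it is essentially the classical Takeuchi--Adachi--Tokumaru proof (Goh's Volterra Lyapunov function, extended by linear terms for the extinct indices in $I$), which is exactly what the paper relies on: it cites Goh and Takeuchi--Adachi--Tokumaru and gives no proof of its own, whereas you supply the $P$-matrix/Samelson--Thrall--Wesler step for the saturated equilibrium and the LaSalle step for global convergence. Your convention $d_i>0$ is not an extra hypothesis here, because $\al_{ii}=1$ gives $(DA+A^TD)_{ii}=2d_i$, so positive definiteness already forces every $d_i>0$.
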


The converse of the above theorem is not true, and although there are many partial results, it seems that the question of sufficient and necessary conditions for a (global) equilibrium is not completely settled. In \cite{AspenbergMartensWaldetoft} we give a necessary condition for the existence of an equilibrium. This condition is, in some sense, quite close to the sufficiency condition in the above theorem.  

If we let $\SS_n$ be the set of all ($n \times n$)-matrices in $\SS$, then, in particular, we have $\SS_n \subset \GG_n$. In the papers by Adachi, Takeuchi and Tokumaru, the following interesting class of matrices is also studied: 
\begin{Def}
We say that a quadratic matrix $A$ is called a $P$-matrix if all its principal minors are positive. The set of all $P$-matrices of order $n$ is denoted by $\PP_n$.
\end{Def}

The $P$-matrix condition is not necessary for stability, but the necessary condition for stability is quite close to the $P$-matrix condition (see \cite{AspenbergMartensWaldetoft}). If all off-diagonal elements of $A$ are non-positive, then Adachi, Takeuchi and Tokumaru prove:
\begin{Thm}
    If all $\al_{ij} \leq 0$ for $ i\neq j$, then there is a unique stable equilibrium if and only if $A$ is a $P$-matrix. 
\end{Thm}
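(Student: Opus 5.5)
The plan is to translate the statement into the language of $M$-matrices. Since $\al_{ii}=1$ and $\al_{ij}\leq 0$ for $i\neq j$, we can write $A = I - B$ with $B\geq 0$ entrywise and zero diagonal. We will use the standard equivalences for such $Z$-matrices: $A$ is a $P$-matrix $\iff$ $A$ is a nonsingular $M$-matrix $\iff$ $\rho(B)<1$ $\iff$ $A^{-1}\geq 0$ $\iff$ there is a positive diagonal $D$ with $DA+A^TD$ positive definite. These are classical (Fiedler--Pt\'ak, Berman--Plemmons) and we will quote them rather than reprove them. The two directions then split cleanly. Throughout we read ``unique stable equilibrium'' as ``unique globally stable equilibrium'' attracting all positive initial data.

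\emph{Sufficiency.} Suppose $A$ is a $P$-matrix. By the equivalences above there is a positive diagonal $D$ with $DA+A^TD$ positive definite, that is, $A\in\SS$. The theorem of Goh and Takeuchi--Adachi--Tokumaru quoted above then gives a unique globally stable equilibrium. As a bonus, it is feasible: $A^{-1}\geq 0$ and a nonsingular $M$-matrix has inverse with strictly positive diagonal, so $\oli{x}^*=A^{-1}\oli{K}$ has all entries $>0$.

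\emph{Necessity.} Suppose $A$ is not a $P$-matrix, so $\rho(B)=:\rho\geq 1$. By Perron--Frobenius there is $v\geq 0$, $v\neq 0$, with $Bv=\rho v$; let $S=\mathrm{supp}(v)$. We will show that every trajectory starting in the strict first orthant is unbounded, so no stable equilibrium can exist.

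Take $\oli{x}(0)>0$ and consider
\[
m(t)=\min_{i\in S} \frac{x_i(t)}{v_i}.
\]
Let $i\in S$ be an index attaining the minimum, so that $x_j\geq m v_j$ for all $j\in S$. Since $B\geq 0$ and $x\geq 0$,
\[
(A\oli{x})_i = x_i-\sum_j B_{ij}x_j \leq m v_i - m(Bv)_i = m(1-\rho)v_i \leq 0 .
\]
Hence $F_i(\oli{x})\geq 1$ and $x_i'\geq r_i x_i$. Taking the upper Dini derivative of the minimum gives $D^+m \geq (\min_{i\in S} r_i)\, m$. Since $m(0)>0$, $m$ grows at least exponentially, so $\|\oli{x}(t)\|\raw\infty$. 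This contradicts the existence of a stable equilibrium, whether feasible or subfeasible.

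The main obstacle is making the Dini-derivative argument for $m(t)$ rigorous when several indices tie for the minimum; the standard lemma on derivatives of a finite minimum of $C^1$ functions handles this. A secondary point is pinning down the intended notion of ``unique stable'' (global versus local) in the statement. The necessity argument shows there is no bounded attracting behaviour at all, so it covers either reading.
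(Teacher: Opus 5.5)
Your proof is correct. The paper does not prove this statement itself; it quotes it from Adachi, Takeuchi and Tokumaru, so there is no in-paper argument to compare against, and yours supplies one.

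For sufficiency you reduce to the $\SS$-class theorem the paper already quotes. You do this through the classical fact that, for Z-matrices, being a P-matrix, being a nonsingular M-matrix and being Lyapunov diagonally stable are equivalent. For necessity you give a self-contained Perron-vector comparison.

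The necessity part proves more than is claimed. If $A$ is not a P-matrix, every trajectory starting in the strict first orthant is unbounded. So there is no stable equilibrium of any kind, feasible or subfeasible, under either reading of ``stable''. It also makes the paper's heuristic threshold rigorous in the mean-field facilitative case. With all off-diagonal entries equal to $\al<0$ one has $\rho(B)=|\al|(n-1)$, so $\rho(B)\geq 1$ exactly when $\al\leq\al_s=-1/(n-1)$.

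A few points to tidy:

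(i) Goh's theorem really applies to the effective matrix $\mathrm{diag}(r_i/K_i)A$ of the system, not to $A$ itself. This is harmless. Positive diagonal scaling preserves the nonsingular M-matrix property, and replacing $D$ by $D\,\mathrm{diag}(K_i/r_i)$ preserves positive definiteness of $DA+A^TD$. That form, with $D$ positive, is the one the Lyapunov function needs.

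(ii) When $\rho(B)>1$, solutions typically blow up in finite time. So state $m(t)\geq m(0)e^{\underline{r}t}$, with $\underline{r}=\min_{i\in S}r_i$, on the maximal interval of existence. In either case the trajectory leaves every bounded set. Every neighbourhood of a point of $\R^n_+$ contains strictly positive points, so no equilibrium can satisfy the paper's definition of stability.

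(iii) In the sufficiency direction, say explicitly that global attraction of $\oli{x}^*$ rules out any other stable equilibrium, by the same neighbourhood observation.

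(iv) The tie-breaking issue you flag is handled by the fact that the right derivative of a finite minimum of $C^1$ functions equals the minimum of the derivatives over the active indices. Alternatively, you can avoid Dini derivatives altogether. Take $\mu(t)v$ with $\mu(0)=m(0)$ and $\mu'=\underline{r}\mu$. It is a subsolution of the LV vector field, which is cooperative when all $\al_{ij}\leq 0$ for $i\neq j$. Kamke's comparison theorem then gives $\oli{x}(t)\geq\mu(t)v$ directly.
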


 A principal sub-matrix of an $n \times n$-matrix $A$ is a matrix where we have deleted a set $S \subset \{1, \ldots, n \}$ of rows and columns (the same set for both rows and columns) from $A$. In the following, we denote by $\DD_k(A)$ all principal sub-matrices of $A$ of size $k$. 
Let $D^*$ be the diagonal matrix with elements $(r_1/K_1)x_1^*, \ldots,(r_n/K_n)x_n^*$ along the diagonal (from the top left to the bottom right). We get the following necessary condition for stability (see \cite{AspenbergMartensWaldetoft}). 
\begin{Thm} \label{equi-s}
Suppose that there is a feasible equilibrium, and put $B = D^* A$.  Then if it is stable we must have
   \[
   \sum_{C \in \DD_k} \det(C) > 0
   \] 
for all $k=0, \ldots n$.
\end{Thm}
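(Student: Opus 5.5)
The idea is to linearise the Lotka--Volterra system at the feasible equilibrium, identify the Jacobian with $-B=-D^*A$, and then read the quantities $\sum_{C\in\DD_k}\det(C)$ off the characteristic polynomial of $B$. Here $\DD_k=\DD_k(B)$, i.e. the principal submatrices are taken of $B$; for $k=0$ the sum is the empty minor $1$.

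\textbf{Step 1: the Jacobian.} At a feasible equilibrium $\oli{x}^*$ every $x_i^*>0$, so $F_i(\oli{x}^*)=0$ for all $i$. If antibiotic terms are present, we replace $K_i$ by $K_i(1-l_i/r_i)$; this does not change the structure below. Differentiating $r_ix_iF_i(\oli{x})$ gives
\[
\frac{\partial x_i'}{\partial x_j}\Big|_{\oli{x}^*}=r_iF_i(\oli{x}^*)\,\delta_{ij}-\frac{r_i}{K_i}x_i^*\al_{ij}=-\frac{r_i}{K_i}x_i^*\al_{ij}.
\]
Hence $J=-D^*A=-B$.

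\textbf{Step 2: spectrum of $B$.} Since $\oli{x}^*$ is interior to $\R_+^n$, the definition of stability reduces to ordinary asymptotic stability in a full neighbourhood. This forces every eigenvalue of $J$ to have $\mathrm{Re}\le 0$. Zero cannot be an eigenvalue, because $\det(B)=\det(D^*)\det(A)\neq 0$: $D^*$ has positive diagonal and $A$ is invertible. In line with the standing assumption that the equilibrium is hyperbolic (structurally robust), we then get that all eigenvalues $\mu_1,\ldots,\mu_n$ of $B$ satisfy $\mathrm{Re}\,\mu_i>0$.

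\textbf{Step 3: principal minors as symmetric functions.} We use the classical expansion
\[
\det(tI+B)=\sum_{k=0}^n\Big(\sum_{C\in\DD_k(B)}\det(C)\Big)t^{n-k}=\prod_{i=1}^n(t+\mu_i).
\]
$B$ is real, so its non-real eigenvalues come in conjugate pairs. The product therefore factors into real linear factors $t+\mu$ with $\mu>0$ and real quadratic factors $t^2+2\,\mathrm{Re}(\mu)\,t+|\mu|^2$ whose coefficients are all strictly positive. A product of polynomials with strictly positive coefficients has strictly positive coefficients. So every elementary symmetric function $e_k(\mu_1,\ldots,\mu_n)=\sum_{C\in\DD_k}\det(C)$ is $>0$ for $k=0,\ldots,n$.

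\textbf{Main obstacle.} The delicate point is Step 2 when the linearisation has purely imaginary eigenvalues $\pm i\omega$ while the nonlinear system is still asymptotically stable. Then the argument only yields $\ge 0$, since the quadratic factor $t^2+\omega^2$ has a zero middle coefficient. I would first try to exclude this using the hyperbolicity assumption the paper already adopts. Failing that, I would use a perturbation argument: perturb $A$ within the set of matrices having a stable equilibrium so that the spectrum moves into the open half-plane, and check via a Lyapunov function that strictness survives. That second step is the one I am least sure goes through, and in the non-hyperbolic case I would settle for the non-strict inequality.
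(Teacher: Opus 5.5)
The paper contains no proof of this theorem; it is quoted from \cite{AspenbergMartensWaldetoft}. The statement is exactly the positivity of the coefficients of $\det(tI+B)$, so your route via $J=-D^*A$ and the principal-minor expansion is the natural one, and Steps 1 and 3 are correct.

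\textbf{The gap is in Step 2.} The theorem assumes only that the feasible equilibrium is stable. Stability gives $\mathrm{Re}\,\mu_i\ge 0$, and invertibility of $A$ gives $\mu_i\neq 0$, but nothing more. Hyperbolicity is not a hypothesis, and it cannot be assumed for free: for $n\ge 3$, a weakly attracting focus on a centre manifold, whose linear part has eigenvalues $\pm i\omega$, is asymptotically stable. As written, you prove the strict inequalities only for hyperbolic equilibria and only $\ge 0$ in general. Your perturbation plan cannot repair this, since passing to a limit only preserves $\ge 0$.

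\textbf{The missing ingredient is the trace.} It also shows your diagnosis of the obstacle is off. A factor $t^2+\omega^2$ does not by itself kill a coefficient: for example $(t+\mu)(t^2+\omega^2)=t^3+\mu t^2+\omega^2 t+\mu\omega^2$. A coefficient can vanish only if the whole spectrum is purely imaginary. Here is how to close the argument.
\begin{enumerate}
\item Write $\det(tI+B)=q(t)s(t)$, where $q(t)=\prod_{j=1}^m(t^2+\omega_j^2)$ collects the imaginary pairs and $s$ collects the eigenvalues with $\mathrm{Re}\,\mu>0$. Let $q_{2i}$ and $s_l$ denote the coefficients of $t^{2i}$ in $q$ and of $t^l$ in $s$. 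Then $q_{2i}>0$ for $0\le i\le m$, and, by your own argument, $s_l>0$ for $0\le l\le d$, where $d=n-2m$.
\item Since $\al_{ii}=1$, $\mathrm{tr}(B)=\sum_i (r_i/K_i)x_i^*>0$. Each imaginary pair contributes $0$ to $\sum_i\mu_i=\mathrm{tr}(B)$, so $d\ge 1$.
\item For $0\le j\le n$, put $i=\min(m,\lfloor j/2\rfloor)$. Then $0\le j-2i\le d$, using $d\ge 1$. So the coefficient of $t^j$ in $qs$ contains the strictly positive term $q_{2i}s_{j-2i}$, and all its other terms are $\ge 0$.
\end{enumerate}
This gives $\sum_{C\in\DD_k}\det(C)>0$ for every $k$ from stability alone, with no hyperbolicity or Lyapunov argument needed.

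A small slip in Step 1: with antibiotic terms present, differentiating directly already gives $J=-D^*A$ with the original ratios $r_i/K_i$. Replacing only $K_i$ by $K_i(1-l_i/r_i)$ changes $D^*$, unless you also replace $r_i$ by $r_i-l_i$.
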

As an immediate corollary, we get the following (putting $k=n$ in the above theorem).  
\begin{Cor} \label{Cor-1}
    If a feasible stable equilibrium exists, then $\det(A) > 0$. 
\end{Cor}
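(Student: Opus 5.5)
The plan is to read off the case $k=n$ of Theorem~\ref{equi-s} and then strip away the diagonal factor $D^*$. Theorem~\ref{equi-s} concerns the matrix $B = D^*A$, so the sums there are over principal submatrices of $B$, i.e. $\DD_k = \DD_k(B)$. For $k=n$ the family $\DD_n(B)$ has exactly one member, namely $B$ itself, since no rows or columns are deleted. Assuming a feasible stable equilibrium $\oli{x}^*$ exists, the theorem then gives
\[
\det(B) = \det(D^*)\,\det(A) > 0 ,
\]
using multiplicativity of the determinant.

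Next I show that $\det(D^*)>0$. The matrix $D^*$ is diagonal with entries $(r_i/K_i)x_i^*$. Here $r_i>0$ and $K_i>0$ by the standing assumptions on the model. Feasibility means exactly that $x_i^*>0$ for every $i$. Hence
\[
\det(D^*) = \prod_{i=1}^n \frac{r_i}{K_i}\,x_i^* > 0 ,
\]
and dividing the displayed inequality by $\det(D^*)$ yields $\det(A)>0$. This is also where feasibility, rather than mere subfeasibility, is used: if some $x_i^*=0$, then $D^*$ would be singular and the argument would give nothing.

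I do not expect a genuine obstacle, since all the substance sits in Theorem~\ref{equi-s}, which we may assume. The only point needing care is confirming that $\DD_n$ refers to submatrices of $B$ (not of $A$), so that the $k=n$ term really is $\det(D^*A)$.

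As a sanity check, independent of Theorem~\ref{equi-s}, I would also sketch a direct argument.

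\begin{enumerate}
\item The Jacobian of the system (with or without the $l_i$ terms) at a feasible equilibrium is $J=-D^*A$. This holds because $F_i(\oli{x}^*)=0$ there, so only the terms coming from differentiating $F_i$ survive.
\item Invertibility of $A$ and of $D^*$ makes $J$ nonsingular. Stability (with the hyperbolicity assumed in Section~\ref{sec:existence_stable_equilibrium}) then forces every eigenvalue of $J$ to have negative real part.
\item Real negative eigenvalues contribute a factor of sign $-1$ each, and complex-conjugate pairs contribute a positive factor $|\mu|^2$. Therefore $\operatorname{sign}\det J = (-1)^n$.
\item On the other hand, $\det J = (-1)^n\det(D^*)\det(A)$. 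Comparing the two expressions gives $\det(A)>0$.
\end{enumerate}

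The only delicate point in this direct route is excluding purely imaginary eigenvalues of a stable but non-hyperbolic equilibrium. That is precisely what Theorem~\ref{equi-s} handles, which is why I would present the corollary via Theorem~\ref{equi-s}.
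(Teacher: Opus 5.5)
Your proof is correct and follows the paper's route exactly: the paper gets the corollary by setting $k=n$ in Theorem~\ref{equi-s}. You also make explicit the step the paper leaves implicit, namely that feasibility gives $\det(D^*)=\prod_i (r_i/K_i)x_i^*>0$, so $\det(D^*A)>0$ forces $\det(A)>0$. In your Jacobian sanity check you need not exclude purely imaginary eigenvalues: non-real eigenvalues come in conjugate pairs contributing $|\mu|^2>0$, so the sign of $\det J$ is $(-1)^n$ as soon as $J$ is nonsingular with no eigenvalue of positive real part.
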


We also have the following. Suppose that $\oli{x}^*=(x_1^*, \ldots, x_n^*) \in \R_+^n$ is an equilibrium. Then we say that the {\em order} of this equilibrium is $k$ if precisely $k$ coordinates $x_i^* = 0$. We call it a {\em subfeasible equilibrium} of order $k$ (see definition in Sect. \ref{sec:frequency_equilibria}). In other words, let $I$ be the set of indices $i$ such that $x_i^* = 0$. Then $|I|  = k$. 

\begin{Prop}
    Let $A$ be the interaction matrix whose elements are denoted by $(A)_{ij} = \al_{ij}$. Suppose that $A$ is invertible and that $\alpha_{ii} = 1$ for all $i = 1, \ldots, n$. Then, if there is a stable feasible equilibrium, there cannot be any stable subfeasible equilibrium of order $1$.
\end{Prop}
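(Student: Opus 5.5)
The plan is to compare the two equilibria through the linear system $A\oli{x} = \oli{K}$ and derive a sign contradiction using Cramer's rule and Corollary \ref{Cor-1}. Let $\oli{x}^*$ be the stable feasible equilibrium, so $A\oli{x}^* = \oli{K}$ with all $x_j^*>0$. Then $\det(A)>0$ by Corollary \ref{Cor-1}. Suppose, for contradiction, that $\oli{y}$ is a stable subfeasible equilibrium of order $1$, with $y_i = 0$ for a single index $i$ and $y_j>0$ for $j \neq i$. The same argument works with antibiotics present, since adding $l_i$ only replaces $K_i$ by $K_i(1 - l_i/r_i)$ and leaves $A$ unchanged.

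\textbf{Step 1 (the sign of the cofactor $c_{ii}$).} As discussed above, $\oli{y}$ is stable in $\R^n$ exactly when it is a stable feasible equilibrium of the restricted system on $\XX_{\{i\}}^n$ and it is non-invadable in the $x_i$-direction. The restricted system is again a Lotka-Volterra system. Its interaction matrix is the principal submatrix $A^{(i)}$ obtained by deleting row and column $i$, and it still has unit diagonal. Applying Corollary \ref{Cor-1} to this $(n-1)$-dimensional system gives
\[
c_{ii} = \det\big(A^{(i)}\big) > 0 .
\]

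\textbf{Step 2 (the sign of the non-invadability term).} Stability in the transverse direction forces $F_i(\oli{y}) \le 0$, and the paper asserts the strict inequality. I will also cover the boundary case directly. If $F_i(\oli{y}) = 0$, then $\oli{y}$ satisfies all $n$ equations $A\oli{y} = \oli{K}$. Invertibility of $A$ then gives $\oli{y} = \oli{x}^*$, which is impossible because $y_i = 0 < x_i^*$. Hence $F_i(\oli{y}) < 0$. Set $s = -K_i F_i(\oli{y}) > 0$.

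\textbf{Step 3 (comparison via Cramer's rule).} For $j \ne i$ we have $F_j(\oli{y}) = 0$, so $(A\oli{y})_j = K_j$. In row $i$ we have $(A\oli{y})_i = K_i - K_iF_i(\oli{y}) = K_i + s$. Therefore
\[
A(\oli{y} - \oli{x}^*) = s\, e_i, \qquad \oli{y} - \oli{x}^* = s\,A^{-1}e_i .
\]
Taking the $i$-th component and using $(A^{-1})_{ii} = c_{ii}/\det(A)$ gives
\[
-x_i^* = y_i - x_i^* = s\,\frac{c_{ii}}{\det(A)} .
\]
By Steps 1 and 2 and Corollary \ref{Cor-1}, the right-hand side is strictly positive. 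The left-hand side is strictly negative because $\oli{x}^*$ is feasible. This contradiction proves the proposition.

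\textbf{Main obstacle.} The delicate point is Step 1: justifying that stability of $\oli{y}$ in $\R^n$ gives stability of the restricted $(n-1)$-system, so that Corollary \ref{Cor-1} (equivalently Theorem \ref{equi-s} with $k=n-1$) applies to $A^{(i)}$. I would rely on the equivalence stated earlier: a subfeasible equilibrium is stable in $\R^n$ if and only if it is stable as a feasible equilibrium in $\XX_I^n$ and $F_i<0$ for $i\in I$. The restriction is needed only in the direction that stability in $\R^n$ implies stability within the invariant subspace, and that direction is immediate from the definition.
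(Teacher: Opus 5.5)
Your proof is correct and is essentially the paper's argument in algebraic form. The paper observes that both equilibria lie on the line $L_i=\bigcap_{k\neq i}P_k$, whose direction is the cofactor row $(c_{i1},\ldots,c_{in})=\det(A)\,A^{-1}e_i$, and it uses $\det(A)>0$ to show two things: $c_{ii}>0$ forces $F_i>0$ at the boundary point, while $c_{ii}<0$ makes that point unstable within the face. That is exactly your identity $-x_i^*=s\,c_{ii}/\det(A)$ read as a contradiction, and your version additionally disposes of the boundary cases $F_i(\oli{y})=0$ and $c_{ii}=0$, which the paper's case split passes over.
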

\begin{proof}
    Suppose $\oli{x}^*$ is a stable feasible equilibrium. Then by Theorem \ref{equi-s} $\det(A) > 0$. We also know that $\oli{x}^*$ can be written as 
    \[
    \oli{x}^* = P_j \cap L_j,
    \]
    where $P_j$ are the hyperplanes $F_j=0$ and 
    \[
    L_j = \bigcap_{i \neq j} P_i.
    \]
Let $\oli{n}_j = (\al_{j1},\ldots, \al_{jn})$ denote the normal vector to $P_j$ . Also note that the direction of the line $L_j$ is equal to $\oli{u} = (c_{j1},c_{j2}, \ldots, c_{jj}, \ldots, c_{jn})$, where $c_{ji}$ are the cofactors of $A$ along row $j$. Since $\oli{u} \cdot \oli{n}_j = \det(A) > 0$ the angle between $\oli{n}_j$ and $\oli{u}$ is at most $\pi/2$. Suppose now that $c_{jj} > 0$, and that we have a subfeasible equilibrium $\oli{y}^* \in \{ x_j = 0\}$.  This means that the vector $\oli{u}$ whose foot is at $\oli{y}^*$ is pointing inward the positive orthant. Hence this subfeasible equilibrium must be on the same side of $P_j$ as the origin, i.e. so that $F_j(\oli{x}) > 0$. This makes it unstable in the $x_j$-direction (although it is stable in the subspace $\{ x_j = 0 \}$). If $c_{jj} < 0$ then we have the opposite situation. i.e. the subfeasible equilibrium $\oli{y}^*$ in $\{x_j = 0 \}$ has to be unstable and it lies on the other side of $P_j$ as the origin, meaning $F_j(\oli{x}) < 0$, which implies that it is stable in the $x_j$-direction. Hence, there cannot be any stable $1$st order subfeasible equilibrium.   
\end{proof}
 
\begin{Rem}
    It is also easy to see that there is a neighbourhood $U$ of the identity interaction matrix $A=I$, so that the system has a stable feasible equilibrium for $A \in U$. 
    \end{Rem}

Adding antibiotics to a certain taxon $j$ will increase $l_j$ and shift the corresponding nullcline in its normal direction. The equilibrium will consequently also shift. We are interested in how the $x_1$-coordinate of the equilibrium shifts if we increase $l_j$. 

If the equilibrium is feasible, we can write it as
\[
\oli{x}^* = \bigcap_{i=1}^n P_i 
\]
given that the normal vectors to each hyperplane $P_i$ are linearly independent. 
We can also adjust the analysis for subfeasible equilibria by looking at the subspace $\XX_I^n$ defined above. For simplicity, we assume that the equilibrium is feasible. 

The spectrum vector is the vector $\oli{l} = (l_1, l_2, \ldots , l_n)$, which represents the drug potencies against each taxon $j$. The resistant pathogen is supposed to be $x_1$, so $l_1 = 0$ (antibiotics have no effect on taxon 1). With antibiotics, the nullclines $P_i$ then become
\[
P_i: \al_{i1} x_1 + \al_{i2} x_2 + \ldots + \al_{in} x_n = K_i - b_i, 
\]
where $b_i = \frac{l_i}{r_i}$. 

We now want to investigate how antibiotics targeting taxon $j$ will influence $x_1$ at the feasible equilibrium. Let $D_{ij}$ be the subdeterminant of matrix $A$ obtained by deleting row $i$ and column $j$. The cofactors are defined by $c_{ij} = (-1)^{i+j} D_{ij}$. We claim that the vector 
\[
\oli{u} = (c_{j1}, c_{j2}, \ldots, c_{jn}),
\]
is orthogonal to all other normal vectors of $P_i$, for $i \neq j$. Indeed, if $\oli{a}_k$ is the row $k$ of the matrix $A$, then 
\[
\oli{u} \cdot \oli{a}_k = (\al_{j1}c_{j1}, \al_{j2}c_{j2}, \ldots, \al_{jn}c_{jn}) = (0,0, \ldots, 0),
\]
since each entry $\al_{jh}c_{jh}$ is a determinant of the matrix $A$ when we have replaced row $j$ by row $k$.
Hence, the equilibrium slides along $L_j$, in the direction of $\oli{u}$. We see that if 
\[
\frac{c_{j1}}{c_{jj}} > 0,
\]
then decreasing taxon $j$ (with antibiotics) will reduce $x_1$ with speed $|c_{j1}/c_{jj}|$, and vice versa. Hence, we have proved the following.
\begin{Thm} \label{quotient}
Changing $x_j$ with an amount $\de_j$ implies that the $x_1$-coordinate changes with $\de_1$ according to 
the formula 
\[
\de_1 = \de_j \frac{c_{j1}}{c_{jj}}. 
\]
(We assume that $j \neq 1$.)
\end{Thm}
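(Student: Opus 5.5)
The plan is to make precise the geometric observation just made. When only $l_j$ changes, the equilibrium moves along the line $L_j=\bigcap_{i\neq j}P_i$, and the direction of this line is $\oli u=(c_{j1},\dots,c_{jn})$.

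\textbf{Step 1: the feasible equilibrium and its dependence on $\oli b$.} With antibiotics present, the feasible equilibrium solves $A\oli x^*=\oli K-\oli b$, where $b_i=l_i/r_i$. Since $A$ is invertible, Cramer's rule (equivalently $A^{-1}=\operatorname{adj}(A)/\det(A)$, with $(\operatorname{adj}A)_{ki}=c_{ik}$) gives
\[
x_k^*=\frac{1}{\det(A)}\sum_{i=1}^n c_{ik}\,(K_i-b_i).
\]
This is affine in $\oli b$. If only $b_j$ changes, by $\beta=\la_j/r_j$, the equilibrium moves by
\[
\Delta\oli x^*=-\frac{\beta}{\det(A)}\,(c_{j1},\dots,c_{jn})=-\frac{\beta}{\det(A)}\,\oli u .
\]
This confirms that the displacement is parallel to $\oli u$. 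It also agrees with the earlier orthogonality claim: each nullcline $P_i$ with $i\neq j$ is unchanged, so the displacement must be orthogonal to every normal $\oli a_i$ with $i\neq j$. The vector $\oli u$ has exactly this property, because $\sum_h \al_{ih}c_{jh}=\delta_{ij}\det(A)$ by expansion along a row with a repeated row.

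\textbf{Step 2: compare coordinates.} Reading off the $j$-th and the first coordinates of the displacement gives $\de_j=-\beta\,c_{jj}/\det(A)$ and $\de_1=-\beta\,c_{j1}/\det(A)$. Assuming $c_{jj}\neq 0$, division gives $\de_1=\de_j\,c_{j1}/c_{jj}$. This is exactly the claim. The factor $\det(A)$ and the particular value of $\la_j$ cancel, so the formula holds for any change of $x_j$ realized by moving along $L_j$. As a by-product it recovers \eqref{delta-1-a}.

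\textbf{Anticipated obstacle.} The only delicate points are interpretive. The first is that ``changing $x_j$'' must be read as the equilibrium sliding along $L_j$, that is, the change comes from perturbing the $j$-th equation only. The second is the nondegeneracy $c_{jj}\neq0$. For matrices in $\GG_n$ it is reasonable to use the necessary stability condition of Theorem~\ref{equi-s} applied to the subsystem, or simply to assume $c_{jj}\neq 0$ and note that otherwise $x_j$ cannot be changed this way. Feasibility and stability are used only to guarantee that the equilibrium in question is the unique solution of $A\oli x=\oli K-\oli b$ and that it persists, as assumed in Section~\ref{model-assumptions}.
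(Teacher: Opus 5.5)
Your proof is correct and is essentially the paper's argument. Both rest on the fact that shifting only $P_j$ moves the equilibrium along $\oli u=(c_{j1},\dots,c_{jn})$; you read this off the $j$-th column of $\operatorname{adj}(A)$, the paper from $\oli u\cdot\oli a_k=0$ for $k\neq j$, and the formula is then the ratio of the first and $j$-th coordinates.

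Two slips, neither of which affects the theorem. First, the equilibrium condition gives $b_j=K_j l_j/r_j$, so you need $\beta=K_j\la_j/r_j$ to recover \eqref{delta-1-a}. Second, Theorem~\ref{equi-s} only controls sums of principal minors of $D^*A$, so it cannot supply $c_{jj}\neq 0$; that has to be assumed, as the paper tacitly does.
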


We are now interested in the relationship between the antibiotic strength $l_j$ and $\de_j$, the change of $x_j$. Intuitively, $\de_j$ decreases when we target taxon $x_j$ with antibiotics. However, this is not necessary. Consider the nullcline
\[
\al_{j1} x_1 + \al_{j2}x_2 + \ldots + \al_{jn}x_n =K_j - b_j,
\]
where $b_j = K_j l_j/r_j$. 
As $l_j$ changes (increases), $x_j$ will decrease or increase depending on the direction of the line $L_j = \cap_{i \neq j} P_i$:
\[
\oli{u} = (c_{j1},c_{j2}, \ldots, c_{jn}). 
\]
A change in $b_j$, say $\De b_j$, will move the nullcline (hyperplane) $P_j$ a distance $d = \De b_j / \| \oli{t}_j \|$, where $t_j$ is the normal vector to $P_j$. If $\De b_j > 0$ the nullcline will move in the opposite direction of the normal vector $\oli{t}_j$. This change will induce a change $\De L$ along the line $L_j$, which is then projected onto the coordinate $x_j$. If $\th$ is the angle between $\oli{u}$ and $\oli{t}_j$, then if the change is 
$\De L (\oli{u}/ \| \oli{u} \|)$, 
\begin{equation} \label{L}
\De L = -\frac{\De b_j}{\| \oli{t}_j \| \cos \th} = - \De b_j \frac{\| \oli{u} \| \| \oli{t}_j \| }{\| \oli{t}_j \| (\oli{u} \cdot \oli{t}_j)} =  - \De b_j \frac{\| \oli{u} \|}{\det(A)}.
\end{equation}
\begin{align}\de_j &= \De L \frac{(\oli{u} \cdot \oli{e}_j)}{\| \oli{u} \|} = \De L \frac{c_{jj}}{\| \oli{u} \|} \\
\de_1 &= \De L \frac{(\oli{u} \cdot \oli{e}_1)}{\| \oli{u} \|} = \De L
\frac{c_{j1}}{\| \oli{u} \|}.
\end{align}
Combining this with (\ref{L}), we get 
\begin{align}
\de_j &= -\De b_j \frac{c_{jj}}{\det(A)}, \label{inta-j} \\ 
\de_1 &= -\De b_j \frac{c_{j1}}{\det(A)}. \label{inta-1}
\end{align}

In summary, if we put $\De b_j = K_j \la_j /r_j$, where $\la_j$ is the change in $l_j$, then 
\[
\de_j = - \frac{c_{jj}}{\det{A}} \frac{K_j \la_j}{r_j}, \qquad \de_1 = - \frac{c_{j1}}{\det{A}} \frac{K_j \la_j}{r_j} 
\]

\subsection{Comparison of different spectra}
 Let us assume that we have two different antibiotics which have their ecological spectra 
 \[
 \La = (0,l_2,l_3, \ldots, l_n) \text{ and } \La' = (0,l_2',l_3', \ldots, l_n')
 \]
 (we put $l_1 = l_1' = 0$, since it is assumed that the taxon $x_1$ is resistant). 

We say that $\La'$ has a {\em (strictly) stronger spectrum} than $\La$ if $l_j' \geq l_j$ for all $j$, with at least one strict inequality and write $\La' > \La$. Suppose we increase $l_j$ with $\De l_j$ for a subset of $j$:s, say $I \subset [2, \ldots]$. Then, using (\ref{inta-1}) and linearity,  the change in taxon $1$ becomes
\begin{equation} \label{delta-1-sum}
\de_1 = \sum_{j \in I} -\frac{K_j \la_j}{r_j} \frac{c_{j1}}{\det(A)}. 
\end{equation}

\subsection{Rescaling}
We have as a standing assumption that a feasible stable equilibrium exists, we must have that $\det(A) > 0$ according to Corollary \ref{Cor-1}. We now make use of the fact that the LV-system is scaling invariant in the following sense. Let $c > 0$ and put 
\[
y_i(t) = c x_i(t), \text{ for } 1 \leq i \leq n.   
\]
Then the extended LV-system becomes
\[
y_i'(t) = r_i y_i(t) (1 - \frac{1}{K_i} \sum_{j=1}^n \frac{\al_{ij}}{c}y_j(t) )  - y_i(t) l_i,
\]
where $\be_{ij} = \al_{ij}/c$. 
Hence in the new system $r_i$, $K_i$, $l_i$ all are the same, but the matrix $A$ has been rescaled. Let $B = (1/c) A$. We can adjust $c > 0$ so that $\det(B) = 1$.

\subsection{Probability distributions}
We are now interested in the expected outcome of the $\de_1$ in (\ref{delta-1-sum}). Since the expectation is linear, 
\[
\E(\de_1) = \E \left( \sum_{j \in I} -\frac{\la_j}{r_j} \frac{c_{j1}}{\det(A)} \right) = 
- \sum_{j \in I} \frac{ \la_j}{r_j} \E \left( \frac{c_{j1}}{\det(A)} \right) .
\]
Our goal is to estimate the expectations $\E(c_{j1}/\det(A))$, given some assumptions on the $\al_{ij}$ as random variables.

\begin{Lem} \label{main-1}
    Assume that $-1 < \al_{ij} <1$ for $i \neq j$ ($\al_{ii} = 1$) are independently identically distributed, and $\E(\al_{ij}) = \al$. Then, for each $j \neq 1$,  
    \begin{itemize}
        \item $\E(c_{j1}) = -\al(1-\al)^{n-2}$ 
        \item $\E(c_{jj}) = (1-\al)^{n-2}(1 + \al (n-2))$.
    \end{itemize}
\end{Lem}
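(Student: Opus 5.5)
The plan is to expand each cofactor as a signed sum over permutations and use independence and linearity of expectation term by term. Write
\[
c_{jk} = (-1)^{j+k} D_{jk}, \qquad D_{jk} = \sum_{\si} \operatorname{sgn}(\si) \prod_{i\neq j} \al_{i,\si(i)},
\]
where $\si$ runs over bijections from $\{1,\ldots,n\}\setminus\{j\}$ to $\{1,\ldots,n\}\setminus\{k\}$, with the sign taken after the natural relabelling. Each product involves distinct entries of $A$, since different rows give different positions. The off-diagonal entries are independent. Hence the expectation of a product of $m$ off-diagonal entries and some diagonal entries (which equal $1$) is exactly $\al^m$. So $\E(c_{jk})$ equals the determinant of the deterministic matrix obtained by replacing every off-diagonal entry by $\al$ and keeping the diagonal $1$'s, with the same row $j$ and column $k$ deleted. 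In short, $\E(c_{jk})$ is the $(j,k)$ cofactor of $M = (1-\al)I + \al J$, where $J$ is the all-ones matrix.

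This reduces everything to a computation on $M$, which I would do via the adjugate. The eigenvalues of $M$ are $1-\al$ (multiplicity $n-1$) and $1+\al(n-1)$, so
\[
\det M = (1-\al)^{n-1}(1+(n-1)\al).
\]
This also gives the formula for $\E(\det A)$ quoted in the main text. When $\det M \neq 0$, Sherman--Morrison gives
\[
M^{-1} = \frac{1}{1-\al}\Big(I - \frac{\al}{1+(n-1)\al}J\Big).
\]
Since $M$ is symmetric, the cofactor matrix is $\det(M)\,M^{-1}$.

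Reading off the entries:
\begin{itemize}
\item Off-diagonal: $\E(c_{j1}) = \det M \cdot \big(-\al/((1-\al)(1+(n-1)\al))\big) = -\al(1-\al)^{n-2}$.
\item Diagonal: $\E(c_{jj}) = (1-\al)^{n-2}\big(1+(n-1)\al-\al\big) = (1-\al)^{n-2}(1+(n-2)\al)$.
\end{itemize}
Both sides are polynomials in $\al$, so the identities extend to the singular values $\al=1$ and $\al=\al_s$ by continuity. As a cross-check, the diagonal cofactor is directly the determinant of the $(n-1)\times(n-1)$ version of $M$, which gives the same formula immediately.

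The main point needing care is the first step: justifying that the expectation of each permutation monomial factorizes into $\al^m$. This requires that no entry appears twice in a monomial, which is automatic for determinant expansions, and that the diagonal entries are deterministic. I would state this carefully, and note that only independence and a common mean are used; the bound $|\al_{ij}|<1$ only ensures the expectations exist. The sign bookkeeping for the off-diagonal cofactor is then handled automatically by the adjugate identity rather than by direct expansion.
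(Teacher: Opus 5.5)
Your proof is correct. Its first step, identifying $\E(c_{jk})$ with the $(j,k)$ cofactor of the constant matrix $M=(1-\al)I+\al J$, is exactly the paper's reduction. The paper writes this as $\E(p_n([\al_{ij}]))=p_n([\E(\al_{ij})])$. You justify it explicitly: each monomial of the cofactor expansion contains distinct, independent entries, so the multilinearity the paper uses tacitly is made visible.

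The two arguments differ in how they evaluate the cofactors of $M$.

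The paper uses Gaussian elimination. It subtracts the first row from the others to reach forms whose determinants can be read off. For $c_{jj}$, which is the $(n-1)\times(n-1)$ analogue of $M$, this gives an arrowhead matrix. For $D_{n1}$ it gives a shifted diagonal. The paper then tracks the sign $(-1)^{n-2}$ by hand and tacitly takes $j=n$ as the representative off-diagonal index.

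You instead use the spectrum of $M$, Sherman--Morrison, and $\mathrm{adj}(M)=\det(M)\,M^{-1}$. This yields in one computation every cofactor (all $j\neq 1$, diagonal and off-diagonal) together with $\E(\det A)=(1-\al)^{n-1}(1+(n-1)\al)$. The signs and the symmetry in $j$ are handled automatically. The price is that the formula needs $\det M\neq 0$, so you need your polynomial-identity argument to cover $\al=\al_s=-1/(n-1)$. The paper's row reduction is valid for every $\al$ directly, but it is more ad hoc and treats one index at a time.
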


\begin{proof}
The determinant of the minors $D_{ij}$ is a polynomial in the variables $\al_{ij}$. Let us denote it by $p_n(\al_{11}, \al_{12}, \ldots, \al_{nn})$, or simply by $p_n([\al_{ij}])$. Since all $\al_{ij}$ are assumed to be independent, and since the expectation is equal to $\al$, we have 
    \[
    \E(\text{det}(B)) = \E (p_n([\al_{ij}])) = p_n([\E(\al_{ij}])=p_n(\al, \ldots, \al).
    \]
    By simple Gaussian elimination, 
    \begin{align}
\E(\text{det}(B)) &= \left| \begin{array}{ccc}
1 & \al & \al \ldots \al \\
\al & 1 & \al \ldots \al \\
& \ldots & \\
\al & \al & \al \ldots  1 
\end{array} \right| 
= 
\left| \begin{array}{ccc}
1 & \al \ldots & \al \\
\al-1 & 1-\al & 0  \ldots 0 \\
& \ldots & \\
\al-1 & 0 & 0 \ldots  1-\al 
\end{array} \right| 
\\
&= (1-\al)^{n-2} - \al(\al-1)(1-\al)^{n-3}(n-2) = (1-\al)^{n-2} (1  + \al(n-2)). 
    \end{align}
    The second statement is proved in a similar way. 
    We continue with computing $D_{n1}$,
    \begin{align}
D_{n1} &=     \left| \begin{array}{ccc}
\al & \al \ldots & \al \\
1 & \al \ldots & \al  \\
& \ldots & \\
\al & \ldots 1 & \al 
\end{array} \right| 
=
\left| \begin{array}{cccc}
\al & \al \ldots & & \al \\
1-\al & 0 \ldots & & 0  \\
0 & 1-\al & \ldots & 0 \\
0 & 0 \ldots & 1-\al & 0 
\end{array} \right| \\
&= \al (1-\al)^{n-2}(-1)^{n-2}.
    \end{align}
By the definition of cofactors, 
      \[
    c_{n1} = (-1)^{n+1}D_{n1}  = - \al (1- \al)^{n-2}, 
    \]
    which finally proves the lemma. 
\end{proof}
\begin{Rem}
We can of course adjust the lemma and conclude that, for an $n \times n$ interaction matrix $A$, we have  
\[
\E(\det(A)) = (1-\al)^{n-1} (1 + \al (n-1)).
\]
\end{Rem}
We see that at the singularity $\al_s = -1/(n-1)$ (apart from $\al = 1$), the expectation vanishes. It seems that for $\al < \al_s$ the LV-system blows up, i.e., the population grows to infinity (see Table \ref{tab:statistics_attractors_normal} for $\al = -0.05$ and $n=32,64$). If $\det(A)$ is close to zero it can cause problems when computing the expectation of the quotient $c_{j1}/\det(A)$. For a feasible equilibrium to exist we have necessarily, by \cite{AspenbergMartensWaldetoft}, that $\det(A) > 0$. We want stay away from the singular matrices and assume that $\det(A) \geq \ka$ for some fixed $\ka > 0$.

\begin{Lem} \label{expect-order}
   We have the following estimates: 
    \begin{itemize}
        \item $\E(c_{j1}^2) = \si(\al)^2 + \al^2 + \OO(\al^2 \si(\al)^2) + \OO(\al^3)$
        \item $\E(\det(A)^2) =  1 - n(n-1) \al^2  +  \OO(\al^2  \si(\al)^2) + \OO(\al^3)$
        \item $\si(\det(A))^2  = (n-1)(2-n)\al^2 + \OO( \al^2 \si(\al)^2) + \OO(\al^3)$.
    \end{itemize} 
\end{Lem}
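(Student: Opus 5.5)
The plan is to expand everything through the Leibniz formula $\det(M)=\sum_{\pi}\mathrm{sgn}(\pi)\prod_i m_{i\pi(i)}$ and to take expectations term by term using independence. The only moments needed are $\E(\al_{ij})=\al$ and $\E(\al_{ij}^2)=\al^2+\si(\al)^2$ for $i\neq j$, with $\al_{ii}=1$ deterministic. Throughout I use $\si(\al)=\OO(\al)$, as in Theorem \ref{main-prob}, so that any monomial containing three or more off-diagonal factors (counted with multiplicity) is $\OO(\al^3)$. Mixed terms in which two off-diagonal factors are squared, for instance $(\al^2+\si(\al)^2)^2$, are collected into $\OO(\al^2\si(\al)^2)+\OO(\al^3)$. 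So only permutations with very few non-fixed points can contribute at order $\al^2$.

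\textbf{First item.} Take $c_{j1}=(-1)^{j+1}D_{j1}$ and observe that every permutation term of $D_{j1}$ contains at least one off-diagonal entry. There is exactly one term with a single off-diagonal entry: it uses $\al_{1j}$ together with the diagonal ones $\al_{ii}=1$ for $i\neq 1,j$. This gives $c_{j1}=-\al_{1j}+R$, where $R$ is a sum of products of at least two distinct off-diagonal entries, consistent with $\E(c_{j1})=-\al(1-\al)^{n-2}$ in Lemma \ref{main-1}. Squaring gives $\E(c_{j1}^2)=\E(\al_{1j}^2)-2\E(\al_{1j}R)+\E(R^2)$. Here $\E(\al_{1j}^2)=\si(\al)^2+\al^2$. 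The cross term $\E(\al_{1j}R)$ is $\OO(\al^3)$, or $\OO(\al\si(\al)^2)$ when $R$ contains $\al_{1j}$, which is again $\OO(\al^3)$. Finally $\E(R^2)$ is $\OO(\al^2\si(\al)^2)+\OO(\al^4)$.

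\textbf{Second item.} Write
\[
\E(\det(A)^2)=\sum_{\pi,\tau}\mathrm{sgn}(\pi)\mathrm{sgn}(\tau)\,\E\Big(\prod_i\al_{i\pi(i)}\al_{i\tau(i)}\Big),
\]
and sort the pairs $(\pi,\tau)$ by their total number of non-fixed points.
\begin{itemize}
\item $\pi=\tau=\mathrm{id}$ contributes $1$.
\item One of $\pi,\tau$ equal to the identity and the other a transposition contributes $-\al^2$. There are $2\binom{n}{2}$ such ordered pairs, giving $-n(n-1)\al^2$.
\item $\pi=\tau$ a transposition gives $(\al^2+\si(\al)^2)^2$, which is fourth order.
\item Every remaining pair involves a $3$-cycle or two transpositions, hence at least three off-diagonal factors, and is absorbed into the error terms.
\end{itemize}
This yields $\E(\det(A)^2)=1-n(n-1)\al^2+\OO(\al^2\si(\al)^2)+\OO(\al^3)$.

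\textbf{Third item.} Write $\si(\det(A))^2=\E(\det(A)^2)-\E(\det(A))^2$, and expand the Remark after Lemma \ref{main-1}, $\E(\det(A))=(1-\al)^{n-1}(1+\al(n-1))$, to second order in $\al$. Subtracting this square from the second item should produce the stated coefficient $(n-1)(2-n)\al^2$. This step is the main obstacle, and I am not sure the claimed coefficient comes out. A careful expansion of the logarithm gives
\[
2(n-1)\log(1-\al)+2\log(1+(n-1)\al)=-n(n-1)\al^2+\OO(\al^3).
\]
So $\E(\det(A))^2=1-n(n-1)\al^2+\OO(\al^3)$, and the $\al^2$ parts appear to cancel against the second item. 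The first thing I would do is recheck the bookkeeping of $\E(\det(A))^2$ term by term, against the same transposition count used for the second item. For the stated coefficient to survive, one needs $\E(\det(A))^2=1-2(n-1)\al^2+\OO(\al^3)$. I would track exactly which quadratic contribution of $(1-\al)^{2(n-1)}$ and $(1+(n-1)\al)^2$ is supposed to produce this. If that cannot be justified, the honest output of this step is the weaker statement $\si(\det(A))^2=\OO(\al^2\si(\al)^2)+\OO(\al^3)$, and the coefficient in the lemma would need correction.
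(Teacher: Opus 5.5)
Your treatment of the first two items is correct and is essentially the paper's argument. The paper writes $\det(A)=1-\sum_{i<j}\al_{ij}\al_{ji}+T$ and squares, while you sum over pairs $(\pi,\tau)$; in both cases the single linear term of $c_{j1}$ and the transposition terms of $\det(A)$ are isolated, and the rest is bounded by counting off-diagonal factors. Both arguments genuinely need $\si(\al)=\OO(\al)$, as you assume, since otherwise terms such as $\si(\al)^4$ escape the stated error. One small slip: a term of $R$ containing $\al_{1j}$ has at least two further off-diagonal factors. Its contribution to $\E(\al_{1j}R)$ is therefore $\OO(\al^2(\al^2+\si(\al)^2))$, not $\OO(\al\si(\al)^2)$. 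Your conclusion $\OO(\al^3)$ is unaffected.

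On the third item, drop the hedging: your computation is right, and the third item is false for $n\geq 3$. The transposition terms alone give $\E(\det(A))=1-\binom{n}{2}\al^2+\OO(\al^3)$. Hence $\E(\det(A))^2=1-n(n-1)\al^2+\OO(\al^3)$, and the $\al^2$ terms cancel exactly against the second item.

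The paper's proof reaches $(n-1)(2-n)\al^2$ through an expansion error. It writes $(1-\al)^{n-1}=1-(n-1)\al+\frac{n(n-1)}{2}\al^2+\dots$, whereas the correct quadratic coefficient is $\binom{n-1}{2}=\frac{(n-1)(n-2)}{2}$. Even with its own expansion, the subtraction would give $-2(n-1)\al^2$.

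No expansion is needed to see the claim cannot hold. For $n\geq 3$ the claimed leading term $(n-1)(2-n)\al^2$ is negative and dominates the error terms as $\al\to 0$, which is impossible for a variance.

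The true statement comes from writing $\si(\det(A))^2=\sum_{\pi,\tau}\mathrm{sgn}(\pi)\mathrm{sgn}(\tau)\,\mathrm{Cov}(m_\pi,m_\tau)$ with $m_\pi=\prod_i\al_{i\pi(i)}$. Only pairs of permutations sharing an off-diagonal variable contribute. Each such covariance carries a factor $\si(\al)^2$ and has total degree at least $4$ in $(\al,\si(\al))$. The leading part, from $\pi=\tau$ a transposition, is $\binom{n}{2}\big(2\al^2\si(\al)^2+\si(\al)^4\big)$.

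So $\si(\det(A))^2=\OO(\al^2\si(\al)^2)$, slightly sharper than your fallback. Theorem~\ref{main-2} only uses $\si(\det(A))^2=\OO(\al^2)$, so nothing downstream is affected by the correction.
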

\begin{proof}
Without loss of generality, suppose that $j=n$. If we look at the submatrix by deleting row $n \neq 1$ and column $1$, we get 
\begin{equation} \label{ko}
c_{n1} = \al_{1n} + S,
\end{equation}
where $S$ is a sum of all products with at least $2$ elements of the form $\al_{ij}$. We now make the observation that the terms with exactly $2$ factors cannot contain $\al_{1n}$ (since if $\al_{1n}$ appears in the product then one has to take at least two more $\al_{ij}$'s from the matrix where row $1$ and column $n$ is deleted; such matrix has ones on the diagonal). 

If we square the expression we get 
\[
c_{n1}^2 = \al_{n1}^2 + S^2 + 2 \al_{1n} S. 
\]
Each term in $S^2$ is a sum of elements of the type 
\begin{equation} \label{al-factors}
\al_{i_1,j_1} \ldots \al_{i_p,j_p} \al_{k_1,l_1}^2 \ldots \al_{k_q,l_q}^2, 
\end{equation}
where the collection of all pairs $(i_s,j_s)$, $(k_t, l_t)$, where $1 \leq s \leq p$ and $1 \leq t \leq q$, does not contain two pairs that are equal. There are some restrictions. We have $p \geq 2$ or $p=0$ (because of the combinatorics), and moreover; if $p=0$ then $q \geq 2$, and if $q=0$ (no quadratic terms) then $p \geq 4$. The last term, $2 \al_{1n} S$ consists of elements of the same type but with the restriction that the only square that can appear is $\al_{1n}^2$, and if $\al_{1n}^2$ appears then $p \geq 2$ (according to the observation on $S$ above). Hence either $p \geq 2$ and $q=1$, or $p \geq 3$ and $q=0$. 

Moreover, all the factors in (\ref{al-factors}) are independent as random variables. For each factor of the type $\al_{k_t,l_t}^2$ we use the identity 
\[
\E(\al_{ij}^2) = \si(\al)^2 + \E(\al_{ij})^2 = \si(\al)^2 + \al^2. 
\]
Hence we get, taking into account the different possibilities of $p$ and $q$, 
\begin{multline}
\E(c_{n1}^2) = \si(\al)^2 + \al^2 + \OO(\al^2(\al^2 + \si(\al)^2)) + \OO(\al^3) 
\\
= \al^2 + \si(\al)^2 + \OO(\al^2 \si(\al)^2) + \OO(\al^3). 
    \end{multline}
    
If we turn to the determinant of $A$, we get, with a similar argument, 
\[
\det(A) = 1 - \sum_{i < j} \al_{ij} \al_{ji} + T,
\]
where $T$ is a sum of at least $3$ distinct factors $\al_{ij}$. 

\begin{multline}
(\det(A))^2 = (1 - \sum_{i < j} \al_{ij}\al_{ji} + T)^2 \\
= 1 - 2 \sum_{i < j} \al_{ij}\al_{ji} + \big( \sum_{i < j} \al_{ij} \al_{ji} \big)^2 +  2T + T^2  - 2 T \sum_{i < j} \al_{ij} \al_{ji} \\
= 1 - 2 \sum_{i < j} \al_{ij}\al_{ji} + \sum_{i < j} \al_{ij}^2 \al_{ji}^2 + \sum_{i < j, k < l} \al_{ij}\al_{ji} \al_{kl} \al_{lk} + 2T + T^2 - 2T \sum_{i < j} \al_{ij}\al_{ji}, 
\end{multline}
where all factors in the third sum on the last row are distinct. By an argument similar to the first case, the terms $T^2, 2T$ or $2T \sum_{i < j} \al_{ij} \al_{ji}$ can only contain products of the form (\ref{al-factors}) where $p \neq 1$. Moreover, for the term $T^2$ we have that if $p=0$ then $q \geq 3$ and if $p=2$ then $q \geq 2$. For the last term, $-2T \sum_{i < j} \al_{ij} \al_{ji}$, the restrictions are that $p > 0$ (because the number of pairs that are equal is at most $2$, and $T$ has at least $3$ distinct factors). Hence, $p\geq 2$. If $p=2$ then $q \geq 1$. 

Since $\E(\al_{ij}^2 \al_{ji}^2) = \E(\al_{ij}^2) \E(\al_{ji}^2)$, and using again that $\E(\al_{ij}^2) = \si(\al)^2 + \E(\al_{ij})^2$, 
\begin{multline}
\E(\det(A)^2) = 1 - n(n-1) \al^2  + \OO((\al^2 + \si(\al)^2)^2) + \OO(\al^3) \\
+ \OO(\al^2 (\al^2 + \si(\al)^2)^2) + \OO((\al^2 + \si(\al)^2)^3) +   \OO(\al^2(\al^2 + \si(\al)^2))  \\
=  1 - n(n-1) \al^2  +  \OO((\al^2 + \si(\al)^2) \al^2) + \OO(\al^3). 
\end{multline}
Therefore, using Lemma \ref{main-1},  
\begin{multline}
\E(\det(A)^2) - \E(\det(A))^2 \\
= 1 - n(n-1) \al^2 + \OO(\al^2(\al^2 + \si(\al)^2)) + \OO(\al^3) -  ((1-\al)^{n-1}(1 + \al (n-1)))^2 \\
= 1 - n(n-1) \al^2 + \OO(\al^2(\al^2 + \si(\al)^2)) + \OO(\al^3) \\
- (1 - (n-1)\al + \frac{n(n-1)}{2} \al^2 + h.o.t.)^2 (1 + 2 \al (n-1) + \al^2(n-1)^2) \\
= 1 - n(n-1) \al^2 + \OO((\al^2 + \si(\al)^2) \al^2) + \OO(\al^3) \\
- (1 - 2\al(n-1) + (n-1)^2 \al^2 + n(n-1)\al^2 + h.o.t.)(1 + 2\al(n-1) + \al^2(n-1)^2) \\
= (n-1)(2-n) \al^2 + \OO((\al^2 + \si(\al)^2) \al^2) + \OO(\al^3).
\end{multline}

Finally, we simplify the expressions
\[
\OO((\al^2 + \si(\al)^2) \al^2) = \OO(\al^4) + \OO(\al^2 \si(\al)^2).
\]
\end{proof}

 Let $X=c_{j1}$ and $Y = \det(A)$. Then $X$ and $Y$ are random variables, depending on the underlying probability space. 
\begin{Thm} \label{main-2}
Let $\ka > 0$, and suppose $A$ is an $n \times n$-matrix. Suppose that $\al_{ij}$ are i.i.d:s and restricted to the set of matrices where $\det(A) \geq \ka > 0$. Put $X=c_{j1}$ and $Y = \det(A)$ and let $\E(\al_{ij}) = \al$. Also, suppose that $\si(\al) \leq \OO(\al)$. Then, if $a = \E(c_{j1})$ and $b = \E(\det(A))$, 
\[
\E \left( \frac{X}{Y} \right) = \frac{a}{b} + R_2(X,Y), 
\]
where $|R_2(X,Y)| = \OO(\al^2)$. 
\end{Thm}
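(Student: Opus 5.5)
We will use the standard second-order expansion of a ratio of random variables (the delta method), with an exact remainder. Put $a=\E(X)$, $b=\E(Y)$ and write $X = a + \xi$, $Y = b+\eta$ with $\E(\xi)=\E(\eta)=0$. By Lemma \ref{main-1} (and the subsequent Remark), $a = -\al(1-\al)^{n-2}$ and $b=(1-\al)^{n-1}(1+\al(n-1))$. For $|\al|$ small, $b$ is bounded below by a positive constant, say $b \geq 1/2$. The identity
\[
\frac{X}{Y} = \frac{a+\xi}{b}\Big(1 - \frac{\eta}{b} + \frac{\eta^2}{bY}\Big),
\]
which follows from $\frac{1}{Y} = \frac1b - \frac{\eta}{b^2} + \frac{\eta^2}{b^2 Y}$, is exact. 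Taking expectations gives
\[
\E\Big(\frac{X}{Y}\Big) = \frac ab - \frac{\E(\xi\eta)}{b^2} + \E\Big(\frac{(a+\xi)\eta^2}{b^2 Y}\Big) = \frac{a}{b} + R_2(X,Y).
\]
It therefore suffices to bound the two remainder terms by $\OO(\al^2)$.

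The covariance term is handled by Cauchy--Schwarz: $|\E(\xi\eta)| \leq \si(X)\si(Y)$. Lemma \ref{expect-order} gives $\si(X)^2 \leq \E(c_{j1}^2) = \si(\al)^2+\al^2+\dots = \OO(\al^2)$, using $\si(\al)=\OO(\al)$. It also gives $\si(Y)^2 = (n-1)(2-n)\al^2 + \OO(\al^3) = \OO(\al^2)$. Hence $|\E(\xi\eta)|/b^2 = \OO(\al^2)$.

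For the last term we use the restriction $Y \geq \ka$, which gives $1/Y \leq 1/\ka$. We also use that all entries lie in $[-1,1]$, so $|X| = |a+\xi| \leq (n-1)!$ deterministically. Then
\[
\Big|\E\Big(\frac{(a+\xi)\eta^2}{b^2Y}\Big)\Big| \leq \frac{(n-1)!}{\ka b^2}\,\si(Y)^2 = \OO(\al^2),
\]
with constants depending only on $n$ and $\ka$. Alternatively, to obtain a sharper $\OO(\al^3)$-type bound, we would split off $a\,\E(\eta^2/Y)$, which is $\OO(\al)\cdot\OO(\al^2)$, and apply Cauchy--Schwarz to $\E(\xi\eta^2/Y)$. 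The crude deterministic bound already suffices for the claimed $\OO(\al^2)$.

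The main obstacle is that the moments in Lemmas \ref{main-1} and \ref{expect-order} are computed for the unconditioned i.i.d.\ distribution, whereas the statement conditions on the event $\{\det(A)\geq\ka\}$. I would handle this by noting two facts. First, for small $\al$ with $\si(\al)=\OO(\al)$, Chebyshev applied to $Y$ (whose mean is near $1$ and variance is $\OO(\al^2)$) shows that the event has probability $1-\OO(\al^2)$ when $\ka<1/2$. Second, on the complement every integrand is bounded by $n!$-type constants. Hence conditioning perturbs $a$, $b$, and the second moments by at most $\OO(\al^2)$, which is absorbed into $R_2$. The implicit constants depend on $n$ and $\ka$, which is consistent with the $C_n$ appearing in Theorem \ref{main-prob}.
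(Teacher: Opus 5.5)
Your proof is correct and is essentially the paper's. Your remainder $-\E(\xi\eta)/b^2+\E\bigl(X\eta^2/(b^2Y)\bigr)$ is pointwise the same as the paper's Lagrange remainder $R_2=(aY-bX)(Y-b)/(b^2Y)$; the paper bounds it with a single Cauchy--Schwarz using Lemma~\ref{expect-order} and $Y\geq\ka$, whereas you split off the covariance and bound the cubic piece deterministically. (Incidentally, the leading term $(n-1)(2-n)\al^2$ that you quote for $\si(Y)^2$ is negative for $n\geq 3$ and hence cannot be correct. The true variance is $\OO(\al^4)$ when $\si(\al)=\OO(\al)$, so your $\OO(\al^2)$ bound still holds.)

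Your extra step, transferring the unconditioned moments of Lemmas~\ref{main-1} and~\ref{expect-order} to the law conditioned on $\{\det(A)\geq\ka\}$, fixes a point the paper's proof passes over silently. The condition $\ka<1$ that this step needs (you take $\ka<1/2$) should really be a hypothesis of the theorem, since for $\ka>1$ the conditioning event can be empty or rare as $\al\to 0$.
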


\begin{proof}
    We use the Taylor expansion around the point of expectations $(a,b)$;
    \[
    f(X,Y) = \frac{X}{Y} = \frac{a}{b} + \frac{1}{b}(X-a) - \frac{a}{b^2}(Y-b) - \frac{1}{b^2}(X-a)(Y-b) + \frac{2}{b^3}(Y-b)^2 +  h.o.t.
    \]
   If we use the Taylor series of $f$ order $2$ with Lagrange's remainder around the point of expectations $(a,b)$, we get 
    \begin{equation}
    \E \big(\frac{X}{Y} \big) = \frac{a}{b} + \frac{1}{b}(\E(X)-a)  - \frac{a}{b^2}(\E(Y) - b)+ \E(R_2(X,Y)) = \frac{a}{b} + \E(R_2(X,Y)),
\end{equation} 
since $f_{xx}'' =0$. Let us put $h = X-a$ and $k=Y - b$. The remainder becomes
\begin{multline}
R_2(X,Y) = \int_0^1 2(1-s) f_{xy}''(a + sh,b + sk) hk + f_{yy}''(a + sh,b + sk) k^2 \ud s \\
= \int_0^1 (1-s) \left( \frac{-2}{(b + sk)^2} hk + \frac{2(a + sh)}{(b + sk)^3)} k^2  \right) \ud s \\
= \int_0^1 (1-s) \frac{2k (ay - bx)}{(b+ sk)^3}\ud s .
\end{multline}
Using the change of variables $u = b + sk$ gives,
\begin{equation} \label{remainder}
R_2(X,Y) = \frac{2 (ay - bx)}{k} \int_b^y \frac{y}{u^3} - \frac{1}{u^2} \ud u = \frac{(ay-bx)(y-b)}{yb^2}.
\end{equation}
Now we want to compute the expectation of the remainder. By the Cauchy-Schwarz inequality
\[
|\E(R_2)| \leq \sqrt{\E\frac{(aY-bX)^2}{(Yb^2)^2}} \sqrt{(\E(Y-b)^2)}.
\]
Recall that we assume that $Y \geq \ka > 0$. The right expression under the square root is the variance $\si(Y)^2$. Since the numerator is non-negative and the denominator bounded from below by $b^2 \ka$, the left expression under the square root is bounded by
\[
\E(a^2 Y^2 - 2abXY + b^2X^2) \E((Yb^2)^{-1}) \leq \frac{a^2 \E(Y^2) + b^2 \E(X^2) - 2ab \E(XY)}{\ka b^2}. 
\]
Finally, 
\[
\E(|XY|) \leq \sqrt{\E(X^2)} \sqrt{\E(Y^2)}.
\]
An easy computation shown that $\E(Y^2) = \OO(1)$ and by Lemma \ref{expect-order} $\E(X^2) = \OO(\al^2)$ (since $\si(\al)$ is bounded by $\OO(\al)$).
So 
\[
|\E(XY)| \leq \OO(\al). 
\]
Moreover, by Lemma \ref{expect-order}
\[
\si(Y)^2 = \E(Y^2) - \E(Y)^2 = \OO(\al^2) \qquad \si(X)^2 = \E(X^2) - \E(X)^2 = \OO(\al^2).
\]
By Lemma \ref{main-1}, $a = -\al(1-\al)^{n-2}$ and $b \geq (1-\al)^{n-2}(1 + \al(n-2)$. Since $a = \OO(\al)$, we get
\begin{multline}
|\E(R_2)| \leq  \sqrt{\E\frac{(aY-bX)^2}{(Yb^2)^2}} \sqrt{ \E((Y-b)^2)} \\
= \frac{1}{\ka b^2} \sqrt{a^2 \OO(1) + b^2 \OO(\al^2) -2 ab \OO(\al)} \sqrt{\OO(\al^2)} = \OO(\al^2). \label{last}
\end{multline}
\end{proof}
A few remarks are in order. Since $a = \OO(\al)$, it means that $a/b$ is a good approximation of the expectation of $X/Y$, if $\al$ and $\si(\al)$ are small:   
\[
\E \left( \frac{X}{Y} \right) = \frac{a}{b}(1 + \OO(\al)). 
\]
We do not know the probability distribution of $X$ and $Y$ given that $\det(A) \geq \ka > 0$. However, allegedly, the value of $b=\E(\det(A)$ becomes very large when $n$ is large, and $a = \E(c_{j1})$ becomes comparatively small. So the remainder $R_2$ then becomes very small, like $\OO(\al^2)/(\ka b)$, according to (\ref{last}), which indicates that the approximation 
\[
\E \left( \frac{X}{Y} \right) \approx \frac{a}{b}, 
\]
is good, or at least not so bad, for many more values of $\al$ than indicated in the theorem. 

One can estimate the order of the remainder $R_2$ more precisely, but we leave that for future works. Instead we refer to the numerical computations in Appendix \ref{app:num_algorithm}. However, heuristically, we want to think of $\E(c_{j1})/\E(\det(A))$ as a first order approximation of $\E(c_{j1}(\det(A))$. If this is the case we arrive at the following.  
\begin{Prop}
Given the assumptions in the above theorem \ref{main-2}, for $\al$ sufficiently small and $\si(\al) \leq \OO(\al)$, or, we have, 
      \begin{itemize}
        \item if $0 < \al < 1$, then $\E(\frac{c_{j1}}{\det(A)}) < 0$, and $\E(\de_1) > 0$
        \item if $-1 < \al < 0$, then $\E(\frac{c_{j1}}{\det(A)}) > 0$, and $\E(\de_1) < 0$,
      \item if $\al = 0$, then $\E(\frac{c_{j1}}{\det(A)}) = 0$, and $\E(\de_1) = 0$.
    \end{itemize}
\end{Prop}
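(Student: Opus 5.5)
We will obtain the Proposition directly from Theorem \ref{main-2} together with Lemma \ref{main-1}. Theorem \ref{main-2} gives
\[
\E\left(\frac{c_{j1}}{\det(A)}\right) = \frac{a}{b} + R, \qquad |R| \le M\al^2,
\]
where $a=\E(c_{j1})$, $b=\E(\det(A))$, and $M$ depends only on $n$, $\ka$ and the implied constant in $\si(\al)\le \OO(\al)$. By Lemma \ref{main-1} applied to the $n\times n$ matrix (the Remark following it), $a=-\al(1-\al)^{n-2}$ and $b=(1-\al)^{n-1}(1+\al(n-1))$. The plan is to show that the leading term $a/b$ has exact order $|\al|$ and the sign of $-\al$, so that it dominates the $\OO(\al^2)$ remainder for small $|\al|$.

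\textbf{Step 1 (leading term).} For $|\al|\le \tfrac12$ and $\al>\al_s=-1/(n-1)$, both factors of $b$ are positive. Moreover
\[
\frac{a}{b}=-\frac{\al}{(1-\al)(1+\al(n-1))}.
\]
Restricting further to $|\al|\le \frac{1}{2(n-1)}$ gives $\tfrac12\le 1-\al\le \tfrac32$ and $\tfrac12 \le 1+\al(n-1)\le \tfrac32$. Hence $|a/b|\ge \tfrac49|\al|$, and $\operatorname{sign}(a/b)=-\operatorname{sign}(\al)$.

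\textbf{Step 2 (domination).} Set $C_n=\min\{\frac{1}{2(n-1)}, \frac{4}{9M}\}$. Then for $0<|\al|<C_n$,
\[
|R|\le M\al^2<\tfrac49|\al|\le |a/b|,
\]
so $\E(c_{j1}/\det(A))$ has the sign of $a/b$. This gives case i) for $0<\al<C_n$ and case ii) for $-C_n<\al<0$. The sign of $\E(\de_1)$ then follows immediately from
\[
\E(\de_1)=-\frac{K_j\la_j}{r_j}\,\E\left(\frac{c_{j1}}{\det(A)}\right), \qquad \la_j>0.
\]

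\textbf{Step 3 ($\al=0$).} The hypothesis $\si(\al)\le\OO(\al)$ forces $\si=0$, so $\al_{ij}=0$ almost surely for $i\ne j$. Then $A=I$, $c_{j1}=0$ and $\det A=1$, and the expectation vanishes exactly. Alternatively, $a=0$ and $|R|\le M\cdot 0$ give the same conclusion.

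\textbf{Main obstacle.} The delicate point is the uniformity of the constant in the $\OO(\al^2)$ bound of Theorem \ref{main-2}. Its proof divides by $\ka b^2$ and uses the moment estimates of Lemma \ref{expect-order}, and these must hold with constants independent of $\al$ near $0$; this is where the restriction $|\al|\le \frac{1}{2(n-1)}$ (keeping $b\ge \tfrac18$) is used. A second concern is that conditioning on $\det(A)\ge\ka$ perturbs the moments in Lemma \ref{main-1}. The first thing to try is to note that for small $\al$ with $\si=\OO(\al)$ and $\ka<1$, the event $\det A\ge \ka$ has probability $1-\OO(\al^2)$ by Chebyshev's inequality and the variance bound of Lemma \ref{expect-order}. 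Conditioning therefore changes $a$ and $b$ only by $\OO(\al^2)$, which can be absorbed into $R$ at the cost of enlarging $M$.
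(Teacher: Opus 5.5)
Your argument is correct and follows the paper's route: the paper likewise combines the expansion $\E(c_{j1}/\det(A)) = a/b + \OO(\al^2)$ of Theorem \ref{main-2} with Lemma \ref{main-1}, so that $a/b = -\al/\bigl((1-\al)(1+\al(n-1))\bigr)$ has exact order $|\al|$ and sign $-\operatorname{sign}(\al)$ and dominates the remainder for small $|\al|$. The paper only sketches this step and calls it heuristic, whereas you supply explicit constants.

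Your Chebyshev patch handles a point the paper passes over: Lemma \ref{main-1} computes unconditioned moments, while Theorem \ref{main-2} conditions on $\det(A)\ge\ka$. The restriction $\ka<1$ that the patch needs is genuinely necessary. Take $n=2$, $\ka=1$, and let $\al_{ij}$ equal $2\al$ or $-8\al$ with probabilities $0.9$ and $0.1$, so $\si=3\al$. Conditioning then forces $\al_{12}\al_{21}<0$, and one gets $\E(c_{21}/\det(A)) = 3\al/(1+16\al^2)>0$ for small $\al>0$, contradicting case i).
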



\comm{
\subsection{The expectation of the coordinates of an equilibrium\label{app:expected_eql_change}}
Let us in this section assume that all the carrying capacities are set to $1$, i.e. $K_i = 1$ for all $i$. We want to compute the expected values of the coordinates of the solution to $A \oli{x} = \oli{1}$. 

Using Cramer's rule we get that the solution $\oli{x}^*$ is
\begin{multline}
\oli{x}^* = A^{-1} \oli{K}
= \frac{1}{\text{det}(A)} (K_1c_{11} + \ldots + K_nc_{n1}, K_1 c_{12} + \ldots + K_nc_{n2}, \\
\end{multline}
where $c_{ij} = (-1)^{i+j} D_{ij}$ are the co-factors of the matrix $A$, and $D_{ij}$ the minors. 

Since $\det(A) > 0$ for $A \in \GG$, we need to prove that $\E(c_{1j} + \ldots c_{nj}) > 0$. But $c_{1j} + \ldots c_{nj}$ is just the determinant of a matrix $C$, which is $A$ where we have replaced column $j$ with $1$:s. The expected value of this new determinant of $C$ is 
\begin{align}
\det(C) &=  \left| \begin{array}{cccccc}
1 & \al & \al \ldots \al & 1 & \al \ldots \al & \al \\
\al & 1 & \al \ldots \al & 1 & \al \ldots \al & \al \\
& \ldots & \ldots & 1 & \ldots & \\
\al & \al & \al \ldots \al & 1 & \al \ldots \al & 1 \\
\end{array} \right| \\
&= 
\left| \begin{array}{cccccc}
1 & \al & \al \ldots \al & 1 & \al \ldots \al & \al \\
\al-1 & 1-\al & 0 \ldots 0 & 0 & 0 \ldots 0 & 0 \\
& \ldots & \ldots & 0 & \ldots & \\
\al-1 & 0 & 0 \ldots 0 & 0 & 0 \ldots 0 & 1-\al \\
\end{array} \right|= (1-\al)^{n-1} > 0.
\end{align}
Hence, the expectation of each entry of the vector $\oli{x}^*$ is positive. If $\al > 0$ (mostly competitive case) it tends to zero as $n \raw \infty$, and if $\al < 0$ (mostly cooperative), it tends to $\infty$. 
}


\subsection{Combinatorics; determinants, paths and loops}  \label{combinatorics}
We now give a combinatorial interpretation of the interactions between the $n$ taxa. It is well-known that one can consider the matrix $A$ as a graph with weights between $n$ vertices. Let $A$ be a given interaction $n \times n$-matrix with entries $\al_{ij}$. As usual we assume that $\al_{ii} = 1$, for all $i$ and $|\al_{ij}| < 1$ for $i \neq j$. The number $\al_{ij}$ can be considered as a weight of an {\em arc} from vertex $i$ to vertex $j$. A subgraph $\Ga$ of $A$ is an $n$-vertex subgraph that has the property that each vertex in $\Ga$ has precisely one outgoing arc and precisely one incoming arc. Each subgraph $\Ga$ can then be interpreted as a collection of loops as follows. Let $v_0$ be a vertex. It has precisely one outgoing arc. If one follows that arc then we end up in another vertex $v_1$. This vertex also has one outgoing arc, so following this one ends up in another vertex $v_2$. Continuing in the same way, finally one has a vertex $v_{k-1}$ which lands at the first vertex $v_0$. This creates a closed path $p=(v_0, \ldots, v_{k-1}, v_k)$, starting at $v_0$ and ending at $v_k= v_0$. We associate a {\em weight} to this path as
\[
\om(p) = (-1)^g \al_{v_k,v_{k-1}} \al_{v_{k-1}, v_{k-2}} \ldots \al_{v_1,v_0},
\]
where $g$ is the sign of the permutation, taking $p=(v_0, \ldots, v_k)$ to its ordered path $\si(p) = (w_1, \ldots, w_k)$, where $w_1 < w_2 < \ldots < w_k$.  
The weight of a subgraph $\Ga$ is the sum of all weights of all loops:
\[
\om(\Ga) = \sum_{p \in \Ga}  \om(p).
\]

We can then write the determinant of $A$ as the sum of the weights of all sub-graphs of $A$:
\[
\det(A) = \sum_{\Ga} \om(\Ga).
\]

Consider a path $p=(j_0, \ldots, j_k)$. Put $\al_p = \al_{j_kj_{k-1}} \ldots \al_{j_1j_0}$. 
Let $E_p$ be the determinant of the matrix obtained by deleting the rows and columns indexed by $i$ if $i$ appears in the path. We define the empty matrix obtained in this way as $1$. Let $\PP_n$ be the set of all possible paths $(j_0, \ldots, j_k)$ starting from $j_0$ and ending at $j_k$ ($k \leq n$). Then we have the following lemma:
\begin{Lem} \label{det-rep}
Fix $i,j \leq n$. Let $\PP_{n,i,j} \subset \PP_n$ be the paths in $\PP_n$ that start at $i$ and end at $j$. Then 
\[
D_{ij} = \sum_{p \in \PP_{n,i,j}} \al_p E_p.
\]
\end{Lem}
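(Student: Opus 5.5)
The plan is to prove Lemma \ref{det-rep} by expanding $D_{ij}$ via the Leibniz formula and grouping permutation terms according to the path they contain from $i$ to $j$. First I would reformulate $D_{ij}$, the determinant of $A$ with row $i$ and column $j$ deleted, as a signed sum over bijections. Deleting row $i$ and column $j$ leaves rows indexed by $[n]\setminus\{i\}$ and columns by $[n]\setminus\{j\}$. A term of the expansion picks, for each row $r\neq i$, a column $\pi(r)\neq j$ bijectively. Extend $\pi$ to a permutation $\tilde\pi$ of $[n]$ by setting $\tilde\pi(i)=j$. Then the terms of $D_{ij}$ correspond exactly to permutations $\tilde\pi$ of $[n]$ with $\tilde\pi(i)=j$. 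The corresponding product is $\prod_{r\ne i}\al_{r,\tilde\pi(r)}$, and its sign equals $(-1)^{i+j}\operatorname{sgn}(\tilde\pi)$. This is the standard identity behind cofactor expansion, and I would cite it rather than reprove it.

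Second, I would decompose each such $\tilde\pi$ into cycles. The cycle containing $i$ passes through $j$, since $\tilde\pi(i)=j$. Reading it backwards, or forwards depending on the orientation convention for $\al_p=\al_{j_kj_{k-1}}\cdots\al_{j_1j_0}$, from $j$ back to $i$ gives a path $p$ whose arcs carry exactly the factors $\al_{r,\tilde\pi(r)}$ for $r$ in that cycle other than $i$. When $i=j$, $p$ is the trivial path and $\al_p=1$, which recovers the principal minor. The remaining cycles form an arbitrary permutation of the complement of the vertex set of $p$. Summing over these with signs gives precisely the principal minor $E_p$, with $E_p=1$ when the complement is empty. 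Because this is a bijection between $\{\tilde\pi:\tilde\pi(i)=j\}$ and pairs (path $p$ from $i$ to $j$, permutation of the complement), the sum factorizes as $\sum_p \al_p\cdot(\text{sign})\cdot E_p$.

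The main obstacle is the sign bookkeeping. The sign of $\tilde\pi$ is the product of the sign of the cycle through $i$, which is $(-1)^{k}$ for a cycle of length $k+1$, and the sign of the rest. That sign must combine with $(-1)^{i+j}$ and with the relabelling of the submatrix's rows and columns into a consistent global factor. The lemma as stated has no sign, so either these factors cancel to $+1$ or the paper's conventions absorb them: the weight sign $(-1)^g$, and possibly reading $D_{ij}$ as a signed minor or cofactor. I would first check small cases to fix the convention, namely $n=2$ with $D_{12}=\al_{21}$ and $n=3$. Then I would prove the general sign identity by induction on the path length $k$. The inductive step deletes the last vertex of the path and expands along the corresponding row, which is a Laplace expansion. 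If the signs do not cancel, I would state the lemma with the explicit factor $(-1)^{k}$ times the appropriate $(-1)^{i+j}$ from the cofactor convention and note that this is what the interpretation in Section \ref{sec:discussion} requires.
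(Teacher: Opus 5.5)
Your route is the same as the paper's. You expand $D_{ij}$ over permutations $\tilde\pi$ with $\tilde\pi(i)=j$, read the cycle through $i$ as a path $p\in\PP_{n,i,j}$ (the orientation $\al_p=\al_{j_kj_{k-1}}\cdots\al_{j_1j_0}$ matches exactly), and let the other cycles assemble into the principal minor $E_p$. That bijection is correct, provided $\PP_{n,i,j}$ means simple paths, with only the trivial path when $i=j$.

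The gap is that you leave open the sign question you call the main obstacle, although your own bookkeeping already decides it, and decides it against the statement as printed. There is no extra relabelling sign: $E_p$ is a principal minor, and $(-1)^{i+j}$ is exactly the correction for the non-principal minor $D_{ij}$. A cycle of length $k+1$ has sign $(-1)^k$, so your bijection gives
\[
c_{ij}=(-1)^{i+j}D_{ij}=\sum_{p\in\PP_{n,i,j}}(-1)^{k(p)}\,\al_p\,E_p ,
\]
where $k(p)$ is the number of arcs of $p$. This sign varies with $p$. It can neither cancel to $+1$ nor be absorbed into a global convention, so the unsigned identity is false for $n\ge 3$. For example, with $n=3$ one has $D_{12}=\al_{21}-\al_{23}\al_{31}$, while the paths $(1,2)$ and $(1,3,2)$ give $\sum_p\al_pE_p=\al_{21}+\al_{23}\al_{31}$. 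The paper's own $4\times4$ example shows the same pattern: in the displayed minor, the two-arc paths $\al_{12}\al_{24}$ and $\al_{13}\al_{34}$ carry a minus sign, and the one- and three-arc paths carry a plus sign.

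So drop the hedge, and drop the proposed induction via Laplace expansion, which is unnecessary. State and prove the signed identity above; equivalently, $c_{ij}=\sum_p\al'_pE_p$, where $\al'_p$ is the product of the entries $-\al_{rs}$ along $p$. Your argument is then a complete proof.

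The paper's proof uses the same decomposition and reaches the unsigned formula only by suppressing signs. It asserts that each term of $\det(M_p)$ is a plain product, and that $D_{n1}$ ``is exactly a product of all paths together with all subdeterminants.''

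Your remark that the interpretation needs the signed version is correct. For $n=2$, the signed formula gives $c_{21}=-\al_{12}$, hence $\de_1=+l\,\al_{12}/(1-\al_{12}\al_{21})$, consistent with $\E(c_{j1})=-\al(1-\al)^{n-2}$ in Lemma~\ref{main-1}. The expression $-l\,\al_{12}/(1-\al_{21}\al_{12})$ that the paper derives from the unsigned lemma has the opposite sign.
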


{\bf Example.} If $p = (1, 3, 4)$,  and $n=6$, so $A$ is a $6 \times 6$ matrix, then
\[
E_p = \left| \begin{array}{ccc}
 1 & \al_{25} & \al_{26} \\
 \al_{52} & 1 & \al_{56} \\
  \al_{62} & \al_{65} & 1
\end{array} \right| = 1 - \al_{25} \al_{52} - \al_{26} \al_{62} - \al_{65} \al_{56} + \al_{62} \al_{25} \al_{56} + \al_{63} \al_{35} \al_{56}.
\]

In the example above $p = (1,3,4)$ and all the terms in $E_p$ are of the form $\al_{p'}$ where $p'$ is a closed path. 
\begin{proof}
Without loss of generality, put $i=n$ and $j=1$. Let $p= (j_1,j_2, \ldots, j_k)$ be a path starting at $j_1=n$ ending at $j_k=1$. 
Let $M_p$ be the matrix obtained by deleting the rows and columns $j$ whenever $j \in p$. It is easy to see that the diagonal elements in $M_p$ are $(M_p)_{ii} = 1$. Moreover, we have that if $(M_p)_{ij} = \al_{i'j'}$, then $(M_p)_{ji} = \al_{j'i'}$. 

Instead of working with $\al_{ij}$, let us rename the entries of the matrix $M_p$ so that $(M_p)_{ij} = \be_{ij}$. Then it is easy to see that each term in the determinant of $M_p$ is a term of the form 
\begin{equation} \label{beta}
\be_{1 \si(1)} \ldots \be_{k \si(k)},
\end{equation}
for some $k > 0$, where $\si$ is a permutation of $\{1, \ldots, k\}$. 
We can assume that $j \neq \si(j)$, hence we consider derangements (if $j = \si(j)$,  then $\be_{j \si(j) } = 1$). For each such derangement, it is clear that we can order the factors $\be_{j \si(j)}$ in order so that, starting with $\be_{1 \si(1)}$, we adjoin it with $\be_{\si(1), \si(\si(1))}$, which in turn is adjoined with $\be_{\si^2(1), \si^3(1)}$ and so on. The orbit $\si^j(1)$ has to terminate, which creates a loop. Actually, it is easy to see that each such loop corresponds to a cycle in the permutation $\si$ (it is a well-known result in combinatorics that every permutation can be written as a product of cycles). In other words, there has to be an index $b$ such that $\si^b(1) = 1$ (of course we can have $b =k$ and then we have a full loop). So each loop will look like 
\[
\be_{1, \si(1)} \be_{\si(1),\si^2(1)} \ldots \be_{\si^{b-1} ,\si^b(1)},
\]
where $\si^b(1) = 1$. Moreover, the full product (\ref{beta}) is a product of such loops (if $b \neq k$). So we can write the product (\ref{beta}) as either a product of loops or one single derangement. To summarise, we write each derangement $\de$ as a product of loops $\ga_1 \ldots \ga_h$

Summing over all such derangements we get that the determinant $E_p$ of $M_p$ is equal to the sum of all loops disjoint from $p$:
\[
E_p = \sum_{k} \sum_{\de \in \De_k} \ga_1 \ldots \ga_{h_k} ,
\]
where $\De_k$ is the set of derangements of length $k$ and $h_k$ the number of loops in each such derangement. 

Since the determinant $D_{n1}$ is exactly a product of all paths together with all subdeterminants, it  follows that 
\[
D_{n1} = \sum_{p \in \PP_{n,n,1}} \al_p E_p. 
\]
\end{proof}

{\bf Representation of the loops and paths.}
Having seen that the effect of antibiotics on taxon $x_j$ is given by (see (\ref{inta-1}) )
\[
\de_1 = -l \frac{c_{j1}}{\det(A)}, 
\]
where $l=\De b_j = \la_j/r_j$ is the antibiotic pressure on $j$ scaled with the intrinsic growth rate $r_j$ of $x_j$,  we can interpret this using Lemma \ref{det-rep}. First we note that since $E_p$ is the sum of all loops {\em disjoint} from the path $p$, we can write 
\[
\det(A) = E_{0},
\]
where $0$ is the empty loop, (i.e. no loops). We conclude that $\det(A)/E_p$ represents all loops which {\em intersect} the path $p$, and thus
\begin{equation} \label{isn-loops}
\de_1 = -l \sum_{p \in \PP_{n,j,1}} \frac{\al_p E_p}{\det(A)} = -l \sum_{p \in \PP_{n,j,1}} \frac{\al_p}{F_p},
\end{equation}
where $F_p = \det(A)/E_p$ represents all loops that intersect $p$. As a special case, we compare it to the results in reference \cite{Sundius}, where $\al_{12}$ is the only non-trivial path from $x_2$ to $x_1$. Since there are no non-trivial loops in dimension $2$, we have that 
\[
\de_1 = -l \frac{\al_{12}}{1-\al_{21}\al_{12}}.
\]
($\al_{12}\al_{21}$ is the only loop that intersects the path $(1,2)$, corresponding to $\al_{12}$). 

{\bf Example}.
Here is another example in dimension $4$. Let $A$ be the interaction matrix as above with elements $(A)_{ij} = \al_{ij}$. As usual assume that $\al_{ii} = 1$, for $1 \leq i \leq 4$. Then the cofactors
\[
c_{31} = \left| \begin{array}{ccc}
\al_{12} &\al_{13} & \al_{14} \\
1 & \al_{23} & \al_{24} \\
\al_{32} & 1 & \al_{34} \end{array}
\right| = \al_{14}(1-\al_{23}\al_{32}) + \al_{12} \al_{23} \al_{34} + \al_{13}\al_{32}\al_{24} - \al_{12}\al_{24} - \al_{13}\al_{34}.
\]
and 
\[
c_{33} = \left| \begin{array}{ccc}
1 & \al_{12} &\al_{13} \\
\al_{21} & 1 & \al_{23} \\
\al_{31} & \al_{32} & 1 \end{array}
\right| = 1 - \al_{12}\al_{21} - \al_{13}\al_{31} - \al_{23}\al_{32} 
+ \al_{12} \al_{23} \al_{31} + \al_{13}\al_{32}\al_{21}.
\]
We see that $c_{33}$ consists of all loops not including $4$. The cofactor $c_{31}$ consists of the paths from $4$ to $1$ term by term multiplied by the loop disjoint from the path. However, only the path $(1,4)$ has a non-trivial disjoint loop, namely $\al_{23} \al_{32}$. The subdeterminant $(1-\al_{23}\al_{32})$ represents the impact of the interaction between taxa $2$ and $3$ on the interaction along the path $(1,4)$, given that the loop $(2,3)$ is not isolated from $(1,4)$. If this is the case, i.e. if $\al_{ij} = \al_{ji} = 0$ for $i \in \{1, 4\}$ and $j \in \{2, 3\}$ 
(this means that $\al_{12} = \al_{21} = \al_{13} = \al_{31} = \al_{24} = \al_{42} = \al_{43} = \al_{34}= 0$), then the qoutient $c_{31}/c_{33}$ boils down to 
\[
\frac{c_{33}}{c_{31}} = \al_{14},
\]
as it should be (i.e. the terms $(1-\al_{23}\al_{32})$ cancel each other). The contribution to the change in the $x_1$-direction is then 
\[
\de_1 = -l \frac{c_{31}}{\det(A)} = -l \frac{\al_{14}(1-\al_{23}\al_{32})}{(1-\al_{23}\al_{32})(1-\al_{14}\al_{41})} = -l \frac{\al_{14}}{1-\al_{14}\al_{41}}.
\]
Again this expression is completely analogous to the results in reference \cite{Sundius}. 
As soon as there is a link between the loop $(2,3)$ and the path $(1,4)$ things will change. Say for example that there is no interaction between $2$ and $4$, and no interaction between $3$ and $4$, and no interaction between $1$ and $2$, (i.e. $\al_{24}=\al_{42} = \al_{34} = \al_{43} = \al_{12} = \al_{21} = 0$), but $\al_{13}, \al_{31} \neq 0$. Then,
\begin{align}
\frac{c_{31}}{\det(A)}  &= \frac{\al_{14} (1 - \al_{23}\al_{32})}{
(1 - \al_{23}\al_{32})(1-\al_{14}\al_{41})  - \al_{13}\al_{31}} \\
&= \frac{\al_{14}} {1-\al_{14}\al_{41} - \frac{\al_{13}\al_{31}}{1-\al_{23}\al_{32}}}.
\end{align}
We see that the interaction between $1$ and $3$ indirectly influences the effect on taxon $x_1$ of antibiotics on taxon $x_4$. In particular, writing $\de_1$ as in (\ref{isn-loops}), we get
\[
F_{14} = 1 - \al_{14}\al_{41} - \frac{\al_{13}\al_{31}}{1-\al_{23}\al_{32}}.
\]

\end{document}